\documentclass[AEJ,finalmode]{AEA}


\usepackage{amsmath}
\usepackage{amssymb}
\usepackage{enumerate}
\usepackage{enumitem}
\usepackage{multirow}
\usepackage{graphicx}
\usepackage{subcaption}
\usepackage{ulem}


\newtheorem{theorem}{Theorem}[section]

\newtheorem{lemma}[theorem]{Lemma}
\newtheorem{assumption}{Assumption}

\DeclareMathOperator{\EX}{\mathbb{E}}
\DeclareMathOperator{\argmin}{\text{argmin}}
\newcommand\independent{\protect\mathpalette{\protect\independenT}{\perp}}
\def\independenT#1#2{\mathrel{\rlap{$#1#2$}\mkern2mu{#1#2}}} 


\usepackage{natbib}


\draftSpacing{1.5}

\begin{document}

\title{At What Level Should One Cluster Standard Errors in Paired and Small-Strata Experiments?}
\shortTitle{Clustering Standard Errors in Paired Experiments}
\author{Cl\'{e}ment de Chaisemartin and Jaime Ramirez-Cuellar\thanks{%
de Chaisemartin:  Economics Department, Sciences Po 28 rue des Saint-P\`{e}res 75005 Paris, France, clement.dechaisemartin@sciencespo.fr. Ramirez-Cuellar: Microsoft, Office of the Chief Economist, 99/4623, 14820 NE 36th St, Redmond, WA 98052, USA, jaimeramirez@microsoft.com. We are very grateful to Antoine Deeb, Jake Kohlhepp, David McKenzie, Heather Royer, Dick Startz, Doug Steigerwald, Gonzalo Vasquez-Bare, members of the econometrics and labor groups at UCSB, participants of the Advances in Field Experiments Conference 2019, California Econometrics Conference 2019, LAMES 2019, and LACAE 2019 for their helpful comments.}}
\date{\today}
\pubMonth{Month}
\pubYear{Year}
\pubVolume{Vol}
\pubIssue{Issue}
\JEL{C01, C12, C21, C9}
\Keywords{clustered standard errors, clustering, paired experiments, stratified experiments, randomized experiments, RCT}

\begin{abstract}
In matched-pairs experiments in which one cluster per pair of clusters is
assigned to treatment, to estimate treatment effects, researchers often regress their
outcome on a treatment indicator and pair fixed effects, clustering standard errors
at the unit-of-randomization level. We show that even if the treatment has no
effect, a 5\%-level $t$-test based on this regression will wrongly conclude that the
treatment has an effect up to 16.5\% of the time. To fix this problem, researchers
should instead cluster standard errors at the pair level. Using simulations, we
show that similar results apply to clustered experiments with small strata.
\end{abstract}

\maketitle
In this paper, we show that a statistical test commonly used by researchers analyzing a certain type of randomized controlled trials (RCTs) has a much larger error rate than previously thought. We then suggest a simple fix.

The type of RCTs this paper applies to are clustered RCTs with large clusters, meaning that there are more than 10 observations per randomization unit, and where the treatment is assigned within pairs of units, or within small strata of less than 10 units. For instance, our paper would apply to an RCT where the researcher pairs some villages, randomly assigns one village per pair to the treatment, and estimates a regression at the villager rather than at the village level, with more than 10 villagers per village. Our paper would also apply to an RCT where the researcher groups villages into small strata of six villages and randomly assigns two villages per pair to the treatment. While such paired or small-strata RCTs with large clusters do not account for the majority of RCTs conducted in economics, it seems that they are still fairly common. We surveyed the universe of published editions of the \textit{American Economic Journal: Applied Economics} (\textit{AEJ Applied}) from 2014 to 2018, and found that this type of RCT represents 20\% of the RCTs published by the journal during that period.

In paired or small-strata RCTs with large clusters, researchers usually estimate the treatment effect by regressing their outcome on the treatment and pair or stratum fixed effects, ``clustering'' their standard errors at the unit-of-randomization level, namely, at the village level in our example.\footnote{Throughout this paper, clustered standard errors refer to the estimators proposed by \citet{liang1986longitudinal}, which are routinely implemented in standard statistical packages. Hereafter, we write ``clustered RCT'' to refer to the RCT design, and ``clustered standard errors'' to refer to the standard errors researchers use.} Then, they typically use the 5\%-level $t$-test based on this regression to assess if the treatment has an effect. As any statistical test, this $t$-test may lead them to commit a type 1 error: even if the treatment does not have an effect, this $t$-test may lead them to wrongly conclude that the treatment has an effect. When using a 5\%-level test, researchers hope that the probability that this would happen, the so-called error rate of the test, is less than 5\%. We show that the error rate of this $t$-test may in fact be much larger than the researcher's 5\% target.

We start by considering paired RCTs with large clusters. There, we show that even if the treatment has no effect, this 5\%-level $t$-test will wrongly conclude that the treatment has an effect up to 16.5\% of the time, an error rate more than three times larger than the targeted one. We then show that to achieve the desired 5\% error rate, researchers should instead cluster their standard errors at the pair level.  Finally, we revisit 371 regressions from the paired RCTs in our survey, and find that clustering at the pair rather than at the randomization-unit level diminishes the number of effects that are significant at the 5\% level by one third.

The intuition underlying our results is rather simple. In regression analysis, clustered standard errors are reliable when the regression's dependent and independent variables are uncorrelated across clusters \citep{cameron2015practitioner}. Therefore, in a paired RCT, clustering at the unit level relies on the assumption that the regression's independent variable, the treatment, is uncorrelated across all randomization units. However, the treatments of the two units in the same pair are perfectly negatively correlated: if unit A is treated, then unit B must be untreated, and vice versa. This is the reason why unit-clustered standard errors are unreliable, and the error rate of the $t$-test based on unit-clustered standard errors differs from that targeted by the researcher. On the other hand, clustering at the pair level only relies on the assumption that units' treatments are uncorrelated across pairs, which is true by design in paired experiments. This is the reason why pair-clustered standard errors are reliable, and the error rate of the $t$-test based on pair-clustered standard errors is equal to the error rate targeted by the researcher.

Our recommendations apply only to paired and clustered RCTs with large clusters. If the RCT is nonclustered, the 5\%-level $t$-test based on unit-clustered standard errors has a 5\% error rate, as targeted, after the degrees-of-freedom (DOF) adjustment automatically implemented in most statistical software. On the other hand, after this DOF adjustment the error rate of the 5\%-level $t$-test based on pair-clustered standard errors is lower than 5\%. Therefore, to achieve their targeted error rate in nonclustered paired RCTs, researchers should either use DOF-adjusted unit-clustered standard errors, or non-DOF-adjusted pair-clustered standard errors. In clustered RCTs, the same applies to regressions estimated at the unit-of-randomization level rather than at the observation level. Finally, in clustered RCTs with strictly fewer than 10 observations per randomization unit, our recommendation is to use pair-clustered standard errors, without the DOF adjustment. Figure \ref{fig:flowchart} below summarizes our recommendations for applied researchers, depending on whether their RCT is clustered and on the number of observations per randomization unit.

Then, we turn to small-strata RCTs with large clusters. Using simulations, we show that our results for paired designs extend to this case. Intuitively, there as well the treatments of units in the same stratum are negatively correlated, so this correlation should be accounted for. Therefore, our simulations show that in those designs too, the error rate of the 5\%-level $t$-test based on unit-clustered standard errors is larger than 5\%. The difference between the $t$-test’s actual and targeted error rates diminishes when the number of units per strata increases. Again, this makes intuitive sense: the larger the strata, the lower the correlation of the treatments of units belonging to the same stratum. For instance, with five units per strata, the error rate of the 5\% level $t$-test based on unit-clustered standard errors is equal to 7.9\%. With 10 units per strata, this error rate is equal to 6.2\%. With more than 10 units per strata, this error rate is lower than 6\%, and becomes very close to the targeted 5\% error rate. This is why, though we acknowledge it is somewhat arbitrary, we use a threshold of 10 units per strata to define an RCT with ``small’’ strata. Our simulations also show that the error rate of the 5\%-level $t$-test based on strata-clustered standard errors is very close to 5\%. Accordingly, in small-strata RCTs with large clusters, we recommend that researchers cluster their standard errors at the strata level. If the RCT has too few strata to cluster at that level, researchers could
use randomization inference, or the standard-error estimator in Section
9.5.1 of \citet{imbens2015causal}, provided each stratum has at least
two treated and two control units.

\subsection*{Related literature}

Our paper is related to several papers that predate ours.  \citet{imai2009essential} show that, when units all have the same number of observations, pair-clustered standard errors are reliable in clustered and paired RCTs.\footnote{\citet{imai2008variance} show similar results in nonclustered and paired RCTs.} With respect to their paper, we show that this result still holds when units have varying numbers of observations, thus justifying pair-level clustering under more realistic assumptions. Moreover, while they focus on finite-sample results, we present large-sample results for $t$-tests based on pair-clustered standard errors.

\citet{bruhn2009pursuit} use simulations to study, in nonclustered paired RCTs, the error rate of a $t$-test without clustering, which is equivalent to unit-clustering when the RCT is nonclustered. They show that this error rate is equal to the targeted error rate. This result may appear to conflict with ours, but this apparent discrepancy comes from the DOF adjustment embedded in most statistical software. In nonclustered RCTs, the regression with pair fixed effects has one fixed effect for each pair of observations. Accordingly, it has approximately half as many regressors as observations, so the DOF adjustment amounts to multiplying nonclustered standard errors by approximately $\sqrt{2}$, which, we show, makes them almost equivalent to the non-DOF-adjusted pair-clustered standard errors. This is why the error rate of DOF-adjusted nonclustered $t$-tests is equal to the targeted error rate in nonclustered paired RCTs.

\citet{abadie2017should} examine the appropriate level of clustering in regression analysis. They define a cluster as a group of units whose treatments are positively correlated. Their results apply to the case where the assignment is fully clustered (all units in the same cluster have the same treatment), and to the case where the assignment is probabilistically clustered (units in the same cluster have positively correlated treatments). Their Corollary 1 states that standard errors need to account for clustering whenever units' treatments are positively clustered within clusters. Our results are consistent with theirs. The main difference is that we consider a case in which units' treatments are negatively correlated within clusters (the pairs in our paper), which is not something they consider. However, our papers share a common theme: that standard errors need to account for clustering when treatments are correlated within some groups of units.

\citet{ATHEY201773} and \citet{bai2021inference} study the error rate of $t$-tests based on unit-clustered standard errors in paired experiments, when pair fixed effects are not included in the regression. They both show that without pair fixed effects, the test's error rate is lower than the targeted error rate. We instead show that when pair fixed effects are included in the regression, the error rate of this test becomes larger than the targeted error rate. In our survey of paired experiments, we find that including pair fixed effects in the regression is a much more common practice than not including those fixed effects.

The rest of this paper is organized as follows. Section 2 presents our survey of paired and small-strata RCTs in economics. Section 3 introduces our main theoretical results. Section 4 presents our simulation study. Section 5 presents our empirical application. Section 6 briefly discusses various extensions of our baseline results, which are fully developed in our Online Appendix. Section 7 concludes. Throughout the paper, a unit refers to a randomization unit (e.g.,\ a village), while an observation refers to the level at which the regression is estimated (e.g.,\ a villager).

\section{Survey of Paired and Small-Strata Experiments in Economics}

We searched the 2014--2018 issues of the \textit{AEJ Applied} for
clustered and paired RCTs, and for clustered and stratified RCTs with 10 or fewer units per strata.
50 field-RCT papers were published over that period. Three RCTs were clustered and relied
on a paired randomization for all of their analysis. One RCT was clustered and relied
on a paired randomization for part of its analysis. Seven RCTs were clustered and used
a stratified design with, on average, 10 or fewer units per strata. 10 of those 11 RCTs have, on average, more than 10 observations per randomization unit, while one stratified RCT has 5.8 observations per randomization unit.
Overall, 11 (22\%) of the 50 field RCTs published by the \textit{AEJ
Applied} over that period are clustered paired or small-strata RCTs, and 10 (20\%) also have large clusters.

To increase our sample of paired RCTs, we also searched the AEA's registry website
(\verb+https://www.socialscienceregistry.org+). We looked at all completed
projects, whose randomization method includes the word ``pair" and
that either have a working or a published paper. We conducted that
search on January 9th, 2019, and found four more clustered and paired RCTs. All of them have, on average, more than 10 observations per randomization unit. Combining our two searches,
we found 15 clustered paired or small-strata RCTs. The list is in Table \ref{tab:litrev}
in the Online Appendix.

We now give descriptive statistics on our sample of RCTs. Across the eight paired RCTs, the median number of pairs is 27, the
median number of observations per unit is 112, and units have on average more
than 10 observations in all RCTs.
To estimate the treatment effect, six articles include pair fixed
effects in all their regressions, one article includes pair fixed
effects in some but not all their regressions, and one article does not include pair fixed effects in any regression. All articles cluster standard errors at the
unit level.

Across the seven small-strata RCTs, the median number of units per
strata is 7, the median number of strata is 48, the median number
of observations per unit is 26, and units have more
than 10 observations on average in all but one RCT. To estimate the treatment effect,
six articles include stratum fixed effects in all their regressions,
and one article does not include stratum fixed effects in any regression.
All articles cluster standard errors at the
unit level.

In the following sections, we focus on paired RCTs. In Section \ref{se:ext_small_strata}
of the Online Appendix, we use simulations to show that the main results
we derive for paired RCTs extend to small-strata RCTs.


\section{Theoretical Results}\label{sec:mainresults}

\subsection{Setup}\label{subsec:setup}

We consider a population of $2P$ units. Unlike \citet{abadie2008}
and \citet{bai2021inference}, we do not assume that the units are
an independent and identically distributed (i.i.d.) sample drawn from a superpopulation. Instead, that population
is fixed, and its characteristics are not random. Our survey suggests that this modeling framework, similar to that in \citet{neyman1923applications} and \citet{abadie2020sampling}, is applicable to the majority of paired- and
small-strata RCTs in our survey. Units are drawn from a larger
population in only one of those fifteen RCTs.\footnote{This is in line with \citet{muralidharan2017experimentation}, who
show that the units are drawn from a larger population in only 31\% of the RCTs published in top-five journals between
2001 and 2016.} In all the other RCTs, the
sample is a convenience sample, consisting of volunteers to receive
the treatment, or of units located in areas where conducting the research
was easier. When the units are an i.i.d.\ sample drawn from a super-population, our results still hold, conditional on the sample.

The $2P$ units are matched into $P$ pairs. Pairs are created by
grouping together units with the closest value of some baseline variables
predicting the outcome. In our fixed-population framework, pairing
is not random, as it depends on fixed units' characteristics. The
pairs are indexed by $p\in\{1,\dots,P\}$, and the two units in pair
$p$ are indexed by $g\in\{1,2\}$. Unit $g$ in pair $p$ has $n_{gp}$
observations, so that pair $p$ has $n_{p}=n_{1p}+n_{2p}$ observations,
and the population has $n=\sum_{p=1}^{P}n_{p}$ observations. When
$n_{gp}>1$ for at least some units, the RCT is clustered; when $n_{gp}=1$
for all units, the RCT is nonclustered.

Treatment is assigned as follows. For all $p\in\{1,\dots,P\}$ and
$g\in\{1,2\}$, let $W_{gp}$ be an indicator variable equal to 1
if unit $g$ in pair $p$ is treated, and to 0 otherwise. We assume
that the treatments satisfy the following conditions. \begin{assumption}[Paired
assignment] \leavevmode \label{asm:1}
\begin{enumerate}
\item \label{asm:1_p1} For all $p$, $W_{1p}+W_{2p}=1$.
\item \label{asm:1_p2} $\mathbb{P}(W_{gp}=1)=\frac{1}{2}$ for all $g$
and $p$.
\item \label{asm:1_p3} $(W_{1p},W_{2p})_{p=1}^{P}$ is jointly independent
across $p$.
\end{enumerate}
\end{assumption} Point \ref{asm:1_p1} requires that in each pair,
one of the two units is treated. Point \ref{asm:1_p2} requires that
the two units have the same probability of being treated.
Point \ref{asm:1_p3} requires that the treatments be independent
across pairs. Assumption \ref{asm:1} is typically satisfied by design
in paired experiments.

Let $y_{igp}(1)$ and $y_{igp}(0)$ represent the potential outcomes
of observation $i$ in unit $g$ and pair $p$ with and without the
treatment, respectively. We follow the randomization-inference literature
\citep[see][]{abadie2020sampling} and assume that potential outcomes
are fixed.\footnote{In a previous version of this paper, we allowed potential outcomes
to be stochastic. Having stochastic potential outcomes does not change
our main results; see \citet{de2019level}.} The observed outcome is $Y_{igp}=y_{igp}(1)W_{gp}+y_{igp}(0)(1-W_{gp})$.
Our target parameter is the average treatment effect (ATE)
\begin{gather*}
\tau=\frac{1}{n}\sum_{p=1}^{P}\sum_{g=1}^{2}\sum_{i=1}^{n_{gp}}[y_{igp}(1)-y_{igp}(0)].
\end{gather*}

We consider two estimators of $\tau$. The first estimator $\widehat{\tau}$
is the OLS estimator from the regression of the observed outcome $Y_{igp}$
on a constant and $W_{gp}$:
\begin{gather}
Y_{igp}=\widehat{\alpha}+\widehat{\tau}W_{gp}+\epsilon_{igp}\qquad i=1,2,\dots,n_{gp};\ g=1,2;\ p=1,\dots,P.\label{eq:regnfe}
\end{gather}
The second estimator is the pair-fixed-effects estimator, $\widehat{\tau}_{fe}$,
obtained from the regression of the observed outcome $Y_{igp}$ on
$W_{gp}$ and a set of pair fixed effects $(\delta_{ig1},\dots,\delta_{igP})$:
\begin{gather}
Y_{igp}=\widehat{\tau}_{fe}W_{gp}+\sum_{p=1}^{P}\widehat{\gamma}_{p}\delta_{igp}+u_{igp},\qquad i=1,\dots,n_{gp};\ g=1,2;\ p=1,\dots,P.\label{eq:regfe}
\end{gather}

\subsection{Properties of Unit- and Pair-Clustered Variance Estimators}\label{subsec:UCVE-PCVE}

We study the variance estimators of $\widehat{\tau}$ and $\widehat{\tau}_{fe}$,
when the regression is clustered at either the pair level or the unit level.
The clustered-variance estimators we study are those proposed in \citet{liang1986longitudinal}.
Lemma \ref{le:CRVE_nfe} in Online Appendix \ref{sec:clu_var_estimators}
gives simple expressions of $\widehat{\mathbb{V}}_{pair}(\widehat{\tau})$
and $\widehat{\mathbb{V}}_{pair}(\widehat{\tau}_{fe})$, the pair-clustered
variance estimators (PCVEs) of $\widehat{\tau}$ and $\widehat{\tau}_{fe}$,
and of $\widehat{\mathbb{V}}_{unit}(\widehat{\tau})$ and $\widehat{\mathbb{V}}_{unit}(\widehat{\tau}_{fe})$,
the unit-clustered variance estimators (UCVEs) of $\widehat{\tau}$
and $\widehat{\tau}_{fe}$.

We now present our main results, that are derived under the following
assumption. \begin{assumption} \label{asm:bal_exp} There is a strictly
positive integer $N$ such that for all $p$, $n_{1p}=n_{2p}=N$.
\end{assumption} Assumption \ref{asm:bal_exp} requires that all
units have the same number of observations. Let
\[
\widehat{\tau}_{p}=\sum_{g}\left[W_{gp}\frac{1}{n_{gp}}\sum_{i}Y_{igp}-(1-W_{gp})\frac{1}{n_{gp}}\sum_{i}Y_{igp}\right]
\]
denote the difference between the average outcome of treated and untreated
observations in pair $p$. Under Assumption \ref{asm:bal_exp}, one
can show that
\[
\widehat{\tau}=\widehat{\tau}_{fe}=\sum_{p=1}^{P}\frac{\widehat{\tau}_{p}}{P},
\]
that both estimators are unbiased for the ATE, and that
\begin{gather}
\mathbb{V}(\widehat{\tau})=\mathbb{V}(\widehat{\tau}_{fe})=\frac{1}{P^{2}}\sum_{p=1}^{P}\mathbb{V}(\widehat{\tau}_{p}).\label{eq:var_hat_tau}
\end{gather}

Let $\tau_{p}\equiv\frac{1}{n_{p}}\sum_{g=1}^{2}\sum_{i=1}^{n_{gp}}[y_{igp}(1)-y_{igp}(0)]$
be the ATE in pair $p$. For all $d\in\{0,1\}$, let $\overline{y}_{gp}(d)\equiv\frac{1}{n_{gp}}\sum_{i}y_{igp}(d)$,
$\overline{y}_{p}(d)\equiv\frac{1}{2}\sum_{g}\overline{y}_{gp}(d)$,
and $\overline{y}(d)\equiv\sum_{p}\overline{y}_{p}(d)/P$, respectively,
denote the average outcome with treatment $d$ in pair $p$'s unit
$g$, in pair $p$, and in the entire population. \begin{lemma} \label{le:4.1}
\leavevmode
\begin{enumerate}
\item \label{le:4.1.1} If Assumptions \ref{asm:1} and \ref{asm:bal_exp}
hold, then $\widehat{\mathbb{V}}_{pair}(\widehat{\tau})=\widehat{\mathbb{V}}_{pair}(\widehat{\tau}_{fe})$,
and
\[
\EX\left[\frac{P}{P-1}\widehat{\mathbb{V}}_{pair}(\widehat{\tau})\right]=\mathbb{V}(\widehat{\tau})+\frac{1}{P(P-1)}\sum_{p=1}^{P}(\tau_{p}-\tau)^{2}\geq\mathbb{V}(\widehat{\tau}).
\]
\item \label{le:4.1.2} If Assumption \ref{asm:bal_exp} holds, then $\widehat{\mathbb{V}}_{pair}(\widehat{\tau})=2\widehat{\mathbb{V}}_{unit}(\widehat{\tau}_{fe})$.
\item \label{le:4.1.3} If Assumptions \ref{asm:1} and \ref{asm:bal_exp}
hold, then
\begin{align*}
\EX\left[\frac{P}{P-1}\left(\widehat{\mathbb{V}}_{unit}(\widehat{\tau})-\widehat{\mathbb{V}}_{pair}(\widehat{\tau})\right)\right] & =\frac{2}{P}\left(\frac{1}{P-1}\sum_{p}\left(\overline{y}_{p}(0)-\overline{y}(0)\right)\left(\overline{y}_{p}(1)-\overline{y}(1)\right)\right.\\
& \left.-\frac{1}{P}\sum_{p}\sum_{g}\frac{1}{2}\left(\overline{y}_{gp}(0)-\overline{y}_{p}(0)\right)\left(\overline{y}_{gp}(1)-\overline{y}_{p}(1)\right)\right).
\end{align*}
\end{enumerate}
\end{lemma}

\begin{proof} 
See Online Appendix \ref{app:main_proofs}.
\end{proof}

Point \ref{le:4.1.1} of Lemma \ref{le:4.1} shows that the PCVEs
without and with pair fixed effects are equal, and that after a DOF
correction, their expectation is at least as large as the variance
of $\widehat{\tau}$. If the treatment effect is heterogeneous across
pairs, $\frac{1}{P(P-1)}\sum_{p=1}^{P}(\tau_{p}-\tau)^{2}>0$ so the
inequality is strict: the PCVEs are upward-biased estimators for the
variance of $\widehat{\tau}$. If the treatment effect does not vary
across pairs, the inequality becomes an equality: the PCVEs are unbiased
for the variance of $\widehat{\tau}$.\footnote{The displayed equation in Point \ref{le:4.1.1} is almost identical
to Proposition 1 in \citet{imai2009essential}, up to a DOF
adjustment. We restate that result from their paper for completeness.} Building upon Point \ref{le:4.1.1} of Lemma \ref{le:4.1}, in the
Online Appendix we show that when the number of pairs grows, $(\widehat{\tau}-\tau)/\widehat{\mathbb{V}}_{pair}(\widehat{\tau})$
and $(\widehat{\tau}_{fe}-\tau)/\widehat{\mathbb{V}}_{pair}(\widehat{\tau}_{fe})$,
the $t$-statistics of the difference-in-means and fixed-effects estimators
using the PCVEs, both converge to a normal distribution with a mean equal
to 0 and a variance lower than 1 in general, but equal to 1 when the
treatment effect is homogenous across pairs (see Point \ref{th:asym_p2}
of Theorem \ref{th:asym}). Comparing those $t$-statistics to critical
values of a standard normal leads to a test with an error rate at most equal
to the researcher's target. For instance, if the average treatment effect $\tau$ is equal to zero,
by comparing $\left|\widehat{\tau}/\sqrt{\widehat{V}_{pair}(\widehat{\tau})}\right|$
to 1.96, one would wrongly conclude that $\tau \ne 0$ at most 5\% of the time, as desired.

On the other hand, Point \ref{le:4.1.2} of Lemma \ref{le:4.1} shows
that the UCVE with pair fixed effects is equal to a half of the PCVEs.
Combined with Point \ref{le:4.1.1} of Lemma \ref{le:4.1}, this implies
that the UCVE with pair fixed effects may severely underestimate the
variance of $\widehat{\tau}$: if the treatment effect is constant
across pairs, its expectation is equal to half of the variance of
$\widehat{\tau}$. Building upon Point \ref{le:4.1.2} of Lemma \ref{le:4.1},
in the Online Appendix we show that when the number of pairs grows, $(\widehat{\tau}_{fe}-\tau)/\widehat{\mathbb{V}}_{unit}(\widehat{\tau}_{fe})$,
the $t$-statistic of the fixed-effects estimator using the UCVE,
converges to a normal distribution with a mean equal to 0 and a variance
twice as large as that of the $t$-statistic using the PCVE (see Point
\ref{th:asym_p3} of Theorem \ref{th:asym}). Therefore, comparing
that $t$-statistic to critical values of a standard normal may yield
a test with a substantially larger error rate than the researcher's target. For instance, if the average treatment effect $\tau$ is equal to zero
and the treatment effect is homogenous across pairs, by comparing $\left|\widehat{\tau}_{fe}/\sqrt{\widehat{V}_{unit}(\widehat{\tau}_{fe})}\right|$
to 1.96, one would wrongly conclude that $\tau \ne 0$ 16.5\% of
the time, an error rate more than three times larger than the researcher's target.

With heterogeneous treatment effects across pairs, the
error rates of the $t$-tests using the PCVEs may be lower than the researcher's target, while the error rate of the $t$-test using
the UCVE with pair fixed effects may be equal to that target. However, in practice
we do not know if the treatment effect is constant or heterogeneous,
and it is common to require that a test have an error rate no larger than some target uniformly across
all possible data-generating processes. The $t$-tests using the PCVEs
satisfy that property, unlike the $t$-test using the UCVE with pair
fixed effects.

Finally, Point \ref{le:4.1.3} of Lemma \ref{le:4.1} shows that without
pair fixed effects, the expectation of the difference between the
UCVE and PCVE is proportional to the difference between the between-pair
and within-pair covariance of the two potential outcomes. In most
applications, both terms should be positive, as the two potential
outcomes should be positively correlated. One may also expect the
difference between those two terms to be positive, as units in the
same pair should have more similar potential outcomes than units in
different pairs. For instance, in the extreme case where units in
the same pair have equal potential outcomes, the second term is equal
to 0. Consequently, the expectation of the difference between the
UCVE and the PCVE should often be positive. Then, it follows from Point
\ref{le:4.1.1} of Lemma \ref{le:4.1} that the UCVE without pair
fixed effects is a more upward-biased estimator of the variance of
$\widehat{\tau}$ than the PCVEs, and that it remains upward-biased
even if the treatment effect is constant across pairs. Finally, building upon
Point \ref{le:4.1.3} of Lemma \ref{le:4.1}, in the Online Appendix
we show that the error rate of the $t$-statistic of the difference-in-means estimator
using the UCVE is lower and further away from the researcher's target than the error rate of the $t$-test making use
of the PCVEs (see Point \ref{th:asym_p4}
of Theorem \ref{th:asym}).

Intuitively, the UCVEs are biased because clustering at the unit level
does not account for the perfect negative correlation of the treatments
of the two units in the same pair. Cluster-robust standard errors
rely on the assumption that observations' outcomes and treatments
are uncorrelated across clusters \citep[see][]{cameron2015practitioner}.
This assumption is violated when one clusters at the unit level, but it
holds when one clusters at the pair level.

The direction of the bias of the UCVE depends on whether pair fixed
effects are included in the regression. When pair fixed effects are
not included in the regression, the UCVE will in general overestimate
the variance of $\widehat{\tau}$. This result may be relatively intuitive.
With positive correlations between observations, as is
often the case with time-series data, the variance of an estimator
is usually larger than what it would be without those correlations.
Then, one would expect that negative correlations would reduce an
estimator's variance. This is indeed what we find in Point \ref{le:4.1.3}
of Lemma \ref{le:4.1}: the UCVE, which estimates $\widehat{\tau}$'s
variance as if the treatments of two units in the same pair were not
negatively correlated, is larger than needed.

On the other hand, when pair fixed effects are included in the regression,
the UCVE may underestimate the variance of $\widehat{\tau}$. This
result is less intuitive. It comes from the fact that with pair fixed
effects in the regression, the sample residuals $u_{igp}$
are by construction uncorrelated with the pair fixed effects, which
implies that for every $p$, the sum of the residuals in pair $p$
is zero:
\[
\sum_{i,g}u_{igp}=0.
\]
Splitting the summation between $g=1$ and $g=2$, using the fact
that under Assumption \ref{asm:bal_exp} units 1 and 2 have the same
number of observations, and letting $\overline{u}_{g,p}$ denote the
average residuals of observations in unit $g$ of pair $p$, the previous
display implies that $\overline{u}_{1,p}=-\overline{u}_{2,p}$, which
in turn implies that $\left(\overline{u}_{1,p}\right)^{2}=\left(\overline{u}_{2,p}\right)^{2}$:
by construction, the squares of the average residuals are equal in
the treated and control units of each pair. Now, one can show that
with pair fixed effects, the UCVE is proportional to
\[
\frac{1}{(2P)^{2}}\sum_{p=1}^{P}\sum_{g=1}^{2}\left(\overline{u}_{g,p}\right)^{2},
\]
the sum, across all units, of their average squared residuals, divided
by the number of units squared. Accordingly, $\widehat{\mathbb{V}}_{unit}(\widehat{\tau}_{fe})$
treats $\left(\overline{u}_{1,p}\right)^{2}$ and $\left(\overline{u}_{2,p}\right)^{2}$
as if they were independent to estimate the variance of $\widehat{\tau}_{fe}$,
while they are equal to each other. Instead, the PCVE is proportional
to
\[
\frac{1}{P^{2}}\sum_{p=1}^{P}\left(\overline{u}_{1,p}\right)^{2}.
\]
$\widehat{\mathbb{V}}_{pair}(\widehat{\tau}_{fe})$ uses only one
squared-residual per pair to estimate the variance of $\widehat{\tau}_{fe}$.

As Section \ref{sec:6} below shows, our recommendation of using the
PCVE rather than the UCVE in clustered-paired RCTs and regressions with pair
fixed effects leads to a significant reduction in the number of effects that are
significant at the 5\% level in the published papers we revisit. One may then
wonder whether our results contradict those in \citet{bai2019optimality},
who shows that pairing is the optimal RCT design to maximize statistical
precision.\footnote{\citet{bai2019optimality} studies this question in nonclustered
RCTs. The optimal design in clustered RCTs has not been derived yet,
though we conjecture that the result in \citet{bai2019optimality}
carries through to clustered RCTs where units all have the same number
of observations.} The short answer is that our findings do not contradict his important
result. \citet{bai2019optimality} shows that the RCT design that
minimizes $\mathbb{V}(\widehat{\tau})$, and therefore the mean-squared
error of $\widehat{\tau}$, is a specific paired design. We do not
derive any new result on $\mathbb{V}(\widehat{\tau})$, so our results
have no bearing on his. Instead, our main result is to show that $\widehat{\mathbb{V}}_{unit}(\widehat{\tau}_{fe})$,
a commonly used variance estimator in paired experiments, can be severely
downward-biased. Instead, we recommend using another estimator,
$\widehat{\mathbb{V}}_{pair}(\widehat{\tau}_{fe})$, that is not downward
biased, and leads to a $t$-test with an error rate no larger than the researcher's target when the treatment does not have an effect. Using $\widehat{\mathbb{V}}_{pair}(\widehat{\tau}_{fe})$
instead of $\widehat{\mathbb{V}}_{unit}(\widehat{\tau}_{fe})$, researchers will conclude less often that the treatments they consider have an effect,
but comparing the power of those two
$t$-tests is not a fair comparison: the former test has an error rate no larger than the researcher's target when the treatment does not have an effect, unlike the latter one. Overall, while paired RCTs may not be as powerful
as the use of a spuriously-low variance estimator had led researchers
to believe, they remain a very powerful RCT design, the one that
leads to the lowest mean-squared error of $\widehat{\tau}$.

There is only one case where our results could imply that other designs
might be preferable to paired RCTs, though further research is
needed to validate or invalidate this conjecture. In our simulations,
we find that with fewer than 20 pairs, $t$-tests based on the PCVE
become less reliable: with fewer than 40 units, using the PCVE in paired
RCTs may lead to invalid inference. Thus, it may be preferable to
run a more coarsely stratified RCT with at least four units per strata,
and use, for example, the variance estimator proposed in Section 6.1
of \citet{ATHEY201773} for stratified RCTs. However, to our knowledge,
the validity of this alternative inference procedure has not been
assessed yet with a small number of units in the RCT. Note that in the (admittedly
small) sample of eight paired RCTs in our survey, one has five pairs
and another one has 14 pairs. All the other RCTs are close
to the 20-pair ``threshold'' (one has 19 pairs), or above it. Accordingly,
while paired RCTs with far fewer than 20 pairs are not a rarity, they
do not seem to be common either.

\subsection{Accounting for Degrees-of-Freedom Adjustments}\label{subsec:DOF}

The clustered-variance estimators we study are those proposed in \citet{liang1986longitudinal}.
Typically, statistical software report DOF-adjusted versions of those
estimators. For instance, in Stata the default adjustment is to multiply
the Liang and Zeger estimator by $[(n-1)/(n-k)]\times[G/(G-1)]$,
where $n$ is the sample size, $k$ the number of regressors, and
$G$ the number of clusters \citep[see][]{statacorp2017stata}. This
DOF adjustment is implemented when one uses the regress or areg command,
not when one uses the xtregress command \citep[see][]{cameron2015practitioner}.\footnote{Three of the four papers we revisit in Section \ref{sec:6}
use the regress or areg command; one uses the xtreg command.} In R, if the researcher uses the sandwich package, the default DOF
adjustment when declaring a cluster variable is the same as in Stata,
namely, $[(n-1)/(n-k)]\times[G/(G-1)]$. $G/(G-1)$ is close to $1$,
so the important term in the DOF adjustment is $(n-1)/(n-k)$.

In regressions without pair fixed effects, there are only two regressors
(the constant and the treatment), so $(n-1)/(n-k)=(n-1)/(n-2)$. This
quantity is close to 1, so the DOF adjustment leaves the UCVE and the PCVE
almost unchanged. Accordingly, in regressions without pair fixed effects,
the guidance we derived in the previous section also applies to the
DOF-adjusted UCVE and PCVE: the former estimator should not be used,
while the latter estimator can be used.

On the other hand, in regressions with pair fixed effects, the DOF
adjustment may affect the UCVE and the PCVE more substantially. When the
paired RCT is not clustered, the regression has $2P$ observations
and $P+1$ regressors, so $(n-1)/(n-K)=(2P-1)/(P-1)\approx2:$ the
DOF-adjusted UCVE is twice as large as the non-DOF-adjusted UCVE.
This fact and Point \ref{le:4.1.2} of Lemma \ref{le:4.1.3} imply
that in nonclustered RCTs, the DOF-adjusted UCVE with pair fixed
effects is almost equal to the non-DOF-adjusted PCVE with pair fixed
effects and has the same desirable properties. On the other hand,
the DOF-adjusted PCVE with pair fixed effects is now about twice as
large as the non-DOF-adjusted PCVE with pair fixed effects, so this
estimator is upward-biased even under constant treatment effect. Overall,
in nonclustered paired RCTs and regressions with pair fixed effects,
the guidance we derived in the previous section no longer applies
to the DOF-adjusted UCVE and PCVE: the former estimator can be used,
while the latter estimator should not be used.

When the paired RCT is clustered and the regression has pair fixed
effects, the regression has $2P\overline{n}_{u}$ observations and
$P+1$ regressors, where $\overline{n}_{u}$ denotes the average number
of observations across all units. Accordingly, $(n-1)/(n-K)=(2P\overline{n}_{u}-1)/(2P\overline{n}_{u}-(P+1))\approx2\overline{n}_{u}/(2\overline{n}_{u}-1).$
This quantity is decreasing in $\overline{n}_{u}$: the larger the average
number of observations across units, the smaller the DOF adjustment.
Simulations shown in Panel D of Table \ref{tab:size} show that with
$\overline{n}_{u}=5$, the error rate of a $t$-test based on the DOF-adjusted UCVE is
still considerably larger than the researcher's target: unlike what happens in nonclustered
experiments, the DOF-adjustment is not sufficient to ensure the error rate of this $t$-test is equal to the researcher's target. The same panel also shows that with $\overline{n}_{u}=5$,
the error rate of a $t$-test based on the DOF-adjusted PCVE is slightly below the researcher's target,
even under constant treatment effects. When $\overline{n}_{u}=10$,
simulations shown in Panel C of Table \ref{tab:size} show that the error rate of a
$t$-test based on the DOF-adjusted PCVE is now very close to the researcher's target.
Overall, in clustered paired RCTs with more than 10 observations per
unit, the guidance we derived in the previous section also applies
to the DOF-adjusted UCVE and PCVE: the former estimator should not
be used, while the latter estimator can be used. In clustered paired
RCTs with strictly fewer than 10 observations per unit, we recommend
using the PCVE without the DOF adjustment, which is
in line with a recommendation in \citet{cameron2015practitioner}
in a different context. In Stata, the xtregress command computes this
estimator.

\subsection{Should Pair Fixed Effects Be Included in the Regression?}\label{subsec:FE}

Though this paper is primarily concerned with the estimation of the
variance of treatment effect estimators, our recommendations crucially
depend on whether pair fixed effects are included in the regression. In this section, we discuss the pros and cons of including
such pair fixed effects. (Our paper does not bring any new
result to this longstanding discussion; we rely on earlier results,
sometimes specializing them to the case of paired RCTs.)

In nonclustered experiments, or in clustered experiments where in
each pair the two randomization units have the same number of observations
($n_{1p}=n_{2p}$ for all $p$), if no randomization unit attrits
from the sample, adding pair fixed effects to the regression leaves
the treatment coefficient unchanged: $\widehat{\tau}=\widehat{\tau}_{fe}$.
Because $\widehat{\tau}$ and $\widehat{\tau}_{fe}$ are equal, their
variances are also equal: adding pair fixed effects to the regression
does not lead to any precision gain in nonclustered paired
experiments or in clustered experiments where in each pair the two
randomization units have the same number of observations. When there
is attrition, $\widehat{\tau}$ and $\widehat{\tau}_{fe}$ will differ:
$\widehat{\tau}_{fe}$ will only leverage observations from pairs
where both randomization units are observed, while $\widehat{\tau}$
will also leverage observations from pairs where only one of the randomization
units is observed. \citet{king2007politically} argue in favor of
dropping pairs with one attriting unit, while \citet{bai2019optimality}
shows they can be kept. At any rate, even if one would prefer to drop
those pairs, one can simply do so before running the regression, rather
than running the regression with pair fixed effects in the full sample. Overall,
in nonclustered experiments, or in clustered experiments
where in each pair the two randomization units have the same number
of observations, there is no strong argument for or against adding
pair fixed effects to the regression.

In clustered experiments where there are pairs where the two randomization
units have different numbers of observations ($n_{1p}\ne n_{2p}$
for some $p$), adding pair fixed effects to the regression may change
the treatment coefficient: $\widehat{\tau}\ne\widehat{\tau}_{fe}$.
In such cases, one can show that $\widehat{\tau}$, the standard difference
in means estimator, converges toward our target parameter $\tau$,
the average treatment effect, when the number of pairs goes to infinity.
$\widehat{\tau}_{fe}$ on the other hand does not converge toward
$\tau$: one can show that it converges toward a parameter that has
sometimes been called a variance weighted average \citep[see][]{angrist2008mostly}
of the average treatment effect in each pair.\footnote{Specifically, let $\tau_{gp}$ denote the average treatment effect
in unit $g$ of pair $p$. One can show that $\widehat{\tau}_{fe}$
is consistent for a weighted average, across pairs, of $1/2(\tau_{1p}+\tau_{2p})$,
where pairs in which the numbers of observations of the two units are
close receive more weight than pairs where the numbers of observations
are different.} This parameter may differ from $\tau$ if the treatment effect varies
across pairs. Accordingly, unlike $\widehat{\tau}$, $\widehat{\tau}_{fe}$
may be biased for $\tau$, even asymptotically. On the other hand,
the variance of $\widehat{\tau}_{fe}$ is often lower than that of
$\widehat{\tau}$ \citep[see][]{imai2009essential}. Overall, if one
is primarily interested in consistently estimating $\tau$, pair fixed
effects should not be included in the regression. If one wants to
use the most precise estimator, pair fixed effects should be included
in the regression.

Figure \ref{fig:flowchart} summarizes our recommendations for practitioners,
regarding whether pair fixed effects should be included in the
regression, and regarding which variance estimator one should use.

\begin{figure*}[p]
\centering
\begin{subfigure}[b]{\textwidth}
\resizebox{\textwidth}{!}{\includegraphics[width=1\textwidth]{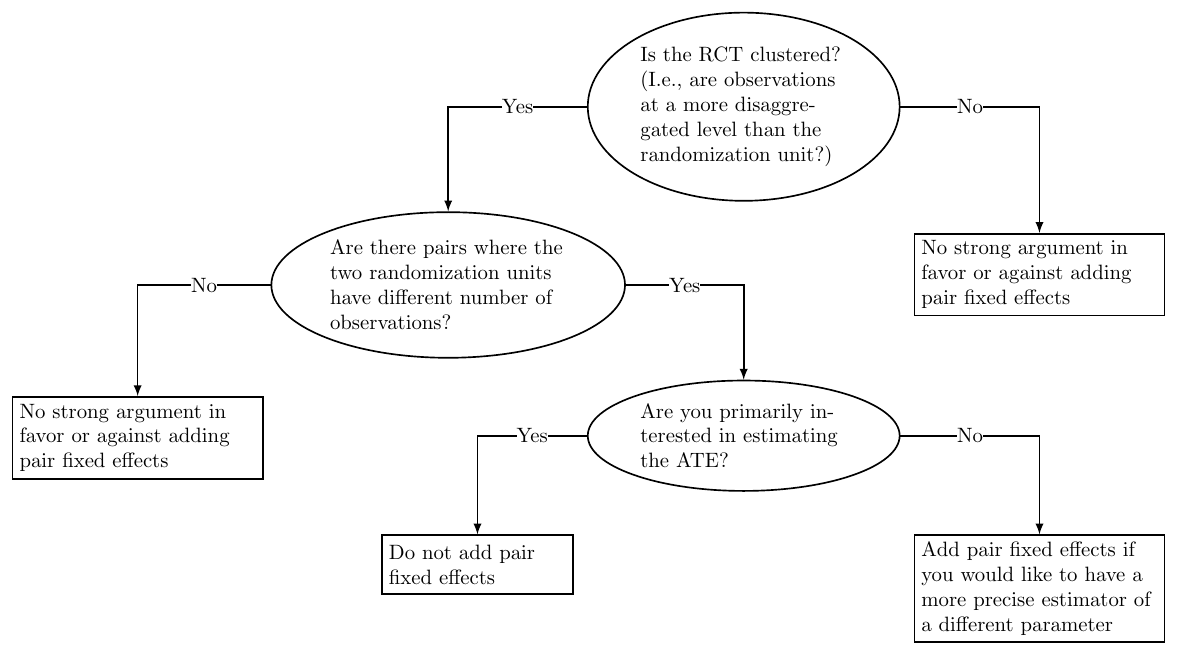}}
\centering \caption[Network2]{{Decision 1: Should the regression include pair fixed effects?
(See Section 3.4 for more details.)}}
\label{fig:flowchart_fes}
\end{subfigure}
\vskip\baselineskip
\begin{subfigure}[b]{\textwidth}
\centering
\includegraphics[width=1\textwidth]{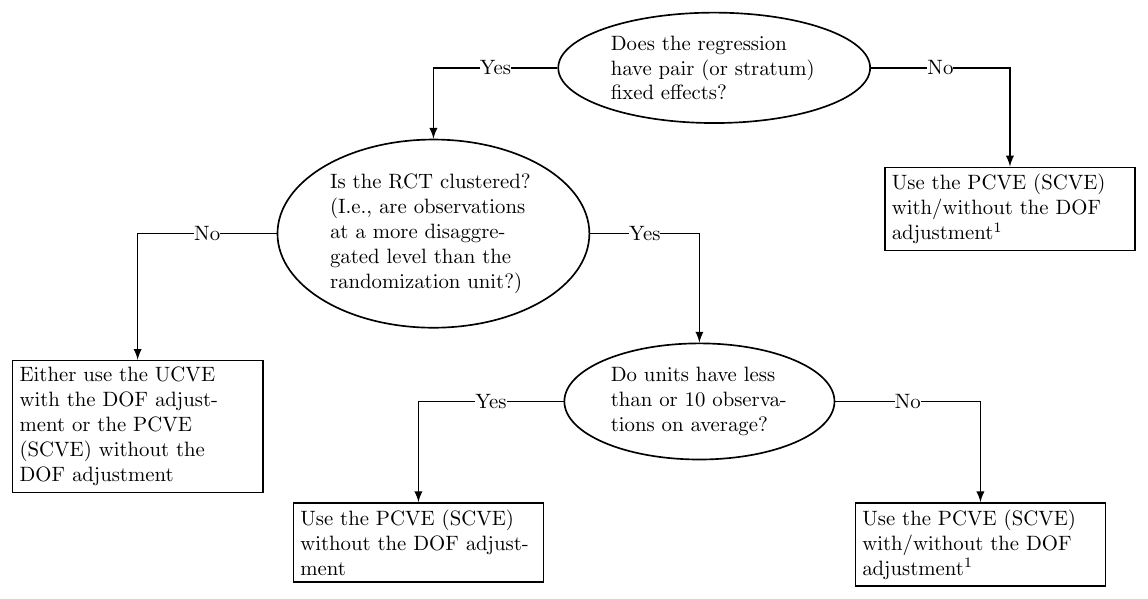}
\caption[]{{Decision 2: Which variance estimator should one use? (See
Section 3.3 for more details.)}}
\label{fig:flowchart_variance_estimators}
\end{subfigure}
\caption{Recommendations for Practitioners}
\label{fig:flowchart}
\begin{figurenotes}
UCVE = unit-clustered
variance estimators. PCVE = pair-clustered
variance estimators. DOF = degrees of freedom. \footnotemark[1]{In
these cases, the PCVE with and without the DOF adjustment are very similar.}
\end{figurenotes}
\end{figure*}

\section{Simulations Using Real Data}\label{se:simulations}

We perform Monte-Carlo simulations using a real data set. We use the
data from the microfinance RCT in \citet{crepon2015replication}.
The authors matched 162 Moroccan villages into 81 pairs, and in each
pair, they randomly assigned one village to a microfinance treatment.
They sampled households from each village and measured their outcomes
such as their credit access and income. The number of observations
varies substantially across units: the average number of villagers
per village is 34.1, with a standard deviation of 9.2, a minimum of
13 and a maximum of 58.

In the paper, the authors report the effect of the microfinance intervention
on 82 outcome variables.\footnote{Across the 82 outcomes, the median intracluster correlation coefficient
is $0.063$ at the village level, and $0.054$ at the pair level.} For each outcome, we construct potential outcomes assuming no treatment effect,
i.e., $y_{igpk}(0)=y_{igpk}(1)=Y_{igpk}$, where $Y_{igpk}$ is the
value of outcome $k$ for villager $i$ in village $g$ and pair
$p$. We then simulate 1000 treatment assignments $W_{k}^{j}=((W_{11,k}^{j},W_{21,k}^{j}),\dots,(W_{1P,k}^{j},W_{2P,k}^{j}))$,
assigning one of the two villages to treatment in each pair. Then,
we regress $Y_{igpk}$ on the simulated treatment. We estimate regressions
with and without pair fixed effects, clustering at the pair level
and at the village level. Thus, we obtain four $t$-statistics, and
four 5\% level $t$-tests. Importantly, those $t$-tests are based
on Stata's regress command, so they make use of DOF-adjusted variance
estimators. The estimated error rate of each $t$-test is the percentage
of times, across the 82,000 regressions (82 outcomes
$\times$ 1000 simulations), that the $t$-statistic is greater in absolute value than 1.96, meaning that the test leads the researcher to wrongly conclude that the treatment has an effect. Because the data is generated with a constant treatment effect of zero, these error rates should be equal to
5\% if the tests are valid.

Column (1) of Panel A of Table \ref{tab:size} shows the results using
the authors' actual data set, with 81 pairs and villages' actual number
of villagers. The error rates of the $t$-tests using pair-clustered variance
estimators (PCVEs) are close to 5\%, irrespective of whether pair fixed
effects are included in the regression. On the other hand, when the
unit-clustered variance estimator (UCVE) is used with pair fixed effects,
the error rate of the $t$-test is equal to 17.4\%, very close to the 16.5\%
error rate predicted by Point \ref{th:asym_p3} of Theorem \ref{th:asym}.
Finally, the error rate of the $t$-test with the UCVE and no pair fixed
effects is equal to 1.4\%, well below 5\%. Columns (2), (3), and (4) show that we obtain similar results
if we use a random sample of 40, 30, and 20 pairs. With fewer than
20 pairs, the PCVE becomes downward-biased. One may then have to use
randomization inference tests.

Panel B (resp. C) of Table \ref{tab:size} shows the error rates
of the four $t$-tests, in a data set where villages all have 20 (resp.
10) villagers. In each village, the villagers are a random sample
from the village's population, that does not vary across simulations.\footnote{Some villages have fewer than 20 villagers. For a village with, for
example, 13 villagers, we draw 7 villagers from the village's population
and add them to the original villagers.} Results are similar to Panel A.

Panel D shows the error rates of the four $t$-tests, in a data
set where villages all have 5 villagers. Again, the error rate of the $t$-test
with the PCVE and no pair fixed effects is close to 5\%. On the other
hand, the error rate of the $t$-test with the PCVE and pair fixed effects is now below 5\%.
As discussed in Section \ref{subsec:DOF}, this is due to the fact
that the DOF-adjustment is not negligible anymore with 5 villagers
per village. The error rate of the $t$-test with the UCVE and pair fixed effects is still much higher than 5\%, but less so than in Panel A.
Finally, the error rate of the $t$-test
with the UCVE and no pair fixed effects is still well below 5\%, though less so than in Panel A. These simulations justify the guidance
above: in clustered RCTs with strictly fewer than 10 observations
per unit, one should either use the PCVE without pair fixed effects,
with or without the DOF adjustment, or the PCVE with pair fixed effects
without the DOF adjustment.

Finally, Panel E of Table \ref{tab:size} shows the error rates
of the four $t$-tests, in a data set where  a quarter of the villages have five
villagers, a quarter have 10 villagers, a quarter have 20 villagers, and a quarter have
their actual number of villagers. In Columns (1) and (2), results
are fairly similar to those in Panel A. In Columns (3) and (4), the
error rates of the $t$-tests using the PCVEs are larger than 5\% (though much
less so than the $t$-test using the UCVE with pair fixed effects). This
is related to the results in \citet{carter2017asymptotic}, who find
that when clusters have very heterogeneous sizes, one needs a larger
number of clusters to ensure that asymptotic distributions yield accurate
approximations of the finite-sample distribution of cluster-robust
$t$-statistics. Note that this phenomenon is absent in Panel A, while
village sizes are already fairly heterogeneous in those simulations.
In applications where units have very heterogeneous numbers of observations,
researchers may need to perform their own simulations to assess whether
$t$-tests using the PCVEs can be used.

\begin{table}[htbp]
\caption{Error Rates of $T$-tests, in Simulations Based on \cite{crepon2015estimating}}
\label{tab:size}
    \begin{tabular}{lccccc}
        \hline \hline
         \multirow{3}{2cm}{Clustering level} & \multirow{3}{2cm}{Pair Fixed Effects} & \multicolumn{4}{c}{5\% level $t$-test error rate}  \\
         & & \multirow{2}{1.75cm}{With 81 pairs (1)} & \multirow{2}{1.75cm}{With 40 pairs (2)} & \multirow{2}{1.75cm}{With 30 pairs (3)} & \multirow{2}{1.75cm}{With 20 pairs (4)} \\
         \\
        \hline
        \multicolumn{6}{l}{\textit{Panel A: Actual village sizes}} \\
        \hspace{0.2cm} pair&Yes&0.0515&0.0519&0.0527&0.0559\\
		\hspace{0.2cm} pair&No&0.0527&0.0540&0.0541&0.0572\\
		\hspace{0.2cm} unit&Yes&0.1704&0.1742&0.1802&0.1840\\
		\hspace{0.2cm} unit&No&0.0137&0.0171&0.0178&0.0192\\
        \multicolumn{6}{l}{\textit{Panel B: All villages have 20 villagers}} \\
        \hspace{0.2cm} pair&Yes&0.0464&0.0508&0.0510&0.0537\\
		\hspace{0.2cm} pair&No&0.0491&0.0540&0.0541&0.0562\\
		\hspace{0.2cm} unit&Yes&0.1663&0.1692&0.1741&0.1737\\
		\hspace{0.2cm} unit&No&0.0161&0.0210&0.0190&0.0259\\
        \multicolumn{6}{l}{\textit{Panel C: All villages have 10 villagers}} \\
        \hspace{0.2cm} pair&Yes&0.0432&0.0439&0.0464&0.0454\\
		\hspace{0.2cm} pair&No&0.0489&0.0503&0.0538&0.0498\\
		\hspace{0.2cm} unit&Yes&0.1575&0.1581&0.1622&0.1686\\
		\hspace{0.2cm} unit&No&0.0214&0.0238&0.0249&0.0265\\
        \multicolumn{6}{l}{\textit{Panel D: All villages have 5 villagers}} \\
        \hspace{0.2cm} pair&Yes&0.0360&0.0384&0.0371&0.0363\\
		\hspace{0.2cm} pair&No&0.0480&0.0514&0.0502&0.0477\\
		\hspace{0.2cm} unit&Yes&0.1393&0.1372&0.1367&0.1421\\
		\hspace{0.2cm} unit&No&0.0287&0.0281&0.0313&0.0284\\
    \multicolumn{6}{l}{\textit{Panel E: Heterogeneous village sizes}} \\
    \hspace{0.2cm} pair&Yes&0.0517&0.0543&0.0597&0.0673\\
	\hspace{0.2cm} pair&No&0.0567&0.0571&0.0623&0.0692\\
	\hspace{0.2cm} unit&Yes&0.1701&0.1716&0.1775&0.1797\\
	\hspace{0.2cm} unit&No&0.0180&0.0231&0.0228&0.0271\\
        \hline
        \end{tabular}

\begin{tablenotes}
The table reports the error rates of four 5\% level $t$-tests in \cite{crepon2015estimating}. For each of the 82 outcomes in the paper,  we randomly drew 1000 simulated treatment assignments, following the paired assignment used by the authors, and regressed the outcome on the simulated treatment. The four $t$-tests are computed, respectively, without and with pair fixed effects in the regression, and clustering standard errors at the village or at the pair level. All $t$-tests are based on Stata's regress command, so they make use of DOF-adjusted variance estimators. The error rate of each test is the percent of times, across the 82,000 regressions (82 outcomes $\times$ 1000 replications), that the test leads the researcher to wrongly conclude that the treatment has an effect. Column (1) (resp.\ (2), (3), (4)) shows the results using the original sample of 81 pairs (resp.\ a fixed sample of 40, 30, 20 randomly selected pairs). In Panel A, villages all have their actual number of villagers. In Panel B (resp. C, D), each village has 20 (resp. 10, 5) villagers, that are a fixed random sample from the village's population. In Panel E, 1/4 of villages have 5 villagers, 1/4 have 10 villagers, 1/4 have 20 villagers, and 1/4 have their actual number of villagers.
\end{tablenotes}
\end{table}


\section{Application}\label{sec:6}

In this section, we revisit the paired RCTs in our survey. The data
used in four of those papers is publicly available \citep{beuermann2015replication,bruhn2016replication,crepon2015replication,glewwe2016replication}.
Those four papers used a clustered RCT, and all have more than 10
observations per randomization unit (across the four papers, the lowest
average number of observations per randomization unit is 21.5). The
authors estimated the effect of the treatment in 294 regressions,
clustering at the unit level. In Panel A of Table \ref{tab:replic},
we re-estimate those regressions, clustering at the pair level, and
including the same controls as the authors. In the 240 regressions
with pair fixed effects, the average of the unit-clustered variance
estimator (UCVE) divided by the pair-clustered variance estimator (PCVE) is equal
to 0.548. The UCVE divided by the PCVE is not always exactly equal to 1/2, because Assumption
\ref{asm:bal_exp} is not always satisfied, but these fractions all are quite
close to 1/2, as predicted by Lemma \ref{le:ext_fe} in the Online Appendix. The authors
originally found that the treatment has a 5\%-level significant effect
in 110 regressions. Using the PCVE, we find significant effects in
just 74 regressions. In the 54 regressions without pair fixed effects,
the UCVE is on average 1.18 times larger than the PCVE. The authors
originally found 31 significant effects, we find 36 significant effects
using the PCVE.

The data
used in the remaining four papers is not publicly available. Three of those papers estimated 131 regressions with
pair fixed effects, clustering standard errors at the unit level.\footnote{Across these papers, the lowest number of observations
per randomization unit is 99.0.} For those regressions, we multiply the UCVE by the average value of
of the PCVE divided by the UCVE found in Panel A of Table \ref{tab:replic} to
predict the value of the PCVE. Panel B of Table \ref{tab:replic}
shows that while the authors originally found a 5\%-level significant
effect in 51 regressions, we find significant effects in just 34 regressions.
The fourth paper estimated regressions only without pair fixed effects.
Because without fixed effects, the value of the PCVE divided by the UCVE can vary significantly across regressions,
we do not try to predict the PCVEs of that paper.

\begin{table}[htbp]
\caption{Using Unit- or Pair-Level Clustered Variance Estimators in Paired RCTs}
    \label{tab:replic}

    \begin{tabular}{lcccc}
        \hline
        & \multirow{5}{8em}{Unit-level divided by pair-level clustered variance estimators} &  \multirow{5}{6em}{Number of 5\%-level significant effects with UCVE} & \multirow{5}{6em}{Number of 5\%-level significant effects with PCVE} & \multirow{5}{5em}{Number of Regressions} \\
        \\
        \\
        \\
        \\ \hline
        \multicolumn{5}{l}{\textit{Panel A: Articles with publicly available data}} \\
        \hspace{0.2cm} \multirow{2}{2.cm}{with pair fixed effects} & \multirow{2}{*}{0.548} & \multirow{2}{*}{110} & \multirow{2}{*}{74} & \multirow{2}{*}{240} \\
        \\ 
        \\
        \hspace{0.2cm} \multirow{3}{2cm}{without pair fixed effects} & \multirow{3}{*}{1.184} & \multirow{3}{*}{31} & \multirow{3}{*}{36} & \multirow{3}{*}{54} \\
        \\
        \\ 
        \\
        \multicolumn{5}{l}{\textit{Panel B: Articles without publicly available data}} \\
        \hspace{0.2cm} \multirow{2}{2cm}{with pair fixed effects} &  & \multirow{2}{*}{51} & \multirow{2}{*}{34} & \multirow{2}{*}{131} \\
        \\
        \hline
    \end{tabular}

\begin{tablenotes}
The table shows the effect of using pair-clustered variance estimators (PCVE) rather than unit-level clustered variance estimators (UCVE) in seven of the paired RCTs we found in our survey. In Panel A, we consider four papers whose data is available online, and re-estimate their regressions clustering standard errors at the pair level. Column 1 shows the ratio of the unit- and pair-level clustered variance estimators, separately for regressions without and with pair fixed effects. Column 2 (resp.\ 3) shows the number of 5\%-level significant effects using unit- (resp.\ pair-) clustered standard errors. In Panel B, we consider three other papers whose data is not available online, and use the average ratio of the unit- and pair- clustered variance estimators found in Panel A to predict the value of the pair-clustered estimator in the regressions with pair fixed effects estimated by those papers. Column 2 (resp.\ 3) shows the number of 5\%-level significant effects using unit- (resp.\ predicted pair-) clustered standard errors.
\end{tablenotes}
\end{table}

\section{Extensions}

In our Online Appendix, we consider various extensions. In Appendix \ref{se:ext_small_strata}, we present
simulations showing that our results for paired RCTs extend to stratified
RCTs with few units per strata.

Assumption \ref{asm:bal_exp}, which requires that all units have
the same number of observations, allows us to derive the stark results
in Lemma \ref{le:4.1}. Under Assumption \ref{asm:bal_exp}, $\widehat{\tau}$
and $\widehat{\tau}_{fe}$ are equal, but the UCVE drastically changes
when one adds pair fixed effects to the regression. This is obviously
undesirable: the two estimators are equal, their variances are equal,
so their variance estimators should not be drastically different.
In practice, however, Assumption \ref{asm:bal_exp} often fails. In
that case, we show in Section \ref{sec:ext_assm2} of the Online Appendix
that our main conclusions still hold. Without that assumption, the
PCVEs remain upward-biased in general and unbiased if the treatment
effect is homogeneous across pairs. On the other hand, the UCVE with
pair fixed effects may still be downward-biased. Specifically, Point
\ref{le:4.1.2} of Lemma \ref{le:4.1} still holds if the number of
observations per unit varies across pairs, as long as the two units
in a pair have the same number of observations. If the number of observations
per unit varies within pairs, Point \ref{le:4.1.2} of Lemma \ref{le:4.1}
still approximately holds, unless units in the same pair have very
heterogeneous numbers of observations. Indeed, Lemma \ref{le:ext_fe}
shows that $\widehat{\mathbb{V}}_{unit}(\widehat{\tau}_{fe})/\widehat{\mathbb{V}}_{pair}(\widehat{\tau}_{fe})$
is included between $1/2$ and $5/9$ as long as $n_{1p}/n_{2p}$
is included between 0.5 and 2 for all $p$, meaning that in each pair
the first unit has between half and twice as many observations as
the second one.

In Appendix \ref{sec:pop}, we study two alternatives to the PCVE.
With heterogeneous treatment effects across pairs, the PCVE overestimates
the variance of the treatment effect estimator. To increase power,
one may want to use an unbiased estimator of that variance. We study
two alternatives, the pair-of-pairs estimator proposed by \citet{abadie2008},
and a variance estimator proposed by \citet{bai2021inference}.\footnote{Other alternatives have been proposed. For instance, \citet{fogarty2018mitigating}
proposes to use covariates that predict the treatment effect heterogeneity
across pairs to form a less-upward-biased estimator than the pair-clustered
one. We do not consider this estimator, merely because it lends itself
less easily to the automatic replication exercise we conduct: in each
application, one has to determine the relevant covariates to include,
based on context-specific knowledge.} Both are unbiased, or at least consistent, when units are an i.i.d.\ sample
drawn from a superpopulation. In the set-up we consider, where units
are a convenience sample, we show that those two estimators
are upward-biased, like the PCVE. They are less upward-biased than
the PCVE when the treatment effect is less heterogeneous within than
between pairs of pairs, and more upward-biased otherwise. We compute
the three estimators in the regressions in our survey, and find that
they are on average equivalent, so it does not seem one can expect
large power gains from using those two alternative estimators. Moreover,
simulations based on the data from \citet{crepon2015estimating} show
that $t$-tests using those two estimators have a drawback relative
to the $t$-test using the PCVE. The corresponding $t$-statistics
are approximately normally distributed only if the sample has more
than a couple hundred pairs. On the other hand, the $t$-test based
on the PCVE is approximately normally distributed with as few as 20
pairs.


\section{Conclusion}

In paired or small-strata RCTs with large clusters, researchers usually estimate the treatment effect by regressing their outcome on the treatment and pair or stratum fixed effects, ``clustering'' their standard errors at the unit-of-randomization level. Then, they typically use the 5\%-level $t$-test based on this regression to determine if the treatment has an effect or not. As any statistical test, this $t$-test may lead them to commit a type 1 error. Specifically, it may lead them to wrongly conclude that the treatment has an effect, while the truth is that the treatment does not have an effect. But when using a 5\%-level test, their hope is that the probability that this would happen, the so-called error rate of the test, is no larger than 5\%. We show that unfortunately, the error rate of this $t$-test may be much larger than the researcher's 5\% target. We then show that to achieve their desired error rate, researchers should cluster their standard errors at the pair or at the strata level, rather than at the unit-of-randomization level. Clustering at the pair rather than at the unit level in a sample of 371 regressions from published paired RCTs reduces the number of significant effects by 1/3.


\bibliographystyle{aea}
\bibliography{biblio}

\appendix
\newpage
\setcounter{page}{1}
\pagestyle{plain}

\huge
\begin{center}

      {\titlesize\sffamily\textbf{At What Level Should One Cluster Standard Errors in Paired and Small-Strata Experiments?} \par}%
      \vskip 16pt%
      {\authorsize\textit{By \ }\textsc{Cl\'{e}ment de Chaisemartin and Jaime Ramirez-Cuellar}\par}
      \vskip 16pt%
\textbf{Online Appendix}
\end{center}

\normalsize


\section{A. Proof of Lemma \ref{le:4.1}}
\label{app:main_proofs}

We first introduce some notation.
Let $T_p = n_{1p} W_{1p}+n_{2p} W_{2p}$ and $C_p = n_{1p} (1-W_{1p})+n_{2p} (1-W_{2p})$ be the number of treated and untreated observations in pair $p$. Let $T = \sum_{p=1}^P T_p$ and $C = \sum_{p=1}^P C_p$ be the total number of treated and untreated observations. Let $SET_p = \sum_{g = 1}^2\sum_{i=1}^{n_{gp}} W_{gp} {\epsilon}_{igp}$ and $SEU_p = \sum_{g = 1}^2\sum_{i=1}^{n_{gp}} (1-W_{gp}) {\epsilon}_{igp}$ respectively be the sum of the residuals $\epsilon_{igp}$ for the treated and untreated observations in pair $p$.

$\widehat{\tau}$ is the well-known difference-in-means estimator:
    \begin{gather*}
        \widehat{\tau} = \sum_{p=1}^P \sum_{g=1}^2 \sum_{i=1}^{n_{gp}} \frac{Y_{igp}W_{gp}}{T}
        - \sum_{p=1}^P \sum_{g=1}^2 \sum_{i=1}^{n_{gp}} \frac{Y_{igp}(1-W_{gp})}{C}.
    \end{gather*}
Remember that $\widehat{\tau}_p=\sum_{g=1}^2 \left[W_{gp}\sum_{i=1}^{n_{gp}} \frac{Y_{igp}}{n_{gp}} - (1-W_{gp})\sum_{i=1}^{n_{gp}} \frac{Y_{igp}}{n_{gp}}\right]$ is the difference between the average outcome of treated and untreated observations in pair $p$.
It follows from, e.g., Equation (3.3.7) in \cite{angrist2008mostly} and a few lines of algebra that
        \begin{gather*}
            \widehat{\tau}_{fe} = \sum_{p=1}^{P} \omega_p \widehat{\tau}_p, \qquad \text{where} \quad
            \omega_p= \frac{\left(n_{1p}^{-1}+n_{2p}^{-1}\right)^{-1}}{\sum_{p'=1}^P \left(n_{1p'}^{-1}+n_{2p'}^{-1}\right)^{-1}}.
        \end{gather*}


\subsubsection*{Point \ref{le:4.1.1}}~\\

\textit{Proof of $\widehat{\mathbb{V}}_{pair}(\widehat{\tau}) = \widehat{\mathbb{V}}_{pair}(\widehat{\tau}_{fe})$}\\

It follows from Equations \eqref{eq:regnfe} and \eqref{eq:regfe} that
    \begin{gather*}
        \widehat{\alpha} + \widehat{\tau}W_{gp} + \epsilon_{igp} = \widehat{\tau}_{fe}W_{gp} + \sum_{p=1}^P \widehat{\gamma}_p\delta_{igp} + u_{igp}.
    \end{gather*}
Rearranging and using the fact that under Assumption \ref{asm:bal_exp} $\widehat{\tau} = \widehat{\tau}_{fe}$, one obtains that for every $p$:
\begin{gather}
            \epsilon_{igp} = \widehat{\gamma}_p - \widehat{\alpha} + u_{igp}.
            \label{res_nfe}
\end{gather}

Then,
    \begin{align}
            \widehat{\mathbb{V}}_{pair}(\widehat{\tau})
            & = \frac{1}{T^2}\sum_{p=1}^P \left(SET_p - SEU_p \right)^2
            \notag
            \\
            & = \frac{1}{T^2} \sum_p \left[\sum_g \sum_i (2W_{gp}-1)\epsilon_{igp}\right]^2
            \notag
            \\
            & = \frac{1}{T^2} \sum_p \left[\sum_g \sum_i (2W_{gp}-1)(\widehat{\gamma}_p - \widehat{\alpha} + u_{igp})\right]^2
            \notag
            \\
            & = \frac{1}{T^2} \sum_p \left[\sum_g \sum_i (2W_{gp}-1)u_{igp}
                    + (\widehat{\gamma}_p - \widehat{\alpha}) \sum_g \sum_i (2W_{gp}-1) \right]^2
            \notag
            \\
            & = \frac{4}{T^2} \sum_p \left(\sum_g \sum_i W_{gp}u_{igp} \right)^2.
            \label{eq:vhat_pair1}
    \end{align}
    The first equality follows from Point \ref{le:CRVE_nfe_p1} of Lemma \ref{le:CRVE_nfe} and Assumption \ref{asm:bal_exp}. The third equality follows from Equation \eqref{res_nfe}. The fifth follows from the following two facts. First, $\sum_g \sum_i (2W_{gp}-1)u_{igp} = 2\sum_g \sum_i W_{gp}u_{igp} - \sum_g \sum_i u_{igp}=2\sum_g \sum_i W_{gp}u_{igp}$, since $\sum_g \sum_i u_{igp} = 0$ by definition of $u_{igp}$. Second, $ (\widehat{\gamma}_p - \widehat{\alpha}) \sum_g \sum_i (2W_{gp}-1) = (\widehat{\gamma}_p - \widehat{\alpha}) \left[\sum_g \sum_i W_{gp} - \sum_g \sum_i (1-W_{gp})\right] = (\widehat{\gamma}_p - \widehat{\alpha})[T_p - C_p] = 0$, where the last equality comes from the fact that $n_{1p} = n_{2p} $ by Assumption \ref{asm:bal_exp}.

Similarly,
    \begin{align}
            \widehat{\mathbb{V}}_{pair}(\widehat{\tau}_{fe})
            & = \frac{4}{T^2}\sum_{p=1}^P SET_{p,fe}^2 = \frac{4}{T^2}\sum_{p=1}^P \left(\sum_g \sum_i W_{gp}u_{igp}\right)^2,
            \label{eq:PCRVE_balgroup}
    \end{align}
    where the first equality follows from Equation \eqref{eq:fe_num} in the proof of Lemma \ref{le:CRVE_nfe} and Assumption \ref{asm:bal_exp}.
    Combining Equations \eqref{eq:vhat_pair1} and \eqref{eq:PCRVE_balgroup} yields $\widehat{\mathbb{V}}_{pair}(\widehat{\tau}) = \widehat{\mathbb{V}}_{pair}(\widehat{\tau}_{fe})$.

\medskip
\textit{Proof of $\EX\left[\frac{P}{P-1}\widehat{\mathbb{V}}_{pair}(\widehat{\tau})\right] =\mathbb{V}(\widehat{\tau})+\frac{1}{P(P-1)}\sum_{p=1}^P(\tau_p-\tau)^2$}\\

    Under Assumption \ref{asm:bal_exp}, $T=C=n/2$, so
    \begin{align}
        \widehat{\mathbb{V}}_{pair}(\widehat{\tau}) & = \sum_{p=1}^P \left(  \frac{SET_p}{T} - \frac{SEU_p}{C} \right)^2
        \notag
        \\
        & = \frac{4}{n^2}\sum_{p=1}^P \left(SET_p - SEU_p \right)^2
        \notag
        \\
        & = \frac{4}{n^2}\sum_{p=1}^P \left(\sum_g \sum_i (W_{gp}\epsilon_{igp} - (1-W_{gp})\epsilon_{igp}) \right)^2
        \notag
        \\
        & = \frac{4}{n^2}\sum_{p=1}^P \left(\sum_g \sum_i (2W_{gp}-1)\epsilon_{igp} \right)^2
        \notag
        \\
        & = \frac{4}{n^2}\sum_{p=1}^P \left(\sum_g (2W_{gp}-1) \sum_i (Y_{igp}-\widehat{\tau}W_{gp}-\widehat{\alpha}) \right)^2
        \notag
        \\
        & = \frac{4}{n^2}\sum_{p=1}^P \left(\sum_g (2W_{gp}-1) \left(\sum_i Y_{igp}-\widehat{\tau}W_{gp}\frac{n_{p}}{2}-\widehat{\alpha}\frac{n_{p}}{2}\right) \right)^2
        \notag
        \\
        & = \frac{4}{n^2}\sum_{p=1}^P \left(\sum_g (2W_{gp}-1)\sum_i Y_{igp}-\widehat{\tau}\frac{n_{p}}{2}\sum_g (2W_{gp}-W_{gp})-\widehat{\alpha}\frac{n_{p}}{2}\sum_g (2W_{gp}-1) \right)^2
        \notag
        \\
        & = \frac{4}{n^2}\sum_{p=1}^P \left(\sum_g (2W_{gp}-1)\sum_i Y_{igp}-\widehat{\tau}\frac{n_{p}}{2}\sum_g W_{gp}\right) ^2
        \notag
        \\
        & = \frac{4}{n^2}\sum_{p=1}^P \left(\sum_g (2W_{gp}-1)\sum_i Y_{igp}-\widehat{\tau}\frac{n_{p}}{2}\right) ^2
        \notag \\
        & = \frac{4}{n^2}\sum_{p=1}^P \left(\widehat{\tau}_p\frac{n_p}{2}-\widehat{\tau}\frac{n_{p}}{2}\right)^2
        \notag
        \\
        & = \frac{1}{P^2} \sum_{p=1}^P \left(\widehat{\tau}_p-\widehat{\tau}\right)^2. \label{eq:PCVE_simple}
    \end{align}
    The third equality comes from the definition of $SET_p$ and $SEU_p$. The fifth equality follows from the Equation \eqref{eq:regnfe}. The sixth equality follows from $n_{1p}=n_{2p}=n_p/2$, which is a consequence of Assumption \ref{asm:bal_exp}. The eighth equality comes from the fact that $\sum_g (2W_{gp}-1) = 0$, which follows from Point \ref{asm:1_p1} of Assumption \ref{asm:1}. The ninth equality follows from Point \ref{asm:1_p1} of Assumption \ref{asm:1}. The tenth equality follows from  $\sum_g (2W_{gp}-1)\sum_i Y_{igp} = \sum_g W_{gp}\sum_i Y_{igp} - \sum_g (1-W_{gp})\sum_i Y_{igp} = n_p \widehat{\tau}_p/2$. The eleventh equality follows from Assumption \ref{asm:bal_exp}.

    Now, consider Equation \eqref{eq:PCVE_simple}. Adding and subtracting $\tau$ and $\tau_p=\EX[\widehat{\tau}_p]$,
        \begin{align*}
        \widehat{\mathbb{V}}_{pair}(\widehat{\tau})    & =  \frac{1}{P^2} \sum_{p=1}^P\left((\widehat{\tau}_p-\tau_p)-(\widehat{\tau}-\tau) +(\tau_p-\tau)\right)^2
        \\
        & =  \frac{1}{P^2} \sum_{p=1}^P\left[(\widehat{\tau}_p-\tau_p)^2+(\widehat{\tau}-\tau)^2 +(\tau_p-\tau)^2 -2(\widehat{\tau}_p-\tau_p)(\widehat{\tau}-\tau)\right.
        \\
        & \hphantom{space needed to align the equation}\left.
        +2(\widehat{\tau}_p-\tau_p)(\tau_p-\tau)- 2(\widehat{\tau}-\tau)(\tau_p-\tau)\right].
        \intertext{Taking the expected value, and given that $\EX[\widehat{\tau}]=\tau$ and $\EX[\widehat{\tau}_p]=\tau_p$,}
        \EX[\widehat{\mathbb{V}}_{pair}(\widehat{\tau})] & =  \frac{1}{P^2} \sum_{p=1}^P\left[\mathbb{V}(\widehat{\tau}_p)+\mathbb{V}(\widehat{\tau}) +(\tau_p-\tau)^2  -2\text{Cov}(\widehat{\tau},\widehat{\tau}_p)\right]
        \\
            & =  \frac{1}{P^2} \sum_{p=1}^P\left[\left(1-\frac{2}{P}\right)\mathbb{V}(\widehat{\tau}_p)+\mathbb{V}(\widehat{\tau}) +(\tau_p-\tau)^2\right]
        \\
            & = \left(1-\frac{2}{P}\right)\mathbb{V}(\widehat{\tau})+\frac{1}{P^2}\sum_{p=1}^P\mathbb{V}(\widehat{\tau}) +\frac{1}{P^2}\sum_{p=1}^P (\tau_p-\tau)^2
        \\
            & = \left(1-\frac{1}{P}\right)\mathbb{V}(\widehat{\tau}) + \frac{1}{P^2}\sum_{p=1}^P (\tau_p-\tau)^2.
        \end{align*}
    The second equality follows from the fact that by Point \ref{asm:1_p3} of Assumption \ref{asm:1} and Assumption \ref{asm:bal_exp}, $\text{Cov}(\widehat{\tau}_p,\widehat{\tau})=\text{Cov}\left(\widehat{\tau}_p,\sum_{p'} \frac{1}{P}\widehat{\tau}_{p'}\right)=\frac{1}{P}\mathbb{V}(\widehat{\tau}_p)$. The third equality comes from Equation \eqref{eq:var_hat_tau}. This proves the result.

\noindent \textbf{QED}.

\subsubsection*{Point \ref{le:4.1.2}}~\\

The result directly follows from Points \ref{le:CRVE_fe_p1} and \ref{le:CRVE_fe_p2} of Lemma \ref{le:CRVE_nfe} and the fact that $n_{1p}=n_{2p}=n_p/2$ under Assumption \ref{asm:bal_exp}.
\\
\noindent \textbf{QED}.


\subsubsection*{Point \ref{le:4.1.3}}~\\

Let $\overline{Y}_{gp} \equiv \sum_i Y_{igp}/n_{gp}$, $\widehat{Y}_p(1) \equiv \sum_g W_{gp}\overline{Y}_{gp}$, $\widehat{Y}_p(0) \equiv \sum_g (1-W_{gp})\overline{Y}_{gp}$, and $\widehat{Y}(d) \equiv \sum_p \widehat{Y}_{p}(d)/P$, for $d\in \{0,1\}$.
    \begin{align}
        \EX[\widehat{Y}_p(1)] & = \EX\left[\sum_g W_{gp}\overline{y}_{gp}(1)\right]=  \frac{1}{2}\sum_g \overline{y}_{gp}(1) = \overline{y}_p(1).
        \label{eq:exp_yhatp1}
    \end{align}
    The second equality follows from Point \ref{asm:1_p2} of Assumption \ref{asm:1}.
    Similarly,
    \begin{gather}
        \EX[\widehat{Y}_p(0)] = \EX[\overline{y}_p(0) ]
        \label{eq:exp_yhatp0}
        \\
        \EX[\widehat{Y}(d)] = \overline{y}(d), \quad \text{ for } d\in\{0,1\}.
        \label{eq:exp_yhatd}
    \end{gather}
    Then, one has
    \begin{align}
        \widehat{\mathbb{V}}_{unit}(\widehat{\tau})
        - \widehat{\mathbb{V}}_{pair}(\widehat{\tau}) & = \frac{8}{n^2} \sum_p SET_pSEU_p
        \notag
        \\
        & = \frac{8}{n^2} \sum_p \left(\sum_g  W_{gp} \sum_i (y_{igp}(1) - \widehat{Y}(1))\right)\left(\sum_g (1-W_{gp}) \sum_i (y_{igp}(0) - \widehat{Y}(0))\right)
        \notag 
        \\
        & = \frac{8}{n^2} \sum_p \frac{n_{p}^2}{4}\left(\sum_g  W_{gp} \sum_i \frac{y_{igp}(1)}{n_{gp}} - \widehat{Y}(1)\right)\left(\sum_g (1-W_{gp}) \sum_i \frac{y_{igp}(0)}{n_{gp}} - \widehat{Y}(0)\right)
        \notag
        \\
        & =  \frac{2}{P^2} \sum_p \widehat{Y}_p(1)\widehat{Y}_p(0) - \frac{2}{P}\widehat{Y}(1)\widehat{Y}(0)
        \label{eq:diff_ucve_pcve}
    \end{align}
    The first equality follows from Points \ref{le:CRVE_nfe_p1} and \ref{le:CRVE_nfe_p2} of Lemma \ref{le:CRVE_nfe} and Assumption \ref{asm:bal_exp}. The second equality follows from the definitions of $SET_p$, $SEU_p$, and $\epsilon_{igp}$. The third equality follows from Point \ref{asm:1_p1} of Assumption \ref{asm:1}, and Assumption \ref{asm:bal_exp}. The fourth equality follows from Assumption \ref{asm:bal_exp} and some algebra.
    Taking the expectation of \eqref{eq:diff_ucve_pcve},
    \begin{align}
    & \EX\left[\widehat{\mathbb{V}}_{unit}(\widehat{\tau})
        - \widehat{\mathbb{V}}_{pair}(\widehat{\tau})\right] \notag \\
        & = \frac{2}{P^2} \sum_p \left(\text{Cov}(\widehat{Y}_p(1),\widehat{Y}_p(0))\right)+\frac{2}{P^2} \sum_p (\overline{y}_p(1)-\overline{y}(1))(\overline{y}_p(0)-\overline{y}(0)) - \frac{2}{P}\text{Cov}(\widehat{Y}(1),\widehat{Y}(0))
        \notag
        \\
        & = \frac{2}{P^2} \sum_p \left(\text{Cov}(\widehat{Y}_p(1),\widehat{Y}_p(0))\right)+\frac{2}{P^2} \sum_p (\overline{y}_p(1)-\overline{y}(1))(\overline{y}_p(0)-\overline{y}(0)) - \frac{2}{P}\text{Cov}\left(\frac{1}{P}\sum_p\widehat{Y}_p(1),\frac{1}{P}\sum_p\widehat{Y}_p(0)\right)
        \notag \\
        & = \frac{2(P-1)}{P^3} \sum_p \left(\text{Cov}(\widehat{Y}_p(1),\widehat{Y}_p(0))\right)+\frac{2}{P^2} \sum_p (\overline{y}_p(1)-\overline{y}(1))(\overline{y}_p(0)-\overline{y}(0)).        \notag
    \end{align}
    The first equality follows from adding and subtracting
    $\frac{2}{P}\EX[\widehat{Y}(1)]\EX[\widehat{Y}(0)]$ and $\frac{2}{P^2}\sum_p\EX[\widehat{Y}_p(1)]\EX[\widehat{Y}_p(0)]$, and from
    Equations \eqref{eq:exp_yhatp1}, \eqref{eq:exp_yhatp0} and \eqref{eq:exp_yhatd}.
    The third equality follows from Point \ref{asm:1_p3} of Assumption \ref{asm:1}. Therefore,
    \small
    \begin{align}
        \frac{P}{P-1}\EX\left[\widehat{\mathbb{V}}_{unit}(\widehat{\tau})
        - \widehat{\mathbb{V}}_{pair}(\widehat{\tau})\right] & = \frac{2}{P^2} \sum_p \left(\text{Cov}(\widehat{Y}_p(1),\widehat{Y}_p(0))\right) +\frac{2}{P(P-1)} \sum_p (\overline{y}_p(0)-\overline{y}(0))(\overline{y}_p(1)-\overline{y}(1)).
        \label{eq:lemma4.3}
    \end{align}
    \normalsize
    Finally,
    \begin{align}
    \text{Cov}\left(\widehat{Y}_p(1),\widehat{Y}_p(0)\right)
     & =  \EX[\widehat{Y}_p(1)\widehat{Y}_p(0)]-\EX[\widehat{Y}_p(1)]\EX[\widehat{Y}_p(0)]
    \notag \\
    & =  \left(\frac{1}{2}\overline{y}_{1p}(1) \overline{y}_{2p}(0)+\frac{1}{2}\overline{y}_{2p}(1) \overline{y}_{1p}(0)\right) - \left(\frac{1}{2}\sum_g \overline{y}_{gp}(1)\right)\left(\frac{1}{2}\sum_g \overline{y}_{gp}(0)\right) \notag \\
    & = \frac{1}{4}\overline{y}_{1p}(1) \overline{y}_{2p}(0)+\frac{1}{4}\overline{y}_{2p}(1) \overline{y}_{1p}(0) -\frac{1}{4}\overline{y}_{1p}(1) \overline{y}_{1p}(0) -\frac{1}{4}\overline{y}_{2p}(1) \overline{y}_{2p}(0) \notag \\
    & = \frac{1}{4}\left(\overline{y}_{1p}(1) - \overline{y}_{2p}(1) \right) \left(\overline{y}_{2p}(0) - \overline{y}_{1p}(0) \right) \notag \\
    & = - \frac{1}{2} \sum_g \left(\overline{y}_{gp}(0) - \overline{y}_{p}(0) \right)\left(\overline{y}_{gp}(1) - \overline{y}_{p}(1) \right)
    \label{eq:cov_haty1_haty0_exp}
\end{align}
The second equality follows from Points \ref{asm:1_p1} and \ref{asm:1_p2} of Assumption \ref{asm:1}, and Equations \eqref{eq:exp_yhatp1} and \eqref{eq:exp_yhatp0}. The third, fourth, and fifth equalities follow after some algebra.
The result follows plugging Equation \eqref{eq:cov_haty1_haty0_exp} into \eqref{eq:lemma4.3}.
\\
\noindent \textbf{QED}.

\section{B. Large sample results for the pair- and unit-clustered variance estimators}

In this section, we present the large sample distributions of the $t$-tests attached to the four variance estimators we considered in Section \ref{sec:mainresults}. Let
\begin{align*}
\sigma_{pair}^2 =& \lim_{P\rightarrow +\infty} \frac{P\mathbb{V}(\widehat{\tau})}{P\mathbb{V}(\widehat{\tau}) + \frac{1}{P} \sum_{p} (\tau_p-\tau)^2},\\
\Delta_{cov,P}=&\frac{1}{P}\sum_p(\overline{y}_p(0)-\overline{y}(0))(\overline{y}_p(1)-\overline{y}(1))- \frac{1}{P}\sum_{p}\frac{1}{2}\sum_{g}\left(\overline{y}_{gp}(0)-\overline{y}_{p}(0)\right)\left(\overline{y}_{gp}(1)-\overline{y}_{p}(1)\right),\\
\text{ and }\sigma_{unit}^2 =& \underset{P\rightarrow+\infty}{\lim} \frac{P\mathbb{V}(\widehat{\tau})}{P\mathbb{V}(\widehat{\tau}) +\frac{1}{P} \sum_p (\tau_p-\tau)^2 +2\Delta_{cov,P}},
\end{align*}
where Assumption \ref{asm:1b} below ensures the limits in the previous display exist.
\begin{assumption}
    \leavevmode
    \label{asm:1b}
    \begin{enumerate}
        \item \label{asm:1b_p1} For every $d$, $g$ and $p$, there is a constant $M$ such that $\left|\overline{y}_{gp}(d)\right|<M<+\infty$.
        \item \label{asm:1b_p2} When $P\rightarrow +\infty$, $\frac{1}{P} \sum_{p}\tau_p$, $\frac{1}{P} \sum_{p} (\tau_p-\tau)^2$, and $\Delta_{cov,P}$ converge towards finite limits, and  $P\mathbb{V}(\widehat{\tau})$ and $P\mathbb{V}(\widehat{\tau}) +\frac{1}{P} \sum_p (\tau_p-\tau)^2 +2\Delta_{cov,P}$ converge towards strictly positive finite limits.
        \item \label{asm:1b_p3} As $P\rightarrow +\infty$, $\sum_{p=1}^P \EX[|\widehat{\tau}_p-\tau_p|^{2+\epsilon}]/S_P^{2+\epsilon} \rightarrow 0 $ for some $\epsilon>0$, where  $S^2_P \equiv P^2\mathbb{V}(\widehat{\tau})$.
    \end{enumerate}
\end{assumption}

Point \ref{asm:1b_p1} of Assumption \ref{asm:1b} guarantees that we can apply the strong law of large numbers (SLLN) in Lemma 1 in \cite{liu1988bootstrap} to the sequence $(\widehat{\tau}^2_p)_{p=1}^{+\infty}$. Point \ref{asm:1b_p2} ensures that $P\mathbb{V}\left(\widehat{\tau}\right)$ and $P\widehat{\mathbb{V}}_{unit}(\widehat{\tau})$ do not converge towards 0. Point \ref{asm:1b_p3} guarantees that we can apply the Lyapunov central limit theorem to $(\widehat{\tau}_p)_{p=1}^{+\infty}$.
\begin{theorem}($t$-stats' asymptotic behavior)
    \leavevmode
    \label{th:asym}
    Under Assumptions \ref{asm:1}, \ref{asm:bal_exp} and \ref{asm:1b},
    \begin{enumerate}
        \item \label{th:asym_p2} $(\widehat{\tau} - \tau)/\sqrt{\widehat{\mathbb{V}}_{pair}(\widehat{\tau})}=(\widehat{\tau}_{fe} - \tau)/\sqrt{\widehat{\mathbb{V}}_{pair}(\widehat{\tau}_{fe})} ~{\overset{d}{\longrightarrow}}~ \mathcal{N}(0,\sigma_{pair}^2)$. $\sigma_{pair}^2\leq 1$, and if $\tau_p=\tau$ for every $p$, $\sigma_{pair}^2=1$.
        \item \label{th:asym_p3} $(\widehat{\tau}_{fe} - \tau)/\sqrt{\widehat{\mathbb{V}}_{unit}(\widehat{\tau}_{fe})} ~{\overset{d}{\longrightarrow}}~ \mathcal{N}(0,2\sigma_{pair}^2)$.
        \item \label{th:asym_p4} $(\widehat{\tau} - \tau)/\sqrt{\widehat{\mathbb{V}}_{unit}(\widehat{\tau})} ~{\overset{d}{\longrightarrow}}~ \mathcal{N}(0,\sigma_{unit}^2)$.
	    \item \label{th:asym_p6.b} $ \sigma_{unit}^2 \leq \sigma_{pair}^2 $  if and only if $\Delta_{cov,P}$ converges towards a positive limit.
	\end{enumerate}
\end{theorem}
\proof{See Online Appendix \ref{sec:Online Appendix}.}

Point \ref{th:asym_p4} is related to Theorem 3.1 in
\cite{bai2021inference}, who show that when $n_{gp}=1$, the $t$-test in Point \ref{th:asym_p4} under-rejects. The asymptotic variance we obtain is different from theirs, because our results are derived under different assumptions. For instance, we assume a fixed population, while \cite{bai2021inference} assume that the experimental units are an i.i.d.\ sample drawn from an infinite superpopulation, and that asymptotically the expectation of the potential outcomes of two units in the same pair become equal.

\section{C. Clustered variance estimators}\label{sec:clu_var_estimators}

\begin{lemma}[Clustered variance estimators for $\widehat{\tau}$ and $\widehat{\tau}_{fe}$]
    \leavevmode
    \label{le:CRVE_nfe}
    \begin{enumerate}
        \item \label{le:CRVE_nfe_p1} The pair-clustered variance estimator (PCVE) of $\widehat{\tau}$ is
        $
        \widehat{\mathbb{V}}_{pair}(\widehat{\tau}) = \sum_{p=1}^P \left(  \frac{SET_p}{T} - \frac{SEU_p}{C} \right)^2.
        $
        \item \label{le:CRVE_nfe_p2} The unit-clustered variance estimator (UCVE) of $\widehat{\tau}$ is
        $
        \widehat{\mathbb{V}}_{unit}(\widehat{\tau}) = \sum_{p=1}^P \left(  \frac{SET_p^2}{T^2} +  \frac{SEU_p^2}{C^2} \right).
        $
        \item \label{le:CRVE_fe_p1} The PCVE of $\widehat{\tau}_{fe}$ is
        $
        \widehat{\mathbb{V}}_{pair}(\widehat{\tau}_{fe}) =\sum_{p=1}^P \omega_p^2 \left(\widehat{\tau}_p-\widehat{\tau}_{fe}\right)^2.
        $
        \item \label{le:CRVE_fe_p2} The UCVE of $\widehat{\tau}_{fe}$ is
        $
        \widehat{\mathbb{V}}_{unit}(\widehat{\tau}_{fe}) =
                    \sum_{p=1}^P \omega_p^2 \left(\widehat{\tau}_p-\widehat{\tau}_{fe}\right)^2\left(\left(\frac{n_{1p}}{n_p}\right)^2 +\left(\frac{n_{2p}}{n_p}\right)^2\right).
        $
    \end{enumerate}
\end{lemma}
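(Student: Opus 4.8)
The plan is to treat all four estimators as special cases of the Liang--Zeger cluster-robust sandwich $\widehat{\mathbb{V}} = (X'X)^{-1}\bigl(\sum_c S_c S_c'\bigr)(X'X)^{-1}$, where the score of cluster $c$ is $S_c = \sum_{i\in c} x_{i}\widehat{e}_i$, $x_i$ is the regressor vector of observation $i$ and $\widehat e_i$ its residual, and then to read off the entry corresponding to the treatment coefficient. The only things that change across the four parts are (i) the design matrix ($\mathbf{1},W$ versus $W$ together with the pair dummies) and (ii) how observations are grouped into clusters $c$ (whole pairs versus single units). I would also note at the outset that the stated formulas are the raw Liang--Zeger estimators, i.e.\ without any finite-sample degrees-of-freedom multiplier.

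For Parts \ref{le:CRVE_nfe_p1} and \ref{le:CRVE_nfe_p2}, I would take $x_{igp}=(1,W_{gp})'$, so that $X'X=\left(\begin{smallmatrix} n & T\\ T & T\end{smallmatrix}\right)$ with inverse $\tfrac{1}{TC}\left(\begin{smallmatrix} T & -T\\ -T & n\end{smallmatrix}\right)$ after using $n=T+C$. The treatment-coefficient variance is the $(2,2)$ entry of the sandwich, which equals $\tfrac{1}{(TC)^2}\sum_c \bigl[(-T,n)S_c\bigr]^2$. For pair clustering the whole pair is one cluster, so $S_p=(SET_p+SEU_p,\,SET_p)'$ and $(-T,n)S_p = C\,SET_p - T\,SEU_p$; substituting gives Part \ref{le:CRVE_nfe_p1}. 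For unit clustering the treated and untreated units are separate clusters, with scores $(SET_p,SET_p)'$ and $(SEU_p,0)'$; applying $(-T,n)$ gives $C\,SET_p$ and $-T\,SEU_p$, and squaring and summing yields Part \ref{le:CRVE_nfe_p2}. No per-pair restriction on the residuals is needed here.

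For the fixed-effects estimator I would avoid the $(P+1)\times(P+1)$ matrix and instead apply Frisch--Waugh--Lovell: partialling the pair dummies out of $W$ replaces $W_{gp}$ by its within-pair-demeaned version $\widetilde W_{igp}=W_{gp}-T_p/n_p$, and the CRVE of $\widehat\tau_{fe}$ collapses to the scalar form $\widehat{\mathbb{V}}(\widehat\tau_{fe})=\bigl(\sum_i \widetilde W_i^2\bigr)^{-2}\sum_c\bigl(\sum_{i\in c}\widetilde W_i u_i\bigr)^2$, with $u$ the full-regression residuals. A short computation gives $\widetilde W_i = n_{2p}/n_p$ on the treated unit and $-n_{1p}/n_p$ on the untreated unit (up to relabeling which unit is treated), whence $\sum_{i\in p}\widetilde W_i^2 = n_{1p}n_{2p}/n_p = (n_{1p}^{-1}+n_{2p}^{-1})^{-1}$ and $\sum_i\widetilde W_i^2 = \sum_{p}(n_{1p}^{-1}+n_{2p}^{-1})^{-1}=:D$, so that $\omega_p^2(n_{1p}^{-1}+n_{2p}^{-1})^2 = 1/D^2$. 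The step doing the real work is the numerator: the pair-dummy normal equations force the residuals to sum to zero within each pair, i.e.\ $SEU_{p,fe}=-SET_{p,fe}$. Using this, the pair-cluster score becomes $\sum_{i\in p}\widetilde W_i u_i = \tfrac{n_{2p}}{n_p}SET_{p,fe}-\tfrac{n_{1p}}{n_p}SEU_{p,fe}=SET_{p,fe}$, which after dividing by $D^2$ and inserting $\omega_p$ gives Part \ref{le:CRVE_fe_p1}. For unit clustering the treated- and untreated-unit scores are $\tfrac{n_{2p}}{n_p}SET_{p,fe}$ and $\tfrac{n_{1p}}{n_p}SET_{p,fe}$ (again via $SEU_{p,fe}=-SET_{p,fe}$); squaring, summing, and using $\omega_p^2(n_{1p}^{-2}+n_{2p}^{-2})=(n_{1p}^2+n_{2p}^2)/(n_p^2 D^2)$ gives Part \ref{le:CRVE_fe_p2}.

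I expect the main obstacle to be Parts \ref{le:CRVE_fe_p1}--\ref{le:CRVE_fe_p2}: one must justify that partialling preserves the clustering structure and that the full-regression residuals coincide with those of the within regression, and then recognize that the within-pair zero-sum constraint $SEU_{p,fe}=-SET_{p,fe}$ is exactly what collapses the two units' contributions into expressions in $SET_{p,fe}$ alone. Everything else is the bookkeeping of rewriting $1/D^2$ and $(n_{1p}^2+n_{2p}^2)/(n_p^2 D^2)$ in terms of the weights $\omega_p$.
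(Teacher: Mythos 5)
Your proposal is correct and follows essentially the same route as the paper's proof: both evaluate the Liang--Zeger sandwich directly, reduce the fixed-effects case to a single within-pair-demeaned regressor (Frisch--Waugh--Lovell), and rely on the key fact that the pair-dummy normal equations force $SET_{p,fe}+SEU_{p,fe}=0$. The only cosmetic difference is that for Parts \ref{le:CRVE_nfe_p1}--\ref{le:CRVE_nfe_p2} you invert the $2\times 2$ matrix $X'X$ and read off the $(2,2)$ entry, whereas the paper first demeans the regression globally to eliminate the intercept; the resulting cluster scores and algebra are identical.
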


\begin{proof}
See Online Appendix \ref{sec:Online Appendix}.
\end{proof}


\section{D. Variance estimators that rely on pairs of pairs}\label{sec:pop}

We also study two other estimators of $\mathbb{V}\left(\widehat{\tau}\right)$.
Those estimators have been proposed in the one-observation-per-unit special case, but it is straightforward to extend them to the case where all units have the same number of observations, as stated in Assumption \ref{asm:bal_exp}.\footnote{Extending those variance estimators when Assumption \ref{asm:bal_exp} fails is left for future work.}

The first alternative estimator we consider is a slightly modified version of the pairs-of-pairs (POP) variance estimator (POPVE) proposed by \cite{abadie2008}. We only define it when the number of pairs $P$ is even, but in our application in Subsection \ref{sec:pop_app} below we propose a simple method to extend it to cases where the number of pairs is odd.
Let $x_{g,p}$ denote the value of a predictor of the outcome in pair $p$'s unit $g$. Pairs are ordered according to their value of $\frac{x_{1,p}+x_{2,p}}{2}$, the two pairs with the lowest value are matched together, the next two pairs are matched together, and so on and so forth. Let $R=\frac{P}{2}$. For any $r\in \{1,...,R\}$ and for any $p\in \{1,2\}$, let $\widehat{\tau}_{pr}$ denote the treatment effect estimator in pair $p$ of POP $r$. Then, the POPVE is defined as $$\widehat{\mathbb{V}}_{pop}(\widehat{\tau})=\frac{1}{P^2}\sum_{r=1}^R(\widehat{\tau}_{1r}-\widehat{\tau}_{2r})^2.$$
$x_{g,p}$, the variable used to match pairs into POPs, could be the average value of the outcome at baseline in pair $p$'s unit $g$. Or it could be the covariate used to form the pairs, when only one covariate is used.
In our application in subsection \ref{sec:pop_app}, we use the baseline outcome to match pairs into POPs, because the covariates used to match units into pairs are unavailable in most of the data sets of the papers we revisit. Based on Lemma \ref{le:4.1_pop}, we will argue below that the baseline outcome should often be a good choice to match pairs into POPs. The variable one uses to form POPs should be pre-specified and not a function of the treatment assignment. Otherwise, researchers could try to find the variable minimizing the POPVE, which would lead to incorrect inference.

There are two differences between the POPVE and the variance estimator proposed in Equation (3) in \cite{abadie2008}. First, we match pairs with respect to a single covariate, while \cite{abadie2008} consider matching with respect to a potentially multidimensional vector of covariates. This difference is not of essence: we could easily allow pairs to be matched on several covariates. We focus on the unidimensional case as that is the one we use in our application, where the matching is done based on the baseline outcome. Second, the estimator in \cite{abadie2008} matches pairs with replacement, while $\widehat{\mathbb{V}}_{pop}(\widehat{\tau})$ matches pairs without replacement. If after ordering pairs according to their value of $\frac{x_{1,p}+x_{2,p}}{2}$, pair $2$ is closer to pair $3$ than pair $4$, pair $2$ is matched to pairs 1 and 3 in \cite{abadie2008}, while $\widehat{\mathbb{V}}_{pop}(\widehat{\tau})$ matches pair $1$ to pair $2$ and pair $3$ to pair $4$. Matching without replacement makes the properties of $\widehat{\mathbb{V}}_{pop}(\widehat{\tau})$ easier to analyze.

The second alternative variance estimator we consider is that proposed by \cite{bai2021inference} in their Equation (20) (BRSVE). Again, we define this estimator when the number of pairs $P$ is even. With our notation, their estimator is $$\widehat{V}_{brs}(\widehat{\tau})=\frac{1}{P^2}\sum_{p=1}^P\widehat{\tau}_p^2-\frac{1}{2}\left(\frac{2}{P^2}\sum_{r=1}^R\widehat{\tau}_{1r}\widehat{\tau}_{2r}+\frac{\widehat{\tau}^2}{P}\right).$$
\cite{bai2021inference} propose another variance estimator in their Equation (28). That estimator is less amenable to simple comparisons with the UCVE, PCVE, and POPVE, so we do not analyze its properties. However, we compute it in our applications, and find that it is typically similar to the POPVE and BRSVE.


\subsection{Finite-sample results}

Let $\tau_{\cdot r} = \frac{1}{2}(\tau_{1r}+\tau_{2r})$ denote the average treatment effect in POP $r$.

\begin{lemma}
    \label{le:4.1_pop}
    \leavevmode
    If Assumptions \ref{asm:1} and \ref{asm:bal_exp} hold and $P$ is even,
    \begin{enumerate}
        \item \label{le:pop_p1}
        $\EX\left[\widehat{\mathbb{V}}_{pop}(\widehat{\tau})\right] =\mathbb{V}(\widehat{\tau})+\frac{1}{P^2}\sum_{r=1}^R(\tau_{1r}-\tau_{2r})^2$.
        \item \label{le:brs_p1}
        $\widehat{\mathbb{V}}_{brs}(\widehat{\tau})=\frac{1}{2}\widehat{\mathbb{V}}_{pair}(\widehat{\tau})+\frac{1}{2}\widehat{\mathbb{V}}_{pop}(\widehat{\tau})$.
        \item \label{le:pcve_versus_brs}
        If $\frac{1}{R}\sum_{r=1}^R \sum_{p=1,2}\frac{1}{2} (\tau_{pr}-\tau_{\cdot r})^2 \leq \frac{1}{R-1}\sum_{r=1}^R (\tau_{\cdot r}-\tau)^2$,
        \begin{enumerate}
    \item $\EX\left[\widehat{\mathbb{V}}_{pop}(\widehat{\tau})\right]\leq \EX\left[\frac{P}{P-1}\widehat{\mathbb{V}}_{pair}(\widehat{\tau})\right]$,
\item $\EX\left[\widehat{\mathbb{V}}_{pop}(\widehat{\tau})\right]\leq \EX\left[\frac{P}{P-1}\widehat{\mathbb{V}}_{brs}(\widehat{\tau})\right]$,
\item $\EX\left[\widehat{\mathbb{V}}_{brs}(\widehat{\tau})\right]\leq \EX\left[\frac{P}{P-1}\widehat{\mathbb{V}}_{pair}(\widehat{\tau})\right]$.
        \end{enumerate}
    \end{enumerate}
\end{lemma}
\begin{proof}
    See Online Appendix \ref{sec:Online Appendix}.
\end{proof}

Point \ref{le:pop_p1} of Lemma \ref{le:4.1_pop} shows that the POPVE is upward biased in general, and unbiased if the treatment effect is constant within POP. The less treatment effect heterogeneity within POP, the less upward biased the POPVE. An important practical consequence of Point \ref{le:pop_p1} is that the variable used to form POPs should be a good predictor of pairs' treatment effect. The baseline value of the outcome may often be a good predictor of pairs' treatment effect. For instance, treatments sometimes produce a stronger effect on units with the lowest baseline outcome, thus leading to a catch-up mechanism \citep[see for instance][]{glewwe2016better}.

Point \ref{le:pop_p1} of Lemma \ref{le:4.1_pop} is related to Theorem 1 in \cite{abadie2008}, though there are a few differences. \cite{abadie2008} assume that the experimental units are drawn from a super population, and show that once properly normalized, their estimator is consistent for the normalized conditional variance of $\widehat{\tau}$.\footnote{In our setting, the covariates are assumed to be fixed, so the fact that we consider the unconditional variance of $\widehat{\tau}$ while they consider its conditional variance does not explain the difference between our results.} The fact that the POPVE is upward biased in Lemma \ref{le:4.1_pop} and consistent in their Theorem 1
is because we do not assume that the experimental units are an i.i.d.\ sample from a super population. The intuition is the following. In \cite{abadie2008}, when the number of units grows, the covariates $X_i$ on which pairing is based become equal to the same value $x$ for units in the same POP: with an infinity of units, each unit can be matched to another unit with the same $X_i$, and each pair can be matched to another pair with the same $X_i$. Then, asymptotically those units are an i.i.d. sample drawn from the super-population conditional on $X_i=x$, and they all have the same expectation of their treatment effect. Treatment effect heterogeneity within POPs, the source of the POPVE's upward bias in Lemma \ref{le:4.1_pop}, vanishes asymptotically. On the other hand, with a convenience sample, units in the same POP may have asymptotically the same covariates, but they could still have different treatment effects, because they are not i.i.d. draws from a superpopulation.

Point \ref{le:brs_p1} shows that the BRSVE is equal to the average of the PCVE and POPVE. Then, it follows from Point \ref{le:4.1.1} of Lemma \ref{le:4.1} and Point \ref{le:pop_p1} of Lemma \ref{le:4.1_pop} that $\frac{P}{P-1}\widehat{\mathbb{V}}_{brs}(\widehat{\tau})$ is upward biased.
Point \ref{le:brs_p1} is related to Lemma 6.4 and Theorem 3.3 in \cite{bai2021inference}, where the authors show that $P\widehat{\mathbb{V}}_{brs}(\widehat{\tau})$ is consistent for the normalized variance of $\widehat{\tau}$. Here as well, the fact that $P\widehat{\mathbb{V}}_{brs}(\widehat{\tau})$ is upward biased in Lemma \ref{le:4.1_pop} and consistent in \cite{bai2021inference}
comes from the fact we do not assume that the experimental units are an i.i.d.\ sample drawn from a super population.

Finally, Point \ref{le:pcve_versus_brs} shows that if the treatment effect varies less within than across POPs, the POPVE is less upward biased than the degrees-of-freedom-adjusted PCVE and BRSVE, and the BRSVE is less upward biased than the degrees-of-freedom-adjusted PCVE. A sufficient condition to have that the treatment effect varies less within than across POPs is $\frac{1}{R}\sum_{r=1}^R(\tau_{1r}-\tau)(\tau_{2r}-\tau)\geq 0$, meaning that the treatment effects of the two pairs in the same POP are positively correlated.


\subsection{Large-sample results}

\begin{assumption}
    \leavevmode
    \label{asm:1pop}
    When $P\rightarrow +\infty$, $\frac{1}{P} \sum_{r} (\tau_{1r}-\tau_{2r})^2$ converges towards a finite limit.
\end{assumption}

Let
\begin{align*}
\sigma^2_{pop} =& \lim_{P\rightarrow +\infty} \frac{P\mathbb{V}(\widehat{\tau})}{P\mathbb{V}(\widehat{\tau}) + \frac{1}{P} \sum_{r} (\tau_{1r}-\tau_{2r})^2},\\
\sigma_{brs}^2 =& \lim_{P\rightarrow +\infty} \frac{P\mathbb{V}(\widehat{\tau})}{P\mathbb{V}(\widehat{\tau}) + \frac{1}{2P} \sum_{r} (\tau_{1r}-\tau_{2r})^2+ \frac{1}{2P} \sum_{p} (\tau_p-\tau)^2},
\end{align*}
where Assumptions \ref{asm:1b} and \ref{asm:1pop} ensure the limits in the previous display exist.

\begin{theorem}($t$-stats' asymptotic behavior)
    \leavevmode
    \label{th:asym_app}
    Under Assumptions \ref{asm:1}, \ref{asm:bal_exp}, \ref{asm:1b}, and \ref{asm:1pop}, 
    \begin{enumerate}
        \item \label{th:asym_p2prime} $(\widehat{\tau} - \tau)/\sqrt{\widehat{\mathbb{V}}_{pop}(\widehat{\tau})} ~{\overset{d}{\longrightarrow}}~ \mathcal{N}(0,\sigma_{pop}^2)$. $\sigma_{pop}^2\leq 1$, and if $\tau_{1r}=\tau_{2r}$ for every $r$, $\sigma_{pop}^2=1$.
        \item \label{th:asym_p2primeprime} $(\widehat{\tau} - \tau)/\sqrt{\widehat{\mathbb{V}}_{brs}(\widehat{\tau})} ~{\overset{d}{\longrightarrow}}~ \mathcal{N}(0,\sigma_{brs}^2)$. $\sigma_{brs}^2\leq 1$, and if $\tau_p=\tau$ for every $p$, $\sigma_{brs}^2=1$.
	\item
	    \label{th:asym_p6.a} $ \sigma_{pair}^2 \leq \sigma_{brs}^2 \leq \sigma_{pop}^2$  if and only if $0\leq \lim_{P\rightarrow +\infty} \frac{1}{R} \sum_{r=1}^R (\tau_{1r}-\tau)(\tau_{2r}-\tau)$.
	\end{enumerate}
\end{theorem}
\proof{See Online Appendix \ref{sec:Online Appendix}.}

Points \ref{th:asym_p2prime} and \ref{th:asym_p2primeprime} of Theorem \ref{th:asym_app} show that when the number of pairs grows, the $t$-statistic using the POPVE and BRSVE, respectively, converges to a normal distribution with a mean equal to 0 and a variance lower than 1 in general, but equal to 1 when the treatment effect is homogenous across pairs. Therefore, those $t$-tests under-reject. Point \ref{th:asym_p6.a} shows that whenever there is a positive correlation between the treatment effects of the two pairs in the same POP, the $t$-test using the POPVE under-rejects less than that using the BRSVE, which itself under-rejects less than that using the PCVE.

\subsection{Simulations}
\label{sec:pop_simu}

For 26 of the 82 regressions in \cite{crepon2015estimating}, the baseline outcome is available in the authors' data set, so for those outcomes we can simulate the POPVE and BRSVE as well. Those estimators are defined under Assumption \ref{asm:bal_exp}, which does not hold. Therefore in those simulations, we aggregate the data at the village level. We use two samples of 80 and 20 randomly selected pairs out of the original 81 pairs, so as to have an even number of pairs. For each outcome, we simulate 3,000 vectors of treatment assignments, assigning one of the two villages to treatment in each pair. Then, we compute $\widehat{\tau}$, $\widehat{\mathbb{V}}_{pair}(\tau)$, $\widehat{\mathbb{V}}_{pop}(\tau)$, and $\widehat{\mathbb{V}}_{brs}(\tau)$, and the three corresponding 5\% level $t$-tests.

The estimated error rate of each $t$-test is shown in Table \ref{tab:size_popsimus} below. The error rate of the $t$-test using the PCVE is close to 5\% with as few as 20 pairs. On the other hand, the error rates of the $t$-tests using the POPVE and BRSVE are larger than 5\%, even with 80 pairs. Accordingly, we run simulations again, duplicating the random sample of 80 pairs twice to have 160 pairs. The error rate of the $t$-test using the BRSVE is now close to 5\%, but the error rate of the $t$-test using the POPVE is still larger than 5\%. With a sample of 320 pairs obtained by duplicating the random sample of 80 pairs four times, all tests have error rates close to 5\%. With 20 and 80 pairs, we find in our simulations that the correlation between $\widehat{\mathbb{V}}_{pop}(\tau)$ and $|\widehat{\tau}|$ is much weaker than that between $\widehat{\mathbb{V}}_{pair}(\tau)$ and $|\widehat{\tau}|$. This explains why the $t$-test using $\widehat{\mathbb{V}}_{pop}(\tau)$ over-rejects, despite the fact $\widehat{\mathbb{V}}_{pop}(\tau)$ is unbiased: when $|\widehat{\tau}|$ is large, $\widehat{\mathbb{V}}_{pop}(\tau)$ is less likely to be large than $\widehat{\mathbb{V}}_{pair}(\tau)$, so the POPVE $t$-test rejects more often. With 160 and 320 pairs, this phenomenon becomes less pronounced. Overall, the asymptotic approximations in Points \ref{th:asym_p2prime} and \ref{th:asym_p2primeprime} of Theorem \ref{th:asym_app} seem to hold only with a large number of pairs, contrary to that in Point \ref{th:asym_p2} of Theorem \ref{th:asym}.

\begin{table}
    \centering
    \caption{Simulations with data aggregated at village-level to compute $\widehat{\mathbb{V}}_{pop}$ and $\widehat{\mathbb{V}}_{brs}$}
    \label{tab:size_popsimus}
    \begin{tabular}{lcccc}
        \hline \hline
         \multirow{3}{2cm}{Variance estimator} & \multicolumn{4}{c}{5\% level $t$-test error rate}  \\
         & \multirow{2}{1.7cm}{With 20 pairs} & \multirow{2}{1.7cm}{With 80 pairs} & \multirow{2}{1.7cm}{With 160 pairs} & \multirow{2}{1.7cm}{With 320 pairs} \\
         \\
        \hline
        PCVE		   & 0.0504	& 0.0505 & 0.0506 & 0.0503 \\
        POPVE      & 0.1301	& 0.0818 & 0.0656 & 0.0565 \\
        BRSVE	   & 0.0808	& 0.0619 &	0.0571 & 0.0530 \\
         \hline
        \end{tabular}

    \begin{tablenotes}
    The table reports the error rates of three 5\% level $t$-tests in \cite{crepon2015estimating}, aggregating data at the village level. For each of the 26 outcomes in the paper for which the baseline outcome is available,  we randomly drew 3,000 simulated treatment assignments, following the paired assignment used by the authors, and computed the treatment effect estimator $\widehat{\tau}$, the pair-clustered variance estimator (PCVE), the pairs-of-pairs variance estimator (POPVE) in \cite{abadie2008}, the variance estimator in \citet{bai2021inference} (BRSVE), and the three corresponding $t$-tests. The error rate of each test is the percent of times, across the 78,000 regressions (26 outcomes $\times$ 3,000 replications), that the test leads the researcher to wrongly conclude that the treatment has an effect. Column 2 (resp.\ 3, 4, 5) shows the results using a random sample of 20 pairs (resp.\ a random sample of 80 pairs, the same random sample of 80 pairs duplicated twice, the same random sample of 80 pairs duplicated four times).
    \end{tablenotes}
\end{table}

\subsection{Application}
\label{sec:pop_app}

For 152 of the 294 regressions in Panel A of Table \ref{tab:replic}, the baseline outcome is available in the data set, so we can compute the POPVE and BRSVE. Those estimators are defined under Assumption \ref{asm:bal_exp}, which does not hold in all those regressions. Therefore, we compute the POPVE and BRSVE after aggregating the data at the unit level. When the number of pairs is odd, we compute the POPVE twice, first excluding the pair with the lowest value of the baseline outcome, then excluding the pair with the highest value of the baseline outcome, and we finally take the average of the two estimators. We do the same for the BRSVE when the number of pairs is odd. We also recompute the PCVE without pair fixed effects with the aggregated data, using the exact same sample as that used to compute the POPVE and BRSVE. Across those 152 regressions, the POPVE divided by the PCVE is on average equal to 1.026. The BRSVE divided by the PCVE is on average equal to 1.014.\footnote{The variance estimator in Equation (28) of \cite{bai2021inference} is also on average higher than the PCVE.}
In those regressions, the POPVE and BRSVE do not lead to power gains.


\section{E. Extension: stratified experiments with few units per strata}
\label{se:ext_small_strata}

In this section, we perform Monte-Carlo simulations to assess how our results in Section \ref{sec:mainresults} extend to stratified RCTs where the number of units per strata is larger than two, but still fairly small. Three main findings emerge. First, the error rates of $t$-tests using stratum-clustered standard errors are equal to 5\%. Second, the error rates of $t$-tests using standard errors clustered at the unit level are larger than 5\% in regressions with stratum fixed effects, but decrease as the number of units per strata increases. With 5 units per strata, and averaging across Panels A to D of Table \ref{tab:ext_size} below, the error rate of a 5\% level test with UCVE and stratum fixed effects is around 7.9\%, while with 10 units per strata this error rate is around 6.2\%. Finally, the error rates of $t$-tests using standard errors clustered at the unit level tend to be lower than 5\% in regressions without stratum fixed effects.

We draw the potential and observed outcomes from the following data generating process (DGP),
\begin{gather}
    Y_{igp} = W_{gp}y_{igp}(1) + (1-W_{gp})y_{igp}(0) + \gamma_p, \qquad i=1,\dots,n_{gp};\ g=1,\dots,G;\ p=1,\dots,P,
    \label{eq:strat}
\end{gather}
where $y_{igp}(1) $ and $ y_{igp}(0)$ are independent and both follow a $\mathcal{N}(0,1)$ distribution, $\{\gamma_{p}\}_p \sim \text{iid } \mathcal{N}(0,\sigma^2_\gamma)$, and $(y_{igp}(1),y_{igp}(0))\independent \gamma_{p}$. We either let $\sigma_{\eta} = 0$ or $\sigma_{\eta} = \sqrt{0.1}$.  $\sigma_{\eta} = 0$ corresponds to a model with no stratum common shock, while $\sigma_{\eta} = \sqrt{0.1}$ corresponds to a model with a shock. We draw potential outcomes once and keep them fixed, so $y_{igp}(1) $, $ y_{igp}(0)$ and $\gamma_{p}$ do not vary across simulations.

Each stratum has $G$ units. We vary $G$ from two to ten. If $G$ is even, then half of the units are randomly assigned to the control and the remaining to the treatment. If $G$ is odd, then $(G+1)/2
$ units are randomly assigned to the control.
We also set $n_{gp} = 5 $ or $n_{gp} = 100$, and we let the number of strata $P$ be equal to $100$.

We compute $t$-tests based on unit- and stratum-clustered standard errors in regressions of the outcome on the treatment with and without stratum fixed effects. We perform 10,000 simulations for each DGP. Table \ref{tab:ext_size} presents the error rates of the $t$-tests in each DGP.  

$t$-tests using stratum-clustered standard errors achieve error rates close to 5\% for all data configurations (as in Table \ref{tab:size}, with $n_{gp} = 5 $, the $t-$test using the PCVE with stratum fixed effects under-rejects slightly, due to the DOF-adjustment). In contrast, $t$-tests based on unit-clustered standard errors in regressions with stratum fixed effects overreject the true null of no treatment effect. These results are in line with Points \ref{th:asym_p2} and \ref{th:asym_p3} of Theorem \ref{th:asym}, which covered the special case where $G=2$. $t$-tests based on unit-clustered standard errors in regressions with stratum fixed effects over-reject less as the number of units per strata increases from two (column 2) to ten (column 10). Interestingly, it seems that unit-clustered standard errors are approximately equal to $\sqrt{\frac{G-1}{G}}$ times the stratum-clustered standard errors. If $G=2$, the ratio of those two standard errors is exactly equal to $\sqrt{(2-1)/2}=\sqrt{1/2}$ as shown in Lemma \ref{le:4.1}, but this relationship seems to still hold in expectation for larger values of $G$.

In Panel A, $t$-tests based on unit-clustered standard errors in regressions without stratum fixed effects have error rates close to 5\%. When $\sigma_{\eta}=0$, there is no between and within strata heterogeneity in $\overline{y}_{gp}(0)$, so it follows from Point \ref{th:asym_p4} of Theorem \ref{th:asym} that in the special case where $G=2$, $t$-tests based on unit-clustered standard errors in regressions without stratum fixed effects have error rates close to 5\%. Our simulations suggest that this result still holds when $G>2$. However, in Panel B, $t$-tests using unit-clustered standard errors in regressions without stratum fixed effects have error rates lower than 5\%, because there is now between-strata heterogeneity in $\overline{y}_{gp}(0)$. We obtain similar results with five observations per unit (Panels C and D).
\begin{table}[]
\centering
\caption{Error rates of $t$-test in simulated stratified RCTs with small strata}

\resizebox{\textwidth}{!}{
    \begin{tabular}{lccccccccc}
    \hline
    &  \multicolumn{9}{c}{Number of units per strata} \\
    &  2 & 3 & 4 & 5 & 6 & 7 & 8 & 9 & 10
    \\ \hline
    \\
    \multicolumn{10}{l}{\textit{Panel A. iid standard normal potential outcomes and $n_{gp} = 100$}}\\
    \\
UCVE without FE&0.0415&0.0734&0.0501&0.0480&0.0501&0.0478&0.0468&0.0538&0.0479\\
UCVE with FE&0.1685&0.1160&0.0928&0.0806&0.0705&0.0696&0.0631&0.0662&0.0614\\
SCVE without FE&0.0557&0.0587&0.0529&0.0545&0.0514&0.0516&0.0495&0.0524&0.0518\\
SCVE with FE&0.0554&0.0582&0.0527&0.0544&0.0511&0.0515&0.0493&0.0524&0.0516\\
$\widehat{{s.e.}}_{unit}(\widehat{\tau}_{fe})/\widehat{{s.e.}}_{strat}(\widehat{\tau}_{fe})$&0.7053&0.8169&0.8720&0.8986&0.9188&0.9308&0.9408&0.9492&0.9551\\
    \\
    \multicolumn{10}{l}{\textit{Panel B. Stratum-level shock affecting potential outcomes and $n_{gp} = 100$}}
    \\
UCVE without FE&0.0000&0.0000&0.0000&0.0000&0.0000&0.0000&0.0000&0.0000&0.0000\\
UCVE with FE&0.1682&0.1105&0.0909&0.0812&0.0775&0.0671&0.0641&0.0650&0.0637\\
SCVE without FE&0.0518&0.0507&0.0532&0.0524&0.0566&0.0504&0.0511&0.0546&0.0528\\
SCVE with FE&0.0510&0.0506&0.0531&0.0520&0.0565&0.0503&0.0508&0.0545&0.0526\\
$\widehat{{s.e.}}_{unit}(\widehat{\tau}_{fe})/\widehat{{s.e.}}_{strat}(\widehat{\tau}_{fe})$&0.7053&0.8163&0.8700&0.8995&0.9187&0.9313&0.9420&0.9494&0.9548\\
    \\
    \multicolumn{10}{l}{\textit{Panel C. iid standard normal potential outcomes and $n_{gp} = 5$}}
    \\
UCVE without FE&0.0547&0.0456&0.0474&0.0575&0.0492&0.0527&0.0512&0.0527&0.0544\\
UCVE with FE&0.1478&0.0981&0.0872&0.0745&0.0719&0.0687&0.0629&0.0646&0.0627\\
SCVE without FE&0.0515&0.0542&0.0542&0.0552&0.0557&0.0548&0.0514&0.0543&0.0552\\
SCVE with FE&0.0397&0.0471&0.0481&0.0502&0.0522&0.0519&0.0486&0.0515&0.0529\\
$\widehat{{s.e.}}_{unit}(\widehat{\tau}_{fe})/\widehat{{s.e.}}_{strat}(\widehat{\tau}_{fe})$&0.7053&0.8163&0.8695&0.8979&0.9175&0.9327&0.9404&0.9485&0.9551\\
    \\
    \multicolumn{10}{l}{\textit{Panel D. Stratum-level shock affecting potential outcomes and $n_{gp}=5$}}
    \\
UCVE without FE&0.0128&0.0152&0.0207&0.0122&0.0146&0.0173&0.0130&0.0160&0.0158\\
UCVE with FE&0.1539&0.1033&0.0793&0.0730&0.0687&0.0682&0.0625&0.0640&0.0671\\
SCVE without FE&0.0533&0.0546&0.0529&0.0520&0.0513&0.0564&0.0507&0.0536&0.0574\\
SCVE with FE&0.0430&0.0469&0.0476&0.0470&0.0475&0.0526&0.0482&0.0510&0.0545\\
$\widehat{{s.e.}}_{unit}(\widehat{\tau}_{fe})/\widehat{{s.e.}}_{strat}(\widehat{\tau}_{fe})$&0.7053&0.8171&0.8703&0.8991&0.9179&0.9315&0.9419&0.9484&0.9552\\
    \\
    \hline
    \end{tabular}
    }
    \begin{tablenotes}
         The table shows the error rates of $t$-tests based on unit- and stratum-clustered standard errors in regressions with and without stratum fixed effects. Across simulations, we vary the number of units per strata from two to ten ($G= 2,\dots,10$); we vary the number of observations per unit to either $n_{gp} = 5$ or $n_{gp} = 100$; and we set the number of strata to $P = 100$. For each value of $G$, we simulated 10,000 samples from the following data generating processes: independent and identically distributed (iid) standard normal potential outcomes in Panels A and C, and a model with an additive stratum-level shock affecting both potential outcomes in Panel B and D. UCVE and SCVE stand for unit- and stratum-clustered variance estimators, respectively. FE stands for stratum fixed effects. $\frac{\widehat{{s.e.}}_{unit}(\widehat{\tau}_{fe})}{\widehat{{s.e.}}_{strat}(\widehat{\tau}_{fe})}$ is the average across simulations of the ratio of standard errors clustering at the unit and stratum levels in regressions with stratum fixed effects.
    \end{tablenotes}

\label{tab:ext_size}
\end{table}

\clearpage

\newpage

\section{F. Articles in our survey of paired or small strata experiments}

\begin{table}
\caption{Paired RCTs and stratified RCTs with small strata}
\label{tab:litrev}

    \begin{tabular}{lc}
    Reference & Search source
    \\ \hline
    Paired RCTs\\
    \hspace{0.5cm} \cite{ashraf2006deposit} & AEA registry \\
    \hspace{0.5cm} \cite{banerjee2015miracle} & \textit{AEJ: Applied} \\
   \hspace{0.5cm} \cite{crepon2015estimating} & \textit{AEJ: Applied} \\
    \hspace{0.5cm} \cite{beuermann2015one} & \textit{AEJ: Applied} \\
    \hspace{0.5cm} \cite{fryer2016vertical} & AEA registry \\
    \hspace{0.5cm} \cite{glewwe2016better} & AEA registry \\
    \hspace{0.5cm} \cite{bruhn2016impact}   & \textit{AEJ: Applied} \\
    \hspace{0.5cm} \cite{fryer2017management}  & AEA registry \\
    Small-strata RCTs \\
    \hspace{0.5cm} \cite{attanasio2015impacts} & \textit{AEJ: Applied} \\
    \hspace{0.5cm} \cite{angelucci2015microcredit} & \textit{AEJ: Applied} \\
    \hspace{0.5cm} \cite{ambler2015channeling} & \textit{AEJ: Applied} \\
    \hspace{0.5cm} \cite{bjorkman2017experimental} & \textit{AEJ: Applied} \\
    \hspace{0.5cm} \cite{banerji2017impact} & \textit{AEJ: Applied} \\
    \hspace{0.5cm} \cite{lafortune2018role} & \textit{AEJ: Applied} \\
    \hspace{0.5cm} \cite{somville2018saving} & \textit{AEJ: Applied}    \\ \hline
	\end{tabular}

  \begin{tablenotes}
        The table presents economics papers that have conducted clustered and paired RCTs, or clustered and stratified RCTs with ten or less units per strata. We searched the \textit{AEJ: Applied Economics} for papers published in 2014-2018 and using the words ``random" and ``experiment" in the abstract, title, keywords, or main text. Four of those papers had conducted a clustered and paired RCT and seven had conducted a clustered and stratified RCT with ten units or less per strata. We also searched the AEA's registry website for RCTs (\verb+https://www.socialscienceregistry.org+). We looked at all completed projects, whose randomization method included the word ``pair" and that had either a working or a published paper. Thus, we found four more papers that had conducted a clustered and paired RCT. \cite{beuermann2015one} use a paired design to estimate the spillover effects of the intervention they consider. Their estimation of the direct effects of that intervention relies on another type of randomization. We only include their spillover analysis in our survey and in our replication.
  \end{tablenotes}

\end{table}

\clearpage

\newpage



\section{G. Results when the number of observations varies across units}
\label{sec:ext_assm2}

In this section, we extend some of the results in Section \ref{sec:mainresults} to instances where units may have different numbers of observations, as is often the case in practice.

\subsection{Upward bias of the pair-clustered variance estimator (PCVE)}

In this subsection, we show that when units have different numbers of observations, our recommendation of using the PCVE still applies.

When units have different numbers of observations, there are several estimators of the treatment effect one may consider.
$\widehat{\tau}$, the standard difference in means estimator, is such that
 \begin{gather*}
        \widehat{\tau} = \frac{1}{T}\sum_{p=1}^P \sum_{g=1}^2 \sum_{i=1}^{n_{gp}} Y_{igp}W_{gp}
        - \frac{1}{C}\sum_{p=1}^P \sum_{g=1}^2 \sum_{i=1}^{n_{gp}} Y_{igp}(1-W_{gp}),
    \end{gather*}
where $T$ and $C$ respectively denote the total number of treated and control observations. When the number of observations varies across units, $T$ and $C$ are stochastic. For instance, assume one has two pairs. In pair 1, units 1 and 2 both have 1 observation, but in pair 2 unit 1 has 1 observations while unit 2 has 2 observations. Then, $T$ is equal to 2 with probability 1/2, and to 3 with probability 1/2. These stochastic denominators in $\widehat{\tau}$ make it impossible to derive a closed-form expression of its expectation and variance. One can still show that when the number of pairs goes to infinity, $\widehat{\tau}$ converges toward $\tau$, the average treatment effect, and one could also use the delta method to show that $\widehat{\tau}$ is asymptotically normal and derive its asymptotic variance. However, throughout the paper we have focused on estimators' finite sample variances.

Therefore, instead of $\widehat{\tau}$ we consider another, closely related estimator, whose expectation and variance are straightforward to derive even when the number of observations varies across units, and which is unbiased for a causal effect that differs from $\tau$ but that is still relatively natural (see \cite{imai2009essential} for closely related discussions).
Let $\widetilde{\tau}$ denote the coefficient of $W_{gp}$ in the weighted OLS regression of $Y_{igp}$ on a constant and $W_{gp}$, with weights $V_{gp} =  n_{p}/n_{gp}$.\footnote{Specifically, the intercept $\widetilde{\alpha}$ and $\widetilde{\tau}$ are such that   $(\widetilde{\alpha},\widetilde{\tau})=  \argmin_{\alpha, \tau} \sum_{p} \sum_g \sum_i V_{gp}(Y_{igp}-\alpha - \tau W_{gp})^2$.} Let $\widetilde{\alpha}$ be the intercept in that regression. One can show that 
\begin{gather}
    \widetilde{\tau} = \frac{1}{P} \sum_p \frac{n_{p}}{\overline{n}} \sum_g  (W_{gp}\overline{Y}_{gp}-(1-W_{gp})\overline{Y}_{gp}) = \frac{1}{P} \sum_p \frac{n_{p}}{\overline{n}}\widehat{\tau}_p,
    \label{eq:tilde_tau}
\end{gather}
where
$\bar{n} = n/P$.
Under Assumption \ref{asm:bal_exp}, $\widetilde{\tau}=\widehat{\tau}$. Hence, $\widetilde{\tau}$ generalizes 
$\widehat{\tau}$ to the case where the number of observations varies across units. $\widetilde{\tau}$ is also one of the estimators considered by \cite{imai2009essential}, though the fact $\widetilde{\tau}$ can be obtained by weighted least squares is not noted therein.

$\widetilde{\tau}$ is generally not unbiased for $\tau$, unless in every pair, the two units have the same number of observations, i.e., $n_{1p} = n_{2p}$ for all $p$ \citep{imai2009essential}. On the other hand, $\widetilde{\tau}$ is unbiased for $$\tau^* = \frac{1}{P}\sum_p \frac{n_{p}}{\overline{n}} \left(\frac{\tau_{1p}}{2}+\frac{\tau_{2p}}{2}\right)$$
where $\tau_{gp} = \frac{1}{n_{gp}}\sum_{i=1}^{n_{gp}} [y_{igp}(1)-y_{igp}(0)]$
denotes the average treatment effect in unit $g$ of pair $p$.\footnote{With a slight abuse of notation, $\tau_{1r}$ and $\tau_{2r}$ refer to the ATE in pairs 1 and 2 of POP $r$, while $\tau_{1p}$ and $\tau_{2p}$ refer to the ATE in units 1 and 2 of pair $p$.} $\tau^*$ is
a weighted average of the pair-specific average treatment effects $(\tau_{1p}+\tau_{2p})/2$. Those pair-specific average treatment effects give equal weight to the average treatment effect in each unit, rather than weighting them according to their number of observations like $\tau_p$ does.
\cite{imai2009essential} show that
\begin{gather*}
    \mathbb{V}(\widetilde{\tau}) =\frac{1}{4P^2}\sum_p \frac{n_p^2}{\overline{n}^2} \left(\Delta_{p}(1)+\Delta_{p}(0) \right)^2,
\end{gather*}
where $\Delta_{p}(1) \equiv \overline{y}_{1p}(1)-\overline{y}_{2p}(1)$ and $\Delta_{p}(0)\equiv \overline{y}_{1p}(0)-\overline{y}_{2p}(0)$.
They propose various estimators of that variance, and show that they are upward biased. Instead, we rely on the fact $\widetilde{\tau}$ can be obtained by weighted least squares to propose an estimator whose properties have not been studied in the randomization-inference framework we consider: the PCVE attached to $\widetilde{\tau}$.

First, the following lemma extends Lemma \ref{le:CRVE_nfe} to the PCVE in a weighted OLS regression.\footnote{We follow the definition of clustered variance estimators for weighted least squares in Equation (15) of \cite{cameron2015practitioner}.}
\begin{lemma}[Pair-clustered variance estimator for $\widetilde{\tau}$]
    \leavevmode
    \label{le:CRVE_nfe_ext}
        $    \widehat{\mathbb{V}}_{pair}(\widetilde{\tau})  = \frac{1}{P^2} \sum_p  \frac{n_p^2}{\overline{n}^2} \left[\widehat{\tau}_p - \widetilde{\tau}\right]^2.$
\end{lemma}
\proof{See Online Appendix \ref{sec:Online Appendix}.}

Then, we study the asymptotic distribution of the $t$-statistic attached to $\widetilde{\tau}$ and $\widehat{\mathbb{V}}_{pair}(\widetilde{\tau})$. To do so, we make
the following assumption.
\begin{assumption}
    \label{asm:3_ext}
    \leavevmode
    \begin{enumerate}
        \item \label{asm:3_ext_p1} For all $g$ and $p$, $1\leq n_{gp}\leq N$ for some fixed $N<+\infty$.
        \item \label{asm:3_ext_p2}    As $P\rightarrow +\infty$, $\frac{1}{P} \sum_p \left(\frac{n_p}{\overline{n}}\right)^2$, $\frac{1}{P} \sum_p \left( \frac{n_p}{\overline{n}}\right)^2 \EX[\widehat{\tau}_p]$, and $\frac{1}{P} \sum_p \left( \frac{n_p}{\overline{n}}\right)^2 (\EX[\widehat{\tau}_p])^2$
 converge to strictly positive constants, and $\tau^*=\frac{1}{P}\sum_p \frac{n_{p}}{\overline{n}} \left(\frac{\tau_{1p}}{2}+\frac{\tau_{2p}}{2}\right)$
 converges to a constant $\tau^{\infty}$.
       \item \label{asm:3_ext_p3} As $P\rightarrow +\infty$, $\sum_{p=1}^P \EX\left[|\frac{n_p}{\overline{n}}|^{2+\epsilon}|\widehat{\tau}_p-\EX[\widehat{\tau}_p]|^{2+\epsilon}\right]/{\widetilde{S}}_P^{2+\epsilon} \rightarrow 0 $ for some $\epsilon>0$, where  ${\widetilde{S}}^2_P \equiv P^2\mathbb{V}(\widetilde{\tau})$.
    \end{enumerate}
\end{assumption}
Point \ref{asm:3_ext_p1} of Assumption \ref{asm:3_ext} requires that the number of observations in every unit is greater than 1 and lower than some fixed $N$. Combined with Point \ref{asm:1b_p2} of Assumption \ref{asm:1b}, Point \ref{asm:3_ext_p2} of Assumption \ref{asm:3_ext} ensures that $P\widehat{\mathbb{V}}_{pair}(\widetilde{\tau})$ converges towards a strictly positive limit. Point \ref{asm:3_ext_p3} guarantees that we can apply the Lyapunov central limit theorem to $\left(\frac{n_p}{\overline{n}}\widehat{\tau}_p\right)_{p=1}^{+\infty}$.
Let $\sigma_{wls}^2 = \lim_{P\rightarrow +\infty} \frac{P\mathbb{V}(\widetilde{\tau})}{P\mathbb{V}(\widetilde{\tau}) + \frac{1}{P} \sum_{p} \left(\frac{n_p}{\overline{n}}\right)^2 \left(\EX\left(\widehat{\tau}_p\right)-\tau^{\infty}\right)^2}$.
\begin{theorem}
    \leavevmode
    \label{th:asym_ext}
 If Assumptions \ref{asm:1} and \ref{asm:3_ext}, and Points \ref{asm:1b_p1} and \ref{asm:1b_p2} of Assumption \ref{asm:1b} hold, \linebreak $(\widetilde{\tau} - \tau^*)/\sqrt{\widehat{\mathbb{V}}_{pair}(\widetilde{\tau})} ~{\overset{d}{\longrightarrow}}~ \mathcal{N}(0,\sigma_{wls}^2)$. $\sigma_{wls}^2\leq 1$, and if $\tau_{gp}=\tau$ for every $g$ and $p$, or if $n_{1p}=n_{2p}$ and $\tau_p=\tau$ for every $p$, then  $\sigma_{wls}^2=1$.
\end{theorem}
\proof{See Online Appendix \ref{sec:Online Appendix}.}

This theorem shows that when the number of pairs grows, the $t$-statistic of the weighted least squares estimator using the PCVE converges to a normal distribution with a mean equal to 0 and a variance lower than 1 in general, but equal to 1 when the treatment effect is homogenous across units, or when the treatment effect is homogenous across pairs and in every pair the two units have the same number of observations.

Theorem \ref{th:asym_ext} shows that when units have different numbers of observations, the PCVE attached to $\widetilde{\tau}$ is upward biased asymptotically. We now show that the same holds for $\widehat{\tau}_{fe}$, the pair fixed effects estimator, provided one applies some kind of degrees-of-freedom correction to its PCVE. As shown in Point \ref{le:CRVE_fe_p1} of Lemma \ref{le:CRVE_nfe}, the PCVE of $\widehat{\tau}_{fe}$ is $
        \widehat{\mathbb{V}}_{pair}(\widehat{\tau}_{fe}) =\sum_{p=1}^P \omega_p^2 \left(\widehat{\tau}_p-\widehat{\tau}_{fe}\right)^2.$ Let $\widetilde{\omega}_p = \omega_p \left(1-2\omega_p\right)^{-1/2} $.
\begin{lemma}[The adjusted PCVE for $\widehat{\tau}_{fe}$ is upward biased]
\label{le:exp_PCVE_fe}
Under Assumption \ref{asm:1}, and if $\omega_p\leq 1/2$ for all $p$, $\EX\left[\sum_p \widetilde{\omega}_p^2 (\widehat{\tau}_p-\widehat{\tau}_{fe})^2\right] =\mathbb{V}(\widehat{\tau}_{fe})\left(1+\textstyle\sum_p \widetilde{\omega}_p^2\right)+ \sum_p \widetilde{\omega}_p^2[\EX(\widehat{\tau}_p-\widehat{\tau}_{fe})]^2$.
\begin{proof}
See Online Appendix \ref{sec:Online Appendix}.
\end{proof}
\end{lemma}
Lemma \ref{le:exp_PCVE_fe} shows that the adjusted PCVE, where the $\omega_p$ are replaced by $\widetilde{\omega}_p$, is upward biased for the variance of $\widehat{\tau}_{fe}$. The adjustment in $\widetilde{\omega}_p$ is similar to a degrees-of-freedom adjustment. In fact, under Assumption \ref{asm:bal_exp}, the adjusted PCVE is equal to $\frac{P}{P-2}\widehat{\mathbb{V}}_{pair}(\widehat{\tau}_{fe})$. The requirement that $\omega_p\leq 1/2$ for all $p$ is mild. For instance, if $n_{1p}=n_{2p}$ for all $p$, this only requires that every pair has fewer observations than all other pairs combined. If there is an integer $L$ such that $n_p\leq L$ for every $p$, one can show that $\underset{P\rightarrow +\infty}{\liminf} \EX\left[P\left(\widehat{\mathbb{V}}_{pair}(\widehat{\tau}_{fe})-\mathbb{V}(\widehat{\tau}_{fe})\right)\right]\geq 0$: the unadjusted PCVE is also upward biased asymptotically. When the number of observations varies across units, $\widehat{\mathbb{V}}_{pair}(\widehat{\tau}_{fe})$ does not coincide with the estimator of the variance of $\widehat{\tau}_{fe}$ considered in \cite{imai2009essential}. It seems that Lemma \ref{le:exp_PCVE_fe} above is the first result to justify the use of the PCVE attached to $\widehat{\tau}_{fe}$, in paired RCTs where the number of observations varies across units.

\subsection{Ratio of the UCVE and PCVE with pair fixed effects}

In this subsection, we derive the ratio of the UCVE and PCVE with pair fixed effects when units have different numbers of observations.

\begin{lemma}[Ratio of the UCVE and PCVE with pair fixed effects when units have different numbers of observations]
\label{le:ext_fe}
\leavevmode
$\widehat{\mathbb{V}}_{unit}(\widehat{\tau}_{fe})/\widehat{\mathbb{V}}_{pair}(\widehat{\tau}_{fe}) = \sum_p \zeta_p \left(\left(\frac{n_{1p}}{n_{p}}\right)^2+ \left(\frac{n_{2p}}{n_{p}}\right)^2\right) $, where, for all $p$ $\zeta_p\geq 0$ and $\sum_p \zeta_p = 1$. Therefore, $\widehat{\mathbb{V}}_{unit}(\widehat{\tau}_{fe})/\widehat{\mathbb{V}}_{pair}(\widehat{\tau}_{fe}) \in \left[\frac{1}{2},1\right]$.
\end{lemma}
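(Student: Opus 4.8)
The plan is to work entirely from the closed-form expressions for the two fixed-effects variance estimators recorded in Lemma \ref{le:CRVE_nfe}, Points \ref{le:CRVE_fe_p1} and \ref{le:CRVE_fe_p2}, and to exhibit the ratio as a convex combination of per-pair ratios. Write $a_p \equiv \omega_p^2 SET_{p,fe}^2 (n_{1p}^{-1}+n_{2p}^{-1})^2$ for the $p$-th summand of $\widehat{\mathbb{V}}_{pair}(\widehat{\tau}_{fe})$, so that the denominator is $\sum_p a_p$ and each $a_p \geq 0$. First I would compute, for each pair, the ratio of the $p$-th summand of $\widehat{\mathbb{V}}_{unit}(\widehat{\tau}_{fe})$ to $a_p$. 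Using $n_{1p}^{-1}+n_{2p}^{-1} = n_p/(n_{1p}n_{2p})$, both $n_{1p}^{-2}+n_{2p}^{-2}$ and $(n_{1p}^{-1}+n_{2p}^{-1})^2$ share the common factor $(n_{1p}n_{2p})^{-2}$, so after the $\omega_p^2 SET_{p,fe}^2$ factors cancel, the ratio collapses to $(n_{1p}^2+n_{2p}^2)/n_p^2 = (n_{1p}/n_p)^2+(n_{2p}/n_p)^2$, which is exactly $m_p$.

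Consequently $\widehat{\mathbb{V}}_{unit}(\widehat{\tau}_{fe}) = \sum_p m_p a_p$, and setting $\zeta_p \equiv a_p / \sum_{p'} a_{p'}$ gives $\widehat{\mathbb{V}}_{unit}(\widehat{\tau}_{fe})/\widehat{\mathbb{V}}_{pair}(\widehat{\tau}_{fe}) = \sum_p m_p \zeta_p$, where $\zeta_p \geq 0$ and $\sum_p \zeta_p = 1$ by construction. This delivers the claimed convex-combination representation.

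It then remains to bound $m_p$. Writing $x = n_{1p}/n_p \in (0,1)$ and $n_{2p}/n_p = 1-x$, I would express $m_p = x^2 + (1-x)^2 = 2x^2 - 2x + 1$, a convex parabola minimized at $x = 1/2$ with value $1/2$ and attaining its maximum on $[0,1]$ at the endpoints with value $1$. Hence $1/2 \leq m_p \leq 1$ for every $p$. Since a convex combination of numbers lying in $[1/2, 1]$ again lies in $[1/2, 1]$, the ratio $\sum_p m_p \zeta_p$ falls in $[1/2, 1]$, completing the argument.

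I do not anticipate a genuine obstacle: the whole proof is algebraic and rests on Lemma \ref{le:CRVE_nfe} for the two formulas. The only step requiring slight care is the simplification of the per-pair ratio, where one must factor out $(n_{1p}n_{2p})^{-2}$ cleanly so that the $SET_{p,fe}^2$ and $\omega_p^2$ terms cancel and the ratio reduces to a pure function of the observation counts. Once that cancellation is seen, both the convex-combination structure and the bound on $m_p$ follow immediately.
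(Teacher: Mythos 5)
Your proposal is correct and follows essentially the same route as the paper: both express the ratio as $\sum_p m_p \zeta_p$ with $\zeta_p$ proportional to $SET_{p,fe}^2$ (your $a_p$ equals $SET_{p,fe}^2$ times a $p$-independent constant, since $\omega_p^2(n_{1p}^{-1}+n_{2p}^{-1})^2$ does not depend on $p$), and both reduce $m_p$ to $(n_{1p}/n_p)^2+(n_{2p}/n_p)^2$ before bounding it on $[1/2,1]$. The only cosmetic differences are that you work summand-by-summand rather than rewriting both estimators over a common denominator, and you bound $m_p$ via the parabola $x^2+(1-x)^2$ instead of the inequalities $n_{1p}^2+n_{2p}^2\leq(n_{1p}+n_{2p})^2$ and $(n_{1p}-n_{2p})^2\geq 0$.
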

\begin{proof}
The formula for $\widehat{\mathbb{V}}_{unit}(\widehat{\tau}_{fe})/\widehat{\mathbb{V}}_{pair}(\widehat{\tau}_{fe})$ follows from Points \ref{le:CRVE_fe_p1} and \ref{le:CRVE_fe_p2} of Lemma \ref{le:CRVE_nfe}, with $$\zeta_p=\frac{\omega_p^2 \left(\widehat{\tau}_p-\widehat{\tau}_{fe}\right)^2}{\sum_{p=1}^P \omega_p^2 \left(\widehat{\tau}_p-\widehat{\tau}_{fe}\right)^2}.$$
$n_{1p}^2+n_{2p}^2 \leq (n_{1p}+n_{2p})^2 $, so $\left(\frac{n_{1p}}{n_{p}}\right)^2+ \left(\frac{n_{2p}}{n_{p}}\right)^2\leq 1$. $(n_{1p}-n_{2p})^2 = n_{1p}^2-2n_{1p}n_{2p}+n_{2p}^2\geq 0$, so $2n_{1p}^2+2n_{2p}^2\geq (n_{1p}+n_{2p})^2$, and $\left(\frac{n_{1p}}{n_{p}}\right)^2+ \left(\frac{n_{2p}}{n_{p}}\right)^2\geq \frac{1}{2}$. Therefore, $\widehat{\mathbb{V}}_{unit}(\widehat{\tau}_{fe})/\widehat{\mathbb{V}}_{pair}(\widehat{\tau}_{fe}) \in \left[\frac{1}{2},1\right]$.
\end{proof}

Lemma \ref{le:ext_fe} shows that $\widehat{\mathbb{V}}_{unit}(\widehat{\tau}_{fe})/\widehat{\mathbb{V}}_{pair}(\widehat{\tau}_{fe})$ is a weighted average across pairs of the sum of the squared shares that each unit accounts for in the pair. The sum of these squared shares is included between a half and one, so this ratio is included between a half and one. Figure \ref{fig:unbal} plots this ratio when $n_{1p}/n_{2p}$ is constant across pairs. $\widehat{\mathbb{V}}_{unit}(\widehat{\tau}_{fe})/\widehat{\mathbb{V}}_{pair}(\widehat{\tau}_{fe})$ is close to $1/2$ when $n_{1p}/n_{2p}$ is included between 0.5 and 2, meaning that the first unit has between half and twice as many observations as the second one. For instance, if in every pair, one unit has twice as many observations as the other, then the ratio of the two variances is equal to $5/9$. Based on Figure \ref{fig:unbal}, one can also derive an upper bound for $\widehat{\mathbb{V}}_{unit}(\widehat{\tau}_{fe})/\widehat{\mathbb{V}}_{pair}(\widehat{\tau}_{fe})$, when $n_{1p}/n_{2p}$ varies across pairs. For instance, if in every pair, one unit has at most twice as many observations as the other, as should often be the case in practice, then the ratio of the two variances is at most equal to $5/9$.
Overall, Lemma \ref{le:ext_fe} shows that Point \ref{le:4.1.2} of Lemma \ref{le:4.1} still approximately holds when units in each pair have different numbers of observations, unless they have an extremely unbalanced number of observations.


\begin{figure}[ht]
    \centering
    \includegraphics{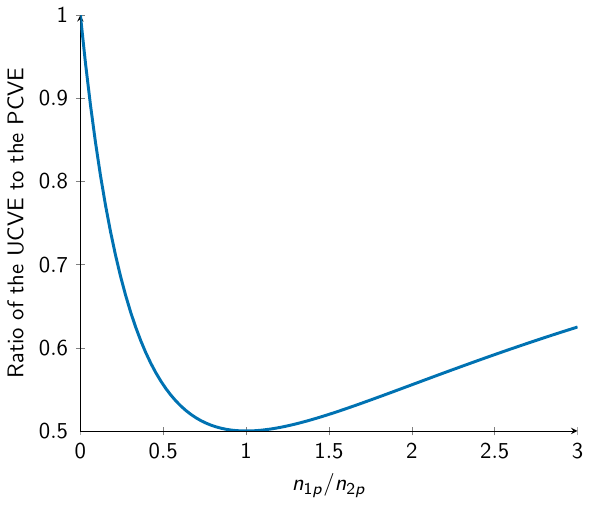}
    \caption{Ratio of Unit-Clustered and Pair-Clustered Variance Estimators with Pair Fixed Effects}
    \label{fig:unbal}
    \begin{tablenotes}
    UCVE and PCVE stand for unit- and pair clustered variance estimators, respectively. $n_{1p}$ and $n_{2p}$ are the number of observations in units 1 and 2 of pair $p$, respectively.
    \end{tablenotes}
\end{figure}



\newpage

\section{H. Proofs of the results in the Online Appendix}
\label{sec:Online Appendix}


\subsection{Proof of Theorem \ref{th:asym}}

    {
    The proof relies on Lemma \ref{le:plim_tilde_tau_q} and on the two equations below.

    Using a similar reasoning as that used to show Equation \eqref{eq:cond_tau_p} in the proof of Lemma \ref{le:plim_tilde_tau_q}, one can show that
    \begin{align}
        \EX\left[\left|\widehat{Y}_p(d)\right|^{2+\epsilon}\right] \leq M_1 < +\infty.
        \label{eq:cond_haty1_p}
    \end{align}
    for all $d$ and $p$ and for some $M_1>0$.

    By Lemma \ref{le:plim_tilde_tau_q}, Assumption \ref{asm:bal_exp}, and Point \ref{asm:1b_p2} of Assumption \ref{asm:1b},

    \begin{gather}
        \widehat{\tau} = \frac{1}{P} \sum_{p} \widehat{\tau}_p \xrightarrow{\mathbb{P}} \underset{P\rightarrow+\infty}{\lim}\frac{1}{P} \sum_{p} \EX[\widehat{\tau}_p] =\underset{P\rightarrow+\infty}{\lim}\frac{1}{P} \sum_{p} {\tau}_p = \underset{P\rightarrow+\infty}{\lim}\tau.
        \label{eq:consistency_hattau}
    \end{gather}

    }

{
    \subsubsection*{Point \ref{th:asym_p2}}~\\

    Note that by Point \ref{asm:1_p3} of Assumption \ref{asm:1}, $\widehat{\tau}-\tau= \widehat{\tau}-\EX[\widehat{\tau}]= \sum_p (\widehat{\tau}_p-\EX[\widehat{\tau}_p])/P$ is a sum of independent random variables $(\widehat{\tau}_p-\EX[\widehat{\tau}_p])_{p=1}^P$ with mean zero and with a finite variance by Equation \eqref{eq:cond_tau_p}}. As $\sum_{p=1}^P \EX[|\widehat{\tau}_p-\tau_p|^{2+\epsilon}/S_P^{2+\epsilon}] \rightarrow 0 $ for some $\epsilon>0$ (by Point \ref{asm:1b_p3} of Assumption \ref{asm:1b}), then, by the Lyapunov central limit theorem, $(\widehat{\tau}-\tau)/(S_P/P) = \sum_p (\widehat{\tau}_p-\tau_p)/S_P  ~{\overset{d}{\longrightarrow}}~  \mathcal{N}(0,1)$ as $P\rightarrow +\infty$, where $S_P^2 = \sum_{p=1}^P \mathbb{V}(\widehat{\tau}_p) = P^2 \mathbb{V}(\widehat{\tau})$.
    Therefore,
\begin{equation}\label{eq:as_normality}
(\widehat{\tau}-\tau)/\sqrt{\mathbb{V}(\widehat{\tau})}   ~{\overset{d}{\longrightarrow}}~  \mathcal{N}(0,1).
\end{equation}
Then,
   \begin{align}
        P\widehat{\mathbb{V}}_{pair}(\widehat{\tau}) - P\mathbb{V}(\widehat{\tau})
        & = \sum_{p=1}^P \frac{\widehat{\tau}_p^2}{P} - \widehat{\tau}^2 - \sum_{p=1}^P \frac{\mathbb{V}(\widehat{\tau}_p)}{P}
        \notag
        \\
        & = \sum_{p=1}^P \frac{\widehat{\tau}_p^2}{P} - \widehat{\tau}^2 - \sum_{p=1}^P \frac{\EX[\widehat{\tau}_p^2]-\EX[\widehat{\tau}_p]^2}{P}
        \notag
        \\
        & = \sum_{p=1}^P \frac{\widehat{\tau}_p^2-\EX[\widehat{\tau}_p^2]}{P} - \widehat{\tau}^2 + \sum_{p=1}^P \frac{\tau_p^2}{P}
        \label{eq:plim_vpair1} \\
        & ~{\overset{\mathbb{P}}{\longrightarrow}}~ \underset{P\rightarrow +\infty}{\lim}\frac{1}{P}\sum_{p=1}^P (\tau_p-\tau)^2.
        \label{eq:plim_vpair2}
    \end{align}

    {
    The first equality follows from Equations \eqref{eq:var_hat_tau} and \eqref{eq:PCVE_simple}. The third equality follows from $\EX[\widehat{\tau}_{p}] = \tau_p$.
    Let's consider each of the terms in Equation \eqref{eq:plim_vpair1}.
    $\sum_{p=1}^P \frac{\widehat{\tau}_p^2-\EX[\widehat{\tau}_p^2]}{P} ~{\overset{\mathbb{P}}{\longrightarrow}}~ 0$ by Lemma \ref{le:plim_tilde_tau_q}. Then, $\widehat{\tau}^2~{\overset{\mathbb{P}}{\longrightarrow}}~ \underset{P\rightarrow +\infty}{\lim}\tau^2$ by Equation \eqref{eq:consistency_hattau} and the continuous mapping theorem (CMT).
    Equation \eqref{eq:plim_vpair2} follows from these facts, and from Point \ref{asm:1b_p2} of Assumption \ref{asm:1b}.

    Given Equation \eqref{eq:plim_vpair2}, Point \ref{asm:1b_p2} of Assumption \ref{asm:1b}, the Slutsky Lemma and the CMT,  as $P \rightarrow +\infty$,}
    \begin{equation}\label{eq:as_norm_pair}
    \frac{\widehat{\tau}-\tau}{\sqrt{\widehat{\mathbb{V}}_{pair}(\widehat{\tau})}} = \frac{\widehat{\tau}-\tau}{\sqrt{\mathbb{V}(\widehat{\tau})}} \sqrt{\frac{P\mathbb{V}(\widehat{\tau})}{P\widehat{\mathbb{V}}_{pair}(\widehat{\tau})}}   ~{\overset{d}{\longrightarrow}}~  \mathcal{N}(0,\sigma_{pair}^2).
    \end{equation}
    Finally, by Lemma \ref{le:4.1}, $\widehat{\mathbb{V}}_{pair}(\widehat{\tau}) = \widehat{\mathbb{V}}_{pair}(\widehat{\tau}_{fe})$, and by Assumption \ref{asm:bal_exp}, $\widehat{\tau} = \widehat{\tau}_{fe}$.

\noindent \textbf{QED}.


    \subsubsection*{Point \ref{th:asym_p3}}~\\
    
    By Lemma \ref{le:4.1.3}, $\widehat{\mathbb{V}}_{pair}(\widehat{\tau}) = 2\widehat{\mathbb{V}}_{unit}(\widehat{\tau}_{fe})$, so given Point \ref{th:asym_p2} of this theorem, the result follows.

\noindent \textbf{QED}.


    {
    \subsubsection*{Point \ref{th:asym_p4}}~\\
    \begin{align}
        &P\widehat{\mathbb{V}}_{unit}(\widehat{\tau}) - P\widehat{\mathbb{V}}_{pair}(\widehat{\tau})\notag\\
        & =  \frac{2}{P} \sum_p \widehat{Y}_p(1)\widehat{Y}_p(0) - 2\frac{1}{P}\sum_p\widehat{Y}_p(1)\frac{1}{P}\sum_p\widehat{Y}_p(0)
        \notag
        \\
        & ~{\overset{\mathbb{P}}{\longrightarrow}}~ 2\underset{P\rightarrow+\infty}{\lim} \left\{\frac{1}{P} \sum_p \EX[\widehat{Y}_p(1)\widehat{Y}_p(0)] - \EX[\widehat{Y}(1)]\EX[\widehat{Y}(0)]\right\}
        \notag
        \\
        & = 2\underset{P\rightarrow+\infty}{\lim} \frac{1}{P}\sum_p\bigg\{\left(\overline{y}_{p}(0)- \overline{y}(0)\right)\left(\overline{y}_{p}(1)- \overline{y}(1)\right) -\frac{1}{2}\sum_g(\overline{y}_{gp}(0)-\overline{y}_{p}(0))(\overline{y}_{gp}(1)-\overline{y}_{p}(1))\bigg\}.
        \label{eq:lim_ucve}
    \end{align}
    The first equality follows from Equation \eqref{eq:diff_ucve_pcve}.
The convergence arrow follows from the fact $\EX\left[\left|\widehat{Y}_p(1)\widehat{Y}_p(0)\right|^{1+\epsilon/2}\right]$ is bounded uniformly in $p$ by Equation \eqref{eq:cond_haty1_p} and the Cauchy-Schwarz inequality, from the fact that $\EX\left[\left|\widehat{Y}_p(d)\right|^{1+\epsilon/2}\right]$ is also bounded uniformly in $p$, from Point \ref{asm:1_p3} of Assumption \ref{asm:1}, from the SLLN in Lemma 1 in \cite{liu1988bootstrap}, from the CMT, and from Point \ref{asm:1b_p2} of Assumption \ref{asm:1b}. The last equality follows from the same steps as those used to prove Lemma \ref{le:4.1.3}. The result follows from Equations \eqref{eq:lim_ucve}, \eqref{eq:plim_vpair2}, and \eqref{eq:as_normality}, and a reasoning similar to that used to prove Equation \eqref{eq:as_norm_pair}.

\noindent \textbf{QED}.

}


\subsection{Proof of Lemma \ref{le:CRVE_nfe}}

\subsubsection*{Point \ref{le:CRVE_nfe_p1}}~\\
    First, we introduce the formulas for the PCVE and UCVE in a general linear regression. Let $\epsilon_{igp}$ be the residual from the regression of $Y_{igp}$ on a $K$-vector of covariates $\boldsymbol{X}_{igp}$, and $\boldsymbol{X}$ the $(n\times K)$ matrix whose rows are $\boldsymbol{X}_{igp}'$.
    The PCVE of the OLS estimator, $\widehat{\boldsymbol{\beta}}$, is defined as follows (\cite{liang1986longitudinal}, \cite{abadie2017should})
        \begin{gather}
            \widehat{\mathbb{V}}_{pair}(\widehat{\boldsymbol{\beta}}) = (\boldsymbol{X}'\boldsymbol{X})^{-1}
                \left(
                \sum_{p=1}^P
                \left(\sum_{g=1}^2 \sum_{i=1}^{n_{gp}} \epsilon_{igp}\boldsymbol{X}_{igp}\right)
                \left(\sum_{g=1}^2\sum_{i=1}^{n_{gp}} \epsilon_{igp}\boldsymbol{X}_{igp}\right)'
                \right)
                (\boldsymbol{X}'\boldsymbol{X})^{-1}.
        \label{eq:VEs_p1}
        \end{gather}
    The UCVE of the OLS estimator, $\widehat{\boldsymbol{\beta}}$, is defined as follows
        \begin{gather}
        \widehat{\mathbb{V}}_{unit}(\widehat{\boldsymbol{\beta}}) = (\boldsymbol{X}'\boldsymbol{X})^{-1}
        \left(
        \sum_{p=1}^P \sum_{g=1}^2
         \left( \sum_{i=1}^{n_{gp}} \epsilon_{igp}\boldsymbol{X}_{igp}\right)
        \left(\sum_{i=1}^{n_{gp}} \epsilon_{igp}\boldsymbol{X}_{igp}\right)'
        \right)
        (\boldsymbol{X}'\boldsymbol{X})^{-1}.
        \label{eq:VEs_p2}
        \end{gather}

Subtract from Equation \eqref{eq:regnfe} the average outcome in the population $\overline{Y} \equiv \frac{1}{n}\sum_p \sum_g \sum_i Y_{igp} = \widehat{\alpha} + \widehat{\tau}\overline{W} + \overline{\epsilon}$, where $\overline{W} \equiv \frac{1}{n}\sum_p \sum_g \sum_i W_{gp}$, and  $\overline{\epsilon} \equiv \frac{1}{n}\sum_p \sum_g \sum_i \epsilon_{igp} = 0$ by construction. Then,
\begin{gather}
    Y_{igp} - \overline{Y} = \widehat{\tau}(W_{gp} - \overline{W}) + \epsilon_{igp}.
    \label{eq:regnfe_demean}
\end{gather}
Apply Equation \eqref{eq:VEs_p1} to the residuals and covariates of the regression defined by Equation \eqref{eq:regnfe_demean}.\footnote{The clustered variance estimators of $\widehat{\tau}$ in the demeaned regression in Equation \eqref{eq:regnfe_demean} and in the regression with an intercept in Equation \eqref{eq:regnfe} are equal \citep{cameron2015practitioner}.} Then,
\begin{align}
    \widehat{\mathbb{V}}_{pair}(\widehat{\tau}) & = \frac{\sum_p \left[\sum_g (W_{gp}-\overline{W})\sum_i \epsilon_{igp}\right]^2}{\left[\sum_p \sum_g \sum_i (W_{gp}-\overline{W})^2\right]^2}.
    \label{eq:pcve_vhat_htau}
\end{align}
The numerator of $\widehat{\mathbb{V}}_{pair}(\widehat{\tau})$ equals
\begin{align}
    \sum_p \left[\sum_g (W_{gp}-\overline{W})\sum_i \epsilon_{igp}\right]^2
    & = \sum_p \left[(1-\overline{W})SET_p -\overline{W} SEU_p\right]^2
    \notag
    \\
    & = \sum_p \left[\frac{C}{n}SET_p -\frac{T}{n} SEU_p\right]^2.
    \label{eq:num_pcve_htau}
\end{align}
The first equality follows from the definition of $SET_p$ and $SEU_p$. The second equality follows from the definition of $T$ and $C$.
\\
The denominator of $\widehat{\mathbb{V}}_{pair}(\widehat{\tau})$ equals
\begin{align}
    \left[\sum_p \sum_g \sum_i (W_{gp}-\overline{W})^2\right]^2 & = \left[\sum_p \sum_g (W_{gp}-\overline{W})^2 n_{gp}\right]^2
    \notag
    \\
    & = \left[(1-\overline{W})^2 \sum_p T_p +\overline{W}^2 \sum_p C_p  \right]^2
    \notag
    \\
    & = \left[\frac{C^2}{n^2} T +\frac{T^2}{n^2} C  \right]^2
    \notag
    \\
    & = \left[\frac{CT}{n}\right]^2.
    \label{eq:den_pcve_htau}
\end{align}
The first equality follows from $(W_{gp} - \overline{W})$ being constant across units. The second equality follows from the definition of $T_p$ and $C_p$. The third equality follows from the definition of $T$ and $C$.
\\
Then, combining Equations \eqref{eq:pcve_vhat_htau}, \eqref{eq:num_pcve_htau} and \eqref{eq:den_pcve_htau},
\begin{align*}
    \widehat{\mathbb{V}}_{pair}(\widehat{\tau}) & = \frac{\sum_p \left[\frac{C}{n}SET_p -\frac{T}{n} SEU_p\right]^2}{\left[\frac{CT}{n}\right]^2}
    \\
    & = \sum_p \left[\frac{SET_p}{T} -\frac{SEU_p}{C}\right]^2.
\end{align*}
\noindent \textbf{QED}.


\subsubsection*{Point \ref{le:CRVE_nfe_p2}}~\\

Apply Equation \eqref{eq:VEs_p2} to the residuals and covariates of the regression defined by Equation \eqref{eq:regnfe_demean}. Then,
\begin{align}
    \widehat{\mathbb{V}}_{unit}(\widehat{\tau}) & = \frac{\sum_p \sum_g \left[(W_{gp}-\overline{W})\sum_i \epsilon_{igp}\right]^2}{\left[\sum_p \sum_g \sum_i (W_{gp}-\overline{W})^2\right]^2}.
    \label{eq:UCVE_vhat_htau}
\end{align}
The numerator of $\widehat{\mathbb{V}}_{unit}(\widehat{\tau})$ equals
\begin{align}
    \sum_p \sum_g \left[(W_{gp}-\overline{W})\sum_i \epsilon_{igp}\right]^2
    & = \sum_p \sum_g (W_{gp}-\overline{W})^2\left(\sum_i \epsilon_{igp}\right)^2
    \notag
    \\
    & = \sum_p \left[(1-\overline{W})^2 SET_p^2 + \overline{W}^2 SEU_p^2 \right]
    \notag
    \\
    & = \sum_p \left[\frac{C^2}{n^2} SET_p^2 + \frac{T^2}{n^2} SEU_p^2 \right].
    \label{eq:num_UCVE_htau}
\end{align}
The second equality follows from the definition of $SET_p$ and $SEU_p$. The third equality follows from the definition of $T$ and $C$.
Then, combining Equations \eqref{eq:den_pcve_htau}, \eqref{eq:UCVE_vhat_htau} and \eqref{eq:num_UCVE_htau},
\begin{align*}
    \widehat{\mathbb{V}}_{unit}(\widehat{\tau}) & = \frac{ \sum_p \left[\frac{C^2}{n^2} SET_p^2 + \frac{T^2}{n^2} SEU_p^2 \right]}{\left[\frac{CT}{n}\right]^2}
    \\
    & = \sum_p \left[\frac{SET_p^2}{T^2}  + \frac{SEU_p^2}{C^2}  \right].
\end{align*}
\noindent \textbf{QED}.


\subsubsection*{Point \ref{le:CRVE_fe_p1}}~\\

Let $SET_{p,fe}= \sum_{g = 1}^2\sum_{i=1}^{n_{gp}} W_{gp} {u}_{igp}$ and $SEU_{p,fe} = \sum_{g = 1}^2\sum_{i=1}^{n_{gp}} (1-W_{gp}){u}_{igp}$ respectively be the sum of the residuals $u_{igp}$ for the treated and untreated observations in pair $p$. Averaging Equation \eqref{eq:regfe} across units in pair $p$,
    \begin{gather}
        \overline{Y}_{p}= \widehat{\tau}_{fe}\overline{W}_{p} + \widehat{\gamma}_p + \overline{u}_{p} \label{eq:fe_devmean1},
        \intertext{where
        $\overline{Y}_{p} = \frac{1}{n_{p}}\sum_{g=1}^2 \sum_{i=1}^{n_{gp}} Y_{igp}$, $\overline{W}_{p}= \frac{1}{n_{p}}\sum_{g=1}^2 \sum_{i=1}^{n_{gp}}  W_{gp} = \frac{1}{n_{p}}\sum_{g=1}^2 W_{gp}n_{gp} = \frac{T_p}{n_p}$, and $\overline{u}_{p} = \frac{1}{n_{p}}\sum_{g=1}^2 \sum_{i=1}^{n_{gp}} u_{igp}$.
        Substracting Equation \eqref{eq:fe_devmean1} from Equation \eqref{eq:regfe},}
        Y_{igp} - \overline{Y}_{p}= \widehat{\tau}_{fe} (W_{gp}-\overline{W}_{p}) + u_{igp} - \overline{u}_{p}.
        \label{eq:fe_devmean2}
    \end{gather}
    $\{u_{ijp'}\}$ is orthogonal to the pair-$p$ fixed effect indicator $\{\delta_{igp}\}$, so
    \begin{align}
        &\sum_{p'=1}^P \sum_{g=1}^2 \sum_{i=1}^{n_{jp'}} u_{ijp'}\delta_{igp} = 0 \nonumber \\
        \Leftrightarrow &\sum_{g=1}^2 \sum_{i=1}^{n_{gp}} u_{igp} = 0,
        \label{eq:sum_res_fe}
    \end{align}
    where the equivalence holds because $\delta_{igp} = 1$ if and only if observation $i$ belongs to pair $p$. This implies that for all $p$ $\overline{u}_{p}=0$. Equation \eqref{eq:fe_devmean2} then becomes a regression with one covariate and the same residuals as in Equation \eqref{eq:regfe}:
\begin{align}
        &Y_{igp} - \overline{Y}_{p}= \widehat{\tau}_{fe} (W_{gp}-\overline{W}_{p}) + u_{igp}. \label{eq:fe_devmean4}
\end{align}
Now, it follows from Equations \eqref{eq:VEs_p1} and \eqref{eq:fe_devmean4} that\footnote{The clustered variance estimators of $\widehat{\tau}_{fe}$ in the regression residualized from the pair fixed effects in Equation \eqref{eq:fe_devmean4} and in the regression with pair fixed effects in Equation \eqref{eq:regfe} are equal \citep{cameron2015practitioner}.}
    \begin{gather}
        \widehat{\mathbb{V}}_{pair}(\widehat{\tau}_{fe}) =
        \frac{\left[\sum_{p=1}^P \left(\sum_{g=1}^2 \sum_{i=1}^{n_{gp}} u_{igp}(W_{gp}-\overline{W}_{p}) \right)^2\right]}{\left(\sum_{p=1}^P \sum_{g=1}^2 \sum_{i=1}^{n_{gp}} (W_{gp}-\overline{W}_{p})^2  \right)^2}.
            \label{eq:vhat_pair_fe}
    \end{gather}
    The denominator of $\widehat{\mathbb{V}}_{pair}(\widehat{\tau}_{fe})$ equals
    \begin{align}
            \left[\sum_p\sum_g \sum_i (W_{gp}-\overline{W}_{p})^2\right]^2 & = \left[\sum_p\sum_g (W_{gp}-\overline{W}_{p})^2 n_{gp}\right]^2
            \notag
            \\
            & = \left[\sum_p[T_p(1-\overline{W}_p)^2+C_p\overline{W}_p^2]\right]^2
            \notag
            \\
            & = \left[\sum_p\left(T_p\frac{C_p^2}{n_p^2}+C_p\frac{T_p^2}{n_p^2}\right)\right]^2
            \notag
            \\
            & = \left[\sum_p \frac{T_pC_p}{n_p}\right]^2
            \notag
            \\
            & = \left[\sum_p (n_{1p}^{-1}+n_{2p}^{-1})^{-1}\right]^2.
            \label{eq:fe_den}
    \end{align}
    The numerator of $\widehat{\mathbb{V}}_{pair}(\widehat{\tau}_{fe})$ is equal to
    \begin{align}
        \sum_{p=1}^P \left(\sum_{g=1}^2 \sum_{i=1}^{n_{gp}} u_{igp}(W_{gp}-\overline{W}_{p}) \right)^2
            & = \sum_{p=1}^P \left(\sum_{g=1}^2 (W_{gp}-\overline{W}_{p})\sum_{i=1}^{n_{gp}} u_{igp} \right)^2
        \notag
        \\
        & = \sum_{p=1}^P \left(-\overline{W}_{p}(SET_{p,fe}+SEU_{p,fe})+SET_{p,fe}\right)^2
        \notag
        \\
        & = \sum_{p=1}^P \left(SET_{p,fe}\right)^2,
        \label{eq:fe_num}
    \end{align}
    where $SET_{p,fe}+SEU_{p,fe} =  \sum_{g=1}^2 \sum_{i=1}^{n_{gp}} u_{igp} = 0$ from Equation \eqref{eq:sum_res_fe}. Finally,
\begin{align}
	SET_{p,fe}
	& = \sum_{g,i} W_{gp}[Y_{igp}-\widehat{\gamma}_p-\widehat{\tau}_{fe}W_{gp}]
	\notag
	\\
	& = \sum_{g,i} W_{gp}Y_{igp}-(\widehat{\gamma}_p+\widehat{\tau}_{fe}) \sum_{g,i} W_{gp}
	\notag
	\\
	& = \sum_{g,i} W_{gp}Y_{igp}-(\overline{Y}_p-\widehat{\tau}_{fe}\overline{W}_p+\widehat{\tau}_{fe}) \overline{W}_pn_p
	\notag
	\\
	& = \sum_{g,i} W_{gp}Y_{igp}-\overline{W}_p\sum_{g,i}Y_{igp}-(1-\overline{W}_p)\widehat{\tau}_{fe}\overline{W}_pn_p
	\notag
	\\	
	& = \sum_{g,i} W_{gp}Y_{igp}-\overline{W}_p\left[\sum_{g,i}W_{gp}Y_{igp}+\sum_{g,i}(1-W_{gp})Y_{igp}\right]-(1-\overline{W}_p)\overline{W}_pn_p\widehat{\tau}_{fe}
	\notag
	\\	
	& = (1-\overline{W}_p)\sum_{g,i} W_{gp}Y_{igp}-\overline{W}_p\sum_{g,i}(1-W_{gp})Y_{igp}-(1-\overline{W}_p)\overline{W}_pn_p\widehat{\tau}_{fe}
	\notag	
	\\
	& = (1-\overline{W}_p)\overline{W}_pn_p\left(\frac{\sum_{g,i} W_{gp}Y_{igp}}{\overline{W}_pn_p}-\frac{\sum_{g,i}(1-W_{gp})Y_{igp}}{(1-\overline{W}_p)n_p}-\widehat{\tau}_{fe}\right)
	\notag	
	\\
	& = \frac{n_{1p}n_{2p}}{n_p^2}n_p\left(\frac{\sum_{g,i} W_{gp}Y_{igp}}{\sum_{g,i} W_{gp}}-\frac{\sum_{g,i}(1-W_{gp})Y_{igp}}{\sum_{g,i} (1-W_{gp})}-\widehat{\tau}_{fe}\right)
	\notag
	\\
	& = \frac{n_{1p}n_{2p}}{n_{1p}+n_{2p}}(\widehat{\tau}_p-\widehat{\tau}_{fe}).
	\label{eq:set_pfe_simple}
\end{align}
The first equality follows from the definition of $SET_{p,fe}$. The second equality follows from the definition of $u_{igp}$ in Equation \eqref{eq:regfe}. The third equality follows from the definition of $\overline{W}_p$ and Equations \eqref{eq:fe_devmean1} and \eqref{eq:sum_res_fe}. The ninth equality follows from the definition of $\widehat{\tau}_p$.

     Therefore, combining Equations \eqref{eq:vhat_pair_fe}, \eqref{eq:fe_den}, \eqref{eq:fe_num} and \eqref{eq:set_pfe_simple},
    \begin{align*}
        \widehat{\mathbb{V}}_{pair}(\widehat{\tau}_{fe})  &
         = \sum_{p=1}^P \omega_p^2 (\widehat{\tau}_p-\widehat{\tau})^2
    \end{align*}
\noindent \textbf{QED}.


\subsubsection*{Point \ref{le:CRVE_fe_p2}}~\\

Applying the definition of the UCVE from Equation \eqref{eq:VEs_p2} to the regression in Equation \eqref{eq:fe_devmean4},
    \begin{gather}
        \widehat{\mathbb{V}}_{unit}(\widehat{\tau}_{fe}) =
                \frac{\left[\sum_{p=1}^P \sum_{g=1}^2 \left(\sum_{i=1}^{n_{gp}} u_{igp}(W_{gp}-\overline{W}_{p}) \right)^2\right]}{\left(\sum_{p=1}^P \sum_{g=1}^2 \sum_{i=1}^{n_{gp}} (W_{gp}-\overline{W}_{p})^2  \right)^2}.
                \label{eq:GCRVE_fe1}
    \end{gather}
    The numerator of $\widehat{\mathbb{V}}_{unit}(\widehat{\tau}_{fe})$  equals
    \begin{align}
        \sum_{p=1}^P \sum_{g=1}^2 \left(\sum_{i=1}^{n_{gp}} u_{igp}(W_{gp}-\overline{W}_{p}) \right)^2
        & = \sum_{p=1}^P \sum_{g=1}^2 (W_{gp}-\overline{W}_{p})^2 \left(\sum_{i=1}^{n_{gp}} u_{igp} \right)^2
        \notag
        \\
        & = \sum_{p=1}^P \left( (1-\overline{W}_{p})^2 SET_{p,fe}^2 + \overline{W}_{p}^2 SEU_{p,fe}^2 \right)
        \notag
        \\
        & = \sum_{p=1}^P SET_{p,fe}^2\left( \frac{C_p^2}{n_p^2}  + \frac{T_p^2}{n_p^2} \right)
        \notag
        \\
        & = \sum_{p=1}^P \frac{C_p^2T_p^2}{n_p^2} SET_{p,fe}^2 \left(  \frac{1}{T_p^2} + \frac{1}{C_p^2} \right)
        \notag
        \\
        & = \sum_{p=1}^P (n_{1p}^{-1}+n_{2p}^{-1})^{-2} SET_{p,fe}^2 \left(\frac{1}{n_{1p}^2} + \frac{1}{n_{2p}^2} \right).
        \label{eq:GCRVE_fe2}
    \end{align}
    The second equality follows from the definitions of $SET_{p,fe}$ and $SEU_{p,fe}$. The third equality follows from Equation \eqref{eq:sum_res_fe}, i.e., $SET_{p,fe} + SEU_{p,fe} = \sum_g \sum_i u_{igp} = 0$, for all $p$, so $SET_{p,fe}^2 = SEU_{p,fe}^2$, and the definitions of $T_p$ and $C_p$. Finally, combining Equations \eqref{eq:fe_den}, \eqref{eq:set_pfe_simple}, \eqref{eq:GCRVE_fe1} and \eqref{eq:GCRVE_fe2},
    \begin{align*}
        \widehat{\mathbb{V}}_{unit}(\widehat{\tau}_{fe})
            &= \sum_{p=1}^P \omega_p^2 (\widehat{\tau}_p-\widehat{\tau})^2\left( \left(\frac{n_{1p}}{n_p}\right)^2 + \left(\frac{n_{2p}}{n_p}\right)^2 \right).
    \end{align*}

\noindent \textbf{QED}.



\subsection{Proof of Lemma \ref{le:4.1_pop}}

    \subsubsection*{Point \ref{le:pop_p1}}~\\

    \begin{align}
    	\widehat{\mathbb{V}}_{pop}(\widehat{\tau}) & = \frac{1}{P^2} \sum_{r=1}^R (\widehat{\tau}_{1r}-\widehat{\tau}_{2r})^2,
    	\notag
    	\\
    	& = \frac{1}{P^2} \sum_{r=1}^R (\widehat{\tau}_{1r}^2+\widehat{\tau}_{2r}^2-2\widehat{\tau}_{1r}\widehat{\tau}_{2r}).
    	\notag
    	\intertext{Taking expected value,}
    	\EX[\widehat{\mathbb{V}}_{pop}(\widehat{\tau})] & =  \frac{1}{P^2} \sum_{r=1}^R \EX(\widehat{\tau}_{1r}^2+\widehat{\tau}_{2r}^2-2\widehat{\tau}_{1r}\widehat{\tau}_{2r}),
    	\notag
    	\\
    	& = \frac{1}{P^2} \sum_{r=1}^R (\mathbb{V}(\widehat{\tau}_{1r})+\mathbb{V}(\widehat{\tau}_{2r})+\tau_{1r}^2+\tau_{2r}^2-2\tau_{1r}\tau_{2r}),
    	\notag
    	\\
    	& = \frac{1}{P^2} \sum_{p=1}^P \mathbb{V}(\widehat{\tau}_{p}) + \frac{1}{P^2} \sum_{r=1}^R (\tau_{1r}-\tau_{2r})^2,
    	\notag
    	\\
    	& = \mathbb{V}(\widehat{\tau}) +  \frac{1}{P^2} \sum_{r=1}^R (\tau_{1r}-\tau_{2r})^2.
    	\label{eq:exp_varpop}
    \end{align}
The second equality follows from properties of the variance and that $\EX[\widehat{\tau}_{1r}]=\tau_{1r}$ and $\EX[\widehat{\tau}_{2r}]=\tau_{2r}$. The third equality follows from $P=2R$. The fourth equality follows from Equation \eqref{eq:var_hat_tau}.
\noindent \textbf{QED}.

    \subsubsection*{Point \ref{le:brs_p1}}~\\

\begin{align*}
    \widehat{\mathbb{V}}_{brs}(\widehat{\tau}) & = \frac{1}{P^2} \sum_p \widehat{\tau}^2_p - \frac{1}{2} \left(\frac{2}{P^2}\sum_r\widehat{\tau}_{1r}\widehat{\tau}_{2r}+\frac{\widehat{\tau}^2}{P}\right).
    \\
    & = \frac{1}{2P^2} \sum_p (\widehat{\tau}_p-\widehat{\tau})^2 +  \frac{1}{2P^2}\sum_r(\widehat{\tau}_{1r}^2+\widehat{\tau}_{2r}^2 -2\widehat{\tau}_{1r}\widehat{\tau}_{2r}).
    \\
    & = \frac{1}{2}\widehat{\mathbb{V}}_{pair}(\widehat{\tau})+\frac{1}{2}\widehat{\mathbb{V}}_{pop}(\widehat{\tau}).
\end{align*}
\noindent \textbf{QED}.

    \subsubsection*{Point \ref{le:pcve_versus_brs}}~\\

	\begin{gather*}
		\EX[\widehat{\mathbb{V}}_{pop}(\widehat{\tau})]  \leq \EX\left[\frac{P}{P-1}\widehat{\mathbb{V}}_{pair}(\widehat{\tau})\right],
		\\ \Leftrightarrow
		 (2R-1)\sum_{r=1}^R (\tau_{1r}-\tau_{2r})^2 \leq 2R\sum_{p=1}^P (\tau_p - \tau)^2,
		\\ \Leftrightarrow		
		(2R-1)\sum_{r=1}^R (\tau_{1r}^2+\tau_{2r}^2-2\tau_{1r}\tau_{2r}) \leq 2R\sum_{r=1}^R [\tau_{1r}^2 - 2\tau_{1r}\tau+\tau^2+\tau_{2r}^2- 2\tau_{2r}\tau+\tau^2],
		\\ \Leftrightarrow
		0\leq \sum_{r=1}^R (\tau_{1r}-\tau_{2r})^2 + 2R \sum_{r=1}^R [2\tau_{1r}\tau_{2r}- 2(\tau_{1r}+\tau_{2r})\tau+2\tau^2],	
		\\ \Leftrightarrow
		0\leq \sum_{r=1}^R (\tau_{1r}-\tau_{2r})^2 + 4R\sum_{r=1}^R (\tau_{1r}-\tau)(\tau_{2r}-\tau).
	\intertext{The second inequality follows from Points \ref{le:4.1.1} and \ref{le:pop_p1} of this lemma. Let $\tau_{\cdot r} = \frac{1}{2}(\tau_{1r}+\tau_{2r})$. Then,}	
		\EX[\widehat{\mathbb{V}}_{pop}(\widehat{\tau})]  \leq \EX\left[\frac{P}{P-1}\widehat{\mathbb{V}}_{pair}(\widehat{\tau})\right],
		\\
		\Leftrightarrow 0\leq \sum_{r=1}^R \sum_{p=1,2} 2 (\tau_{pr}-\tau_{\cdot r})^2 + 4R\sum_{r=1}^R (\tau_{1r}-\tau_{\cdot r}+\tau_{\cdot r}-\tau)(\tau_{2r}-\tau_{\cdot r}+\tau_{\cdot r}-\tau),
		\\
		\Leftrightarrow 0\leq \sum_{r=1}^R \sum_{p=1,2}\frac{1}{2} (\tau_{pr}-\tau_{\cdot r})^2 + R\sum_{r=1}^R [(\tau_{1r}-\tau_{\cdot r})(\tau_{2r}-\tau_{\cdot r})+(\tau_{\cdot r}-\tau)^2],
		\\
		\Leftrightarrow 0\leq \sum_{r=1}^R \sum_{p=1,2}\frac{1}{2} (\tau_{pr}-\tau_{\cdot r})^2 + R\sum_{r=1}^R \left[-\sum_{p=1,2}\frac{1}{2}(\tau_{pr}-\tau_{\cdot r})^2+(\tau_{\cdot r}-\tau)^2\right],
		\\
		\Leftrightarrow \frac{1}{R}\sum_{r=1}^R \sum_{p=1,2}\frac{1}{2} (\tau_{pr}-\tau_{\cdot r})^2 \leq \frac{1}{R-1}\sum_{r=1}^R (\tau_{\cdot r}-\tau)^2.
		\end{gather*}
This proves inequality a).

Then, if $\frac{1}{R}\sum_{r=1}^R \sum_{p=1,2}\frac{1}{2} (\tau_{pr}+\tau_{\cdot r})^2 \leq \frac{1}{R-1}\sum_{r=1}^R (\tau_{\cdot r}-\tau)^2$, it follows from Point \ref{le:brs_p1} of the lemma and the previous display that
\begin{align*}
\EX\left[\widehat{\mathbb{V}}_{pop}(\widehat{\tau})\right] \leq&\frac{1}{2}\EX\left[\widehat{\mathbb{V}}_{pop}(\widehat{\tau})\right] +\frac{1}{2}\EX\left[\frac{P}{P-1}\widehat{\mathbb{V}}_{pair}(\widehat{\tau})\right]\\
\leq&
\frac{1}{2}\EX\left[\frac{P}{P-1}\widehat{\mathbb{V}}_{pop}(\widehat{\tau})\right] +\frac{1}{2}\EX\left[\frac{P}{P-1}\widehat{\mathbb{V}}_{pair}(\widehat{\tau})\right]\\
=&\EX\left[\frac{P}{P-1}\widehat{\mathbb{V}}_{brs}(\widehat{\tau})\right],
\end{align*}
which proves inequality b).

Similarly,
if $\frac{1}{R}\sum_{r=1}^R \sum_{p=1,2}\frac{1}{2} (\tau_{pr}+\tau_{\cdot r})^2 \leq \frac{1}{R-1}\sum_{r=1}^R (\tau_{\cdot r}-\tau)^2$, it follows from Point \ref{le:brs_p1} of the lemma and the previous display that
\begin{align*}
\EX\left[\widehat{\mathbb{V}}_{brs}(\widehat{\tau})\right] \leq&\frac{1}{2}\EX\left[\widehat{\mathbb{V}}_{pop}(\widehat{\tau})\right] +\frac{1}{2}\EX\left[\frac{P}{P-1}\widehat{\mathbb{V}}_{pair}(\widehat{\tau})\right]\\
\leq&
\frac{1}{2}\EX\left[\frac{P}{P-1}\widehat{\mathbb{V}}_{pair}(\widehat{\tau})\right] +\frac{1}{2}\EX\left[\frac{P}{P-1}\widehat{\mathbb{V}}_{pair}(\widehat{\tau})\right]\\
=&\EX\left[\frac{P}{P-1}\widehat{\mathbb{V}}_{pair}(\widehat{\tau})\right],
\end{align*}
which proves inequality c).

\noindent \textbf{QED}.

\subsection{Proof of Theorem \ref{th:asym_app}}

     \subsubsection*{Point \ref{th:asym_p2prime}}~\\

\begin{align}
    P\widehat{\mathbb{V}}_{pop}(\widehat{\tau}) - P \mathbb{V}(\widehat{\tau})
    & = \frac{1}{P} \sum_{r=1}^R [\widehat{\tau}_{1r}^2-2\widehat{\tau}_{1r}\widehat{\tau}_{2r}+\widehat{\tau}_{2r}^2] - \frac{1}{P}\sum_{p=1}^P  \mathbb{V}(\widehat{\tau}_p)
    \notag
    \\
    & =  \frac{1}{P} \sum_{p=1}^P \widehat{\tau}_p^2 -\frac{2}{P} \sum_{r=1}^R \widehat{\tau}_{1r}\widehat{\tau}_{2r} - \frac{1}{P}\sum_{p=1}^P  [\EX(\widehat{\tau}_p^2) -  \tau_p^2]
    \notag
    \\
    & =  \sum_{p=1}^P \frac{\widehat{\tau}_p^2-\EX[\widehat{\tau}_p^2]}{P} -\frac{1}{R} \sum_{r=1}^R \widehat{\tau}_{1r}\widehat{\tau}_{2r} + \frac{1}{P}\sum_{r=1}^R  (\tau_{1r}^2+\tau_{2r}^2)
    \notag
    \\
    & \xrightarrow{\mathbb{P}}\underset{P\rightarrow+\infty}{\lim}\frac{1}{P} \sum_{r=1}^R ({\tau}_{1r}-{\tau}_{2r})^2.
    \label{eq:plim_vpop}
\end{align}
The second equality follows from the properties of the variance.  As $P \rightarrow +\infty$, by Lemma \ref{le:plim_tilde_tau_q}, $\sum_{p=1}^P \frac{\widehat{\tau}_p^2-\EX[\widehat{\tau}_p^2]}{P} ~{\overset{\mathbb{P}}{\longrightarrow}}~ 0$.
Likewise,  as $R=P/2\rightarrow +\infty$, by Lemma 1 in \cite{liu1988bootstrap}, $\sum_{r=1}^R \widehat{\tau}_{1r}\widehat{\tau}_{2r}/R-\sum_{r=1}^R \tau_{1r}\tau_{2r}/R ~{\overset{\mathbb{P}}{\longrightarrow}}~ 0$, because $\EX[|\widehat{\tau}_{1r}\widehat{\tau}_{2r}|^{1+\epsilon/2}]$ is uniformly bounded in $r$ by Equation \eqref{eq:cond_tau_p} and the Cauchy-Schwarz inequality, $(\widehat{\tau}_{1r}\widehat{\tau}_{2r})_{r=1}^{+\infty}$ is a sequence of independent random variables by Point \ref{asm:1_p3} of Assumption \ref{asm:1}, and $\EX(\widehat{\tau}_{1r}\widehat{\tau}_{2r})=\EX(\widehat{\tau}_{1r})\EX(\widehat{\tau}_{2r})=\tau_{1r}\tau_{2r}.$ Finally, the convergence arrow follows from Point \ref{asm:1b_p2} of Assumption \ref{asm:1b} and some algebra.

The result follows from Equations \eqref{eq:as_normality} and \eqref{eq:plim_vpop} and a reasoning similar to that used to prove Equation \eqref{eq:as_norm_pair}.

\noindent \textbf{QED}.

	\subsubsection*{Point \ref{th:asym_p2primeprime}}~\\
	
	\begin{align}	
		P\widehat{\mathbb{V}}_{bsr}(\widehat{\tau}) - P\mathbb{V}(\widehat{\tau}) & = 	\frac{1}{2}P(\widehat{\mathbb{V}}_{pair}(\widehat{\tau}) - \mathbb{V}(\widehat{\tau}))
		+ \frac{1}{2}P(\widehat{\mathbb{V}}_{pop}(\widehat{\tau}) - \mathbb{V}(\widehat{\tau}))
		\notag\\
		& \xrightarrow{\mathbb{P}} \frac{1}{2} \underset{P\rightarrow+\infty}{\lim}\frac{1}{P}\sum_{p=1}^P (\tau_p-\tau)^2
        +
        \frac{1}{2} \underset{P\rightarrow+\infty}{\lim}\frac{1}{P} \sum_{r=1}^R ({\tau}_{1r}-{\tau}_{2r})^2.
        \notag
	\end{align}
    The first equality follows from Point \ref{le:brs_p1} of Lemma \ref{le:4.1}. The convergence arrow follows from Equations \eqref{eq:plim_vpair2} and \eqref{eq:plim_vpop}. The result follows from the previous display, Equation \eqref{eq:as_normality}, and a reasoning similar to that used to prove Equation \eqref{eq:as_norm_pair}.

\noindent \textbf{QED}.

	
	    \subsubsection*{Point \ref{th:asym_p6.a}}~\\

	\begin{gather*}
		\sigma_{pair}^2  \leq \sigma_{pop}^2,
		\\ \Leftrightarrow
		\lim_{P\rightarrow +\infty} \frac{1}{R} \sum_{r=1}^R (\tau_{1r}-\tau_{2r})^2 \leq \lim_{P\rightarrow +\infty} \frac{1}{R} \sum_{p=1}^P (\tau_p - \tau)^2,
		\\ \Leftrightarrow		
		\lim_{P\rightarrow +\infty} \frac{1}{R} \sum_{r=1}^R (\tau_{1r}^2+\tau_{2r}^2-2\tau_{1r}\tau_{2r}) \leq \lim_{P\rightarrow +\infty} \frac{1}{R} \sum_{r=1}^R [\tau_{1r}^2 +\tau_{2r}^2- 2(\tau_{1r}+\tau_{2r})\tau+2\tau^2],
		\\ \Leftrightarrow
		0\leq \lim_{P\rightarrow +\infty} \frac{1}{R} \sum_{r=1}^R [2\tau_{1r}\tau_{2r}- 2(\tau_{1r}+\tau_{2r})\tau+2\tau^2],	
		\\ \Leftrightarrow
		0\leq \lim_{P\rightarrow +\infty} \frac{1}{R} \sum_{r=1}^R (\tau_{1r}-\tau)(\tau_{2r}-\tau).
	\end{gather*}
Then, 	
    $\sigma_{pair}^2  \leq \sigma_{bsr}^2\leq \sigma_{pop}^2 \Leftrightarrow \sigma_{pair}^2  \leq \sigma_{pop}^2$.

Point \ref{th:asym_p6.b} is straightforward so we do not prove it.

\noindent \textbf{QED}.
	

\subsection{Proof of Lemma \ref{le:CRVE_nfe_ext}}

Let $e_{igp}$ be the residual from the weighted least squares regression.
One has
\begin{align*}
Y_{igp}= \widetilde{\alpha} + \widetilde{\tau}W_{gp}+e_{igp}.
\end{align*}
Let $\widetilde{Y}=\frac{1}{n} \sum_{i,g,p} V_{gp}Y_{igp}$.
The previous display implies that
\begin{gather*}
    \widetilde{Y}=
    \widetilde{\alpha} \sum_{i,g,p} \frac{V_{gp}}{n} + \widetilde{\tau}\frac{1}{n} \sum_{i,g,p} V_{gp}W_{gp}+\frac{1}{n} \sum_{i,g,p} V_{gp}e_{igp}
    \nonumber\\
    = 2\widetilde{\alpha} + \widetilde{\tau},
\end{gather*}
where the second equality follows from $\frac{1}{n} \sum_{i,g,p} V_{gp}e_{igp}=0$, by the first-order condition attached to $\widetilde{\alpha}$ in the weighted OLS minimization problem.
Then, combining the two preceding displays implies that
\begin{gather}
    Y_{igp} - \frac{1}{2}\widetilde{Y} = \widetilde{\tau}\left(W_{gp} - \frac{1}{2}\right) + e_{igp}.
    \label{eq:wls}
\end{gather}

The next step is to compute the clustered variance estimators for the weighted least squares estimator. To do so, we apply Equation (15) in \cite{cameron2015practitioner} to the residuals and covariates of the regression defined by Equation \eqref{eq:wls}. This equation implies that
\begin{align}
    \widehat{\mathbb{V}}_{pair}(\widetilde{\tau}) & = \frac{\sum_p \left[\sum_g V_{gp}(W_{gp}-\frac{1}{2})\sum_i e_{igp}\right]^2}{\left[\sum_p \sum_g \sum_i V_{gp}(W_{gp}-\frac{1}{2})^2\right]^2}.
    \label{eq:pcve_vhat_tilde_tau}
\end{align}

Let $\widehat{Y}_{igp}=\widetilde{\alpha}+W_{gp}\widetilde{\tau}$, $\widehat{Y}(0)=\widetilde{\alpha}$, and $\widehat{Y}(1)=\widetilde{\alpha}+\widetilde{\tau}$. Note that
\begin{align}
    \sum_{i,g} W_{gp}\frac{e_{igp} }{n_{gp}}
    & = \sum_{i,g} W_{gp}(Y_{igp} - \widehat{Y}_{igp}) / n_{gp}
    \notag
    \\
    & = \sum_g W_{gp}\overline{y}_{gp}(1) - \widehat{Y}(1)\sum_g W_{gp}
    \notag
    \\
    & =  \widehat{Y}_p(1) - \sum_{p'}\frac{n_{p'}}{n}\widehat{Y}_{p'}(1)
    \label{eq:tilde_y_1}
\end{align}
The second equality follows from $W_{gp}Y_{igp} = W_{gp}Y_{igp}(1)$, the definition of $\overline{y}_{gp}(1)$ and $W_{gp}\widehat{Y}_{igp} = W_{gp}\widehat{Y}(1)$. The third equality follows from the definition of $\widehat{Y}_p(1)$, Point 2 of Assumption 1, and the definition of $\widehat{Y}(1)$.

Likewise,
\begin{align}
    \sum_{i,g} (1-W_{gp})\frac{e_{igp} }{n_{gp}} & = \widehat{Y}_p(0) - \sum_{p'}\frac{n_{p'}}{n}\widehat{Y}_{p'}(0)
    \label{eq:tilde_y_0}
\end{align}

The numerator of $\widehat{\mathbb{V}}_{pair}(\widetilde{\tau})$ equals
\begin{align}
    \sum_p \left[\sum_g V_{gp}\left(W_{gp}-\frac{1}{2}\right)\sum_i e_{igp}\right]^2
    & = \sum_p \left[\sum_g n_{p}\left(W_{gp}-\frac{1}{2}\right)(W_{gp}+1-W_{gp})\sum_i \frac{e_{igp}}{n_{gp}}\right]^2
    \notag
    \\
    & = \sum_p n_p^2\left[\left(1-\frac{1}{2}\right)\sum_{i,g} W_{gp} \frac{e_{igp} }{n_{gp}} -\frac{1}{2}\sum_{i,g} (1-W_{gp}) \frac{e_{igp} }{n_{gp}}\right]^2
    \notag
    \\
    & = \sum_p  \frac{n_p^2}{4} \left[\widehat{Y}_p(1) - \sum_{p'}\frac{n_{p'}}{n}\widehat{Y}_{p'}(1)  - \widehat{Y}_p(0) + \sum_{p'}\frac{n_{p'}}{n}\widehat{Y}_{p'}(0) \right]^2
    \notag
    \\
    & = \sum_p  \frac{n_p^2}{4} \left[\widehat{\tau}_p - \widetilde{\tau}\right]^2.
    \label{eq:num_pcve_tilde_tau}
\end{align}
The second equality follows from the fact that $W_{gp}-\frac{1}{2}=1-\frac{1}{2}$ for the treated units and $W_{gp}-\frac{1}{2}=-\frac{1}{2}$ for the untreated units. The third equality follows from Equations \eqref{eq:tilde_y_1} and \eqref{eq:tilde_y_0}.

The denominator of $\widehat{\mathbb{V}}_{pair}(\widetilde{\tau})$ equals
\begin{align}
    \left[\sum_p \sum_g \sum_i V_{gp}\left(W_{gp}-\frac{1}{2}\right)^2\right]^2
    & = \left[2n \frac{1}{4} \right]^2
    \notag
    \\
    & =  \frac{n^2}{4}.
    \label{eq:den_pcve_tilde_tau}
\end{align}

Then, combining Equations \eqref{eq:pcve_vhat_tilde_tau}, \eqref{eq:num_pcve_tilde_tau} and \eqref{eq:den_pcve_tilde_tau},
\begin{align}
    \widehat{\mathbb{V}}_{pair}(\widetilde{\tau}) & = \sum_p  \frac{n_p^2}{n^2} \left[\widehat{\tau}_p - \widetilde{\tau}\right]^2 =\frac{1}{P^2}\sum_p  \frac{n_p^2}{\overline{n}^2} \left[\widehat{\tau}_p - \widetilde{\tau}\right]^2.
    \label{eq:var_hat_tilde_tau}
\end{align}
\textbf{QED.}

\subsection{Proof of Theorem \ref{th:asym_ext}}

It follows from Lemma \ref{le:plim_tilde_tau_q} that
\begin{equation}
    \widetilde{\tau}-\tau^* = \frac{1}{P}\sum_p \frac{n_p}{\overline{n}}(\widehat{\tau}_p-\EX[\widehat{\tau}_p]) \xrightarrow[]{\mathbb{P}} 0,
    \label{eq:hat_tilde_plim}
\end{equation}
and
\begin{gather}
    \frac{1}{P}\sum_p \left(\frac{n_p}{\overline{n}}\right)^2 [\widehat{\tau}_p^2 - \EX(\widehat{\tau}_p^2)] \xrightarrow[]{\mathbb{P}} 0.
    \label{eq:plim_tilde_tau_squared}
\end{gather}
By a similar argument to the one used in the proof of Lemma \ref{le:plim_tilde_tau_q}, one can also show that
\begin{gather}
    \frac{1}{P}\sum_p \left(\frac{n_p}{\overline{n}}\right)^2 [\widehat{\tau}_p - \EX(\widehat{\tau}_p)] \xrightarrow[]{\mathbb{P}} 0.
    \label{eq:plim_tilde_tau_np_squared}
\end{gather}

We now use Point \ref{asm:3_ext_p3} of Assumption \ref{asm:3_ext} to derive the asymptotic distribution of $(\widetilde{\tau}-\tau^*)/(\widetilde{S}_P/P)$. As $\sum_{p=1}^P \EX\left[|\frac{n_p}{\overline{n}}|^{2+\epsilon}|\widehat{\tau}_p-\EX\left[\widehat{\tau}_p\right]|^{2+\epsilon}/{\widetilde{S}}_P^{2+\epsilon}\right] \rightarrow 0 $ for some $\epsilon>0$ (by Point \ref{asm:3_ext_p3} of Assumption \ref{asm:3_ext}), then, by the Lyapunov central limit theorem, $(\widetilde{\tau}-\tau^*)/(\widetilde{S}_P/P) = \sum_p \frac{n_p}{\overline{n}} (\widehat{\tau}_p-\EX[\widehat{\tau}_p])/\widetilde{S}_P  ~{\overset{d}{\longrightarrow}}~  \mathcal{N}(0,1)$ as $P\rightarrow +\infty$, as $\widetilde{S}_P^2 = P^2 \mathbb{V}(\widetilde{\tau}) = \sum_{p=1}^P \mathbb{V}\left(\frac{n_p}{\overline{n}}\widehat{\tau}_p\right)$.
\\
Therefore,
\begin{equation}\label{eq:as_normality2}
(\widetilde{\tau}-\tau^*)/\sqrt{\mathbb{V}(\widetilde{\tau})}   ~{\overset{d}{\longrightarrow}}~  \mathcal{N}(0,1).
\end{equation}

Then,
   \begin{align}
        &P\widehat{\mathbb{V}}_{pair}(\widetilde{\tau}) - P\mathbb{V}(\widetilde{\tau}) \notag\\
        & =  \frac{1}{P}\sum_p \left(\frac{n_p}{\overline{n}}\right)^2 \left(\widehat{\tau}_p-\widetilde{\tau}\right)^2 - \frac{1}{P}\sum_{p=1}^P \left(\frac{n_p}{\overline{n}}\right)^2 \mathbb{V}(\widehat{\tau}_p)
        \notag
        \\
        & =  \frac{1}{P}\sum_p \left(\frac{n_p}{\overline{n}}\right)^2 \left(\widehat{\tau}_p-\widetilde{\tau}\right)^2 - \frac{1}{P}\sum_{p=1}^P \left(\frac{n_p}{\overline{n}}\right)^2 [\EX(\widehat{\tau}_p^2)-\EX[\widehat{\tau}_p]^2]
        \notag
        \\
        & = \frac{1}{P}\sum_p \left(\frac{n_p}{\overline{n}}\right)^2 \left(\widehat{\tau}_p^2-2\widetilde{\tau}\widehat{\tau}_p+\widetilde{\tau}^2\right) - \frac{1}{P}\sum_{p=1}^P \left(\frac{n_p}{\overline{n}}\right)^2 [\EX(\widehat{\tau}_p^2)-\EX[\widehat{\tau}_p]^2]
        \notag
        \\
        & = \frac{1}{P}\sum_p \left(\frac{n_p}{\overline{n}}\right)^2 (\widehat{\tau}_p^2-\EX[\widehat{\tau}_p^2])-2\widetilde{\tau}\frac{1}{P}\sum_p \left(\frac{n_p}{\overline{n}}\right)^2\widehat{\tau}_p+\widetilde{\tau}^2\frac{1}{P}\sum_p \left(\frac{n_p}{\overline{n}}\right)^2 + \frac{1}{P}\sum_{p=1}^P \left(\frac{n_p}{\overline{n}}\right)^2 \EX[\widehat{\tau}_p]^2
        \notag
        \\
        & ~{\overset{\mathbb{P}}{\longrightarrow}}~
         -2\tau^{\infty} \underset{P\rightarrow +\infty}{\lim}\frac{1}{P}\sum_p \left(\frac{n_p}{\overline{n}}\right)^2
         \EX[\widehat{\tau}_p]
         + \left(\tau^{\infty}\right)^2 \underset{P\rightarrow +\infty}{\lim}\frac{1}{P} \sum_p \left(\frac{n_p}{\overline{n}}\right)^2
         + \underset{P\rightarrow +\infty}{\lim} \frac{1}{P} \sum_p \left(\frac{n_p}{\overline{n}}\right)^2
         \EX[\widehat{\tau}_p]^2
        \notag
        \\
        & =
        \underset{P\rightarrow +\infty}{\lim} \frac{1}{P}\sum_p \left(\frac{n_p}{\overline{n}}\right)^2
         \left[
         \EX[\widehat{\tau}_p]^2
         -2\tau^{\infty}\EX[\widehat{\tau}_p]
         +\left(\tau^{\infty}\right)^2
         \right]
        \notag
        \\
        & = \underset{P\rightarrow +\infty}{\lim}\frac{1}{P}\sum_{p=1}^P \left(\frac{n_p}{\overline{n}}\right)^2 [\EX[\widehat{\tau}_p]-\tau^{\infty}]^2.
        \label{eq:plim_vpair_tilde}
    \end{align}
The first equality follows from Equation \eqref{eq:var_hat_tilde_tau} and the fact that the $(\widehat{\tau}_p)_{p=1}^P$ are independent across $p$ by Point \ref{asm:1_p3} of Assumption \ref{asm:1}. The second equality follows from the definition of variance. The convergence in probability follows from Equations \eqref{eq:hat_tilde_plim} and \eqref{eq:plim_tilde_tau_squared}, \eqref{eq:plim_tilde_tau_np_squared}, and Point \ref{asm:3_ext_p2} of Assumption \ref{asm:3_ext}.
\\
Then,
    \begin{align*}
            \frac{\widetilde{\tau}-\tau^*}{\sqrt{\widehat{\mathbb{V}}_{pair}(\widetilde{\tau})}}
            & = \frac{\widetilde{\tau}-\EX[\widetilde{\tau}]}{\sqrt{\mathbb{V}(\widetilde{\tau})}} \sqrt{\frac{P\mathbb{V}(\widetilde{\tau})}{P\widehat{\mathbb{V}}_{pair}(\widetilde{\tau})}}
            \\
            & ~{\overset{d}{\longrightarrow}}~  \mathcal{N}(0,\sigma_{wls}^2).
    \end{align*}
The convergence in distribution follows from Equation \eqref{eq:plim_vpair_tilde}, Equation \eqref{eq:as_normality2}, Lemma \ref{le:p_vartilde_tau_lim}, the Slutsky Lemma, and the CMT.
    \\
\textbf{QED}.


\subsection{Proof of Lemma \ref{le:exp_PCVE_fe}}

\begin{align*}
\EX\left[\sum_p \widetilde{\omega}_p^2(\widehat{\tau}_p-\widehat{\tau}_{fe})^2\right] & =
\sum_p \widetilde{\omega}_p^2\EX[(\widehat{\tau}_p-\widehat{\tau}_{fe})^2]
\\
& = \sum_p \widetilde{\omega}_p^2 \left[\mathbb{V}(\widehat{\tau}_p-\widehat{\tau}_{fe}) + [\EX(\widehat{\tau}_p-\widehat{\tau}_{fe})]^2 \right]
\\
& = \sum_p \widetilde{\omega}_p^2 \left[\mathbb{V}(\widehat{\tau}_p)+\mathbb{V}(\widehat{\tau}_{fe})-2\text{Cov}(\widehat{\tau}_p,\widehat{\tau}_{fe}) + [\EX(\widehat{\tau}_p-\widehat{\tau}_{fe})]^2 \right]
\\
& = \sum_p \widetilde{\omega}_p^2 \left[\mathbb{V}(\widehat{\tau}_p)+\mathbb{V}(\widehat{\tau}_{fe})-2\omega_p\mathbb{V}(\widehat{\tau}_p) + [\EX(\widehat{\tau}_p-\widehat{\tau}_{fe})]^2 \right]
\\
& = \sum_p \widetilde{\omega}_p^2\left[1-2\omega_p\right] \mathbb{V}(\widehat{\tau}_p) +\mathbb{V}(\widehat{\tau}_{fe})\sum_p \widetilde{\omega}_p^2+ \sum_p \widetilde{\omega}_p^2[\EX(\widehat{\tau}_p-\widehat{\tau}_{fe})]^2
\\
& = \sum_p \omega_p^2 \mathbb{V}(\widehat{\tau}_p) +\mathbb{V}(\widehat{\tau}_{fe})\sum_p \widetilde{\omega}_p^2+ \sum_p \widetilde{\omega}_p^2[\EX(\widehat{\tau}_p-\widehat{\tau}_{fe})]^2
\\
& = \mathbb{V}(\widehat{\tau}_{fe})\left(1+\textstyle\sum_p \widetilde{\omega}_p^2\right)+ \sum_p \widetilde{\omega}_p^2[\EX(\widehat{\tau}_p-\widehat{\tau}_{fe})]^2
\end{align*}
The first equality follows from the linearity of the expectation and the fact that the weights $\omega_p$ are not stochastic. The fourth equality follows from Point \ref{asm:1_p3} of Assumption \ref{asm:1}. The sixth equality follows from the definition of $\widetilde{\omega}_p$. The seventh equality follows from the definition of the variance, the definition of $\widehat{\tau}_{fe}$ and Point \ref{asm:1_p3} of Assumption \ref{asm:1}.

\noindent
\textbf{QED}.


\subsection{Auxiliary Lemmas to prove Theorems \ref{th:asym}, \ref{th:asym_app}, and \ref{th:asym_ext}}

\begin{lemma}
\leavevmode
\label{le:plim_tilde_tau_q}
Let $q\geq 1$, under Points \ref{asm:1_p2} and \ref{asm:1_p3} of Assumption \ref{asm:1}, and Assumption \ref{asm:bal_exp} or Point \ref{asm:1b_p1} of Assumption \ref{asm:1b},
\begin{gather}
    \frac{1}{P}\sum_p \left(\frac{n_p}{\overline{n}}\right)^q [\widehat{\tau}_p^q - \EX(\widehat{\tau}_p^q)] \xrightarrow[]{\mathbb{P}} 0
	\notag
\end{gather}
\end{lemma}

\noindent \textit{Proof}. Assumption \ref{asm:bal_exp} implies Point \ref{asm:1b_p1} of Assumption \ref{asm:1b}, so it is sufficient to show that the result holds under Points \ref{asm:1_p2} and \ref{asm:1_p3} of Assumption \ref{asm:1}, and Point \ref{asm:1b_p1} of Assumption \ref{asm:1b}.

Note that by Point \ref{asm:1_p3} of Assumption \ref{asm:1},  $((\frac{n_p}{\overline{n}}\widehat{\tau}_p)^q-\EX[(\frac{n_p}{\overline{n}}\widehat{\tau}_p)^q])_{p=1}^P$, $q\geq 1$, is a sequence of independent random variables with mean zero.
\\
Note that, for all $p$,
    \begin{align}
        \EX\left[\left|\frac{n_p}{\overline{n}}\widehat{\tau}_p\right|^{q+\epsilon}\right]^{1/(q+\epsilon)}
        & = \frac{n_p}{\overline{n}}\EX\left[\left|\widehat{Y}_p(1)-\widehat{Y}_p(0)\right|^{q+\epsilon}\right]^{1/(q+\epsilon)}
        \notag
        \\
        & \leq N\left(\left(\EX\left[\left|\widehat{Y}_p(1)\right|^{q+\epsilon}\right]\right)^{1/(q+\epsilon)}        + \left(\EX\left[\left|\widehat{Y}_p(0)\right|^{q+\epsilon}\right]\right)^{1/(q+\epsilon)}\right)
        \notag
        \\
        & = N\left(\left(\EX\left[\left|\sum_g W_{gp}\overline{y}_{gp}(1)\right|^{q+\epsilon}\right]\right)^{1/(q+\epsilon)}        + \left(\EX\left[\left|\sum_g (1-W_{gp})\overline{y}_{gp}(0)\right|^{q+\epsilon}\right]\right)^{1/(q+\epsilon)}\right)
        \notag
        \\
        & \leq N\left(\sum_g\left(\EX\left[\left| W_{gp}\overline{y}_{gp}(1)\right|^{q+\epsilon}\right]\right)^{1/(q+\epsilon)}        + \sum_g\left(\EX\left[\left| (1-W_{gp})\overline{y}_{gp}(0)\right|^{q+\epsilon}\right]\right)^{1/(q+\epsilon)}\right)
        \notag
        \\
        & = N\left(\sum_g\left(\EX[W_{gp}]\left| \overline{y}_{gp}(1)\right|^{q+\epsilon}\right)^{1/(q+\epsilon)}        + \sum_g\left(\EX[1-W_{gp}]\left| \overline{y}_{gp}(0)\right|^{q+\epsilon}\right)^{1/(q+\epsilon)}\right)
        \notag
        \\
        & = N\left(\sum_g\left(\frac{1}{2}\left| \overline{y}_{gp}(1)\right|^{q+\epsilon}\right)^{1/(q+\epsilon)}        + \sum_g\left(\frac{1}{2}\left| \overline{y}_{gp}(0)\right|^{q+\epsilon}\right)^{1/(q+\epsilon)}\right)
        \notag
        \\
        & < N\frac{4}{2^{1/(q+\epsilon)}} M<+\infty.
        \label{eq:cond_tau_p} 
    \end{align}
    The first equality follows from the definition of $\widehat{\tau}_p$. The first inequality follows from Minkowski's inequality, and from Point \ref{asm:3_ext_p1} of Assumption \ref{asm:3_ext}. The third line follows from the definitions of $\widehat{Y}_p(1)$ and $\widehat{Y}_p(0)$. The fourth line follows from Minkowski's inequality. The fifth line follows from $W_{gp}$ being a binary variable. The sixth line follows from Point \ref{asm:1_p2} of Assumption \ref{asm:1}. The seventh line follows from Point \ref{asm:1b_p1} of Assumption \ref{asm:1b}.

Using the LLN in Lemma 1 in \cite{liu1988bootstrap}, the previous facts and the fact that almost sure convergence implies convergence in probability, one concludes that
\begin{gather}
    \frac{1}{P}\sum_p \left(\frac{n_p}{\overline{n}}\right)^q [\widehat{\tau}_p^q - \EX(\widehat{\tau}_p^q)] \xrightarrow[]{\mathbb{P}} 0
    \label{eq:plim_tilde_tau_q}.
\end{gather}

\noindent \textbf{QED}.

\begin{lemma}
\label{le:p_vartilde_tau_lim}[Strictly positive limit for $P\mathbb{V}(\widetilde{\tau})$]
\leavevmode
Under Point \ref{asm:1b_p2} of Assumption \ref{asm:1b} and Point \ref{asm:3_ext_p1} of Assumption \ref{asm:3_ext},
$\lim_{P\rightarrow +\infty} P\mathbb{V}(\widetilde{\tau})>0$.
\end{lemma}
\noindent \textit{Proof}. Note that
\begin{align}
    \lim_{P\rightarrow +\infty} P\mathbb{V}(\widetilde{\tau}) & = \lim_{P\rightarrow +\infty} \frac{1}{P} \sum_p \left(\frac{n_p}{\overline{n}}\right)^2 \mathbb{V}(\widehat{\tau}_p)
    \notag
    \\
    & \geq \frac{1}{N^2} \lim_{P\rightarrow +\infty} \frac{1}{P} \sum_p  \mathbb{V}(\widehat{\tau}_p) \notag
    \\
    & = \frac{1}{N^2} \lim_{P\rightarrow +\infty} P \mathbb{V}(\widehat{\tau})
	\notag
    \\
    & > 0.
	\notag
\end{align}
The first equality follows from the definition of $\widetilde{\tau}$ and Point \ref{asm:1_p3} of Assumption \ref{asm:1}. The first inequality follows from the fact that $0< \frac{1}{N} \leq \frac{n_p}{\overline{n}} \leq N$ (which follows from Point \ref{asm:3_ext_p1} of Assumption \ref{asm:3_ext}). The second equality follows from the definition of $\mathbb{V}(\widehat{\tau}) $. The second inequality follows from Point \ref{asm:1b_p2} of Assumption \ref{asm:1b}.

\noindent \textbf{QED}.

\end{document}